\DeclareRobustCommand\widecheck[1]{{\mathpalette\@widecheck{#1}}}
\def\@widecheck#1#2{%
    \setbox\z@\hbox{\m@th$#1#2$}%
    \setbox\tw@\hbox{\m@th$#1%
       \widehat{%
          \vrule\@width\z@\@height\ht\z@
          \vrule\@height\z@\@width\wd\z@}$}%
    \dp\tw@-\ht\z@
    \@tempdima\ht\z@ \advance\@tempdima2\ht\tw@ \divide\@tempdima\thr@@
    \setbox\tw@\hbox{%
       \raise\@tempdima\hbox{\scalebox{1}[-1]{\lower\@tempdima\box
\tw@}}}%
    {\ooalign{\box\tw@ \cr \box\z@}}}
\newcolumntype{L}[1]{>{\raggedright\let\newline\\\arraybackslash\hspace{0pt}}m{#1}}
\newcolumntype{C}[1]{>{\centering\let\newline\\\arraybackslash\hspace{0pt}}m{#1}}
\newcolumntype{R}[1]{>{\raggedleft\let\newline\\\arraybackslash\hspace{0pt}}m{#1}}
\newcolumntype{H}{>{\setbox0=\hbox\bgroup}c<{\egroup}@{}}   
\numberwithin{equation}{section}   
\theoremstyle{plain}
\newtheorem{theorem}{Theorem}[section]
\newtheorem{lemma}[theorem]{Lemma}
\newtheorem{corollary}[theorem]{Corollary}
\newtheorem{proposition}[theorem]{Proposition}
\newtheorem{condition}{Condition}
\newtheorem{definition}{Definition}
\theoremstyle{definition}
\newtheorem{example}{Example}[section]
\newtheorem{remark}{Remark}[section]
\newcommand{\ignore}[1]{}
\definecolor{BUred}{rgb}{0.8, 0.0, 0.0}
\DeclareMathOperator{\pr}{\mathsf{P}}
\DeclareMathOperator{\E}{\mathsf{E}}
\DeclareMathOperator{\Var}{\mathsf{Var}}
\DeclareMathOperator{\Bias}{\mathsf{Bias}}
\DeclareMathOperator{\MSE}{\mathsf{MSE}}
\DeclareMathOperator{\Cov}{\mathsf{Cov}}
\DeclareMathOperator{\Corr}{\mathsf{Corr}}
\def\T{{ \mathrm{\scriptscriptstyle T} }}  
\DeclareMathOperator{\tr}{tr}
\DeclareMathOperator{\C}{c}
\DeclareMathOperator{\obs}{obs}
\DeclareMathOperator{\mis}{mis}
\DeclareMathOperator{\com}{com}
\DeclareMathOperator{\diag}{diag}
\DeclareMathOperator{\vech}{vech}
\newcommand{\Bern}{\textnormal{Bern}}
\newcommand{\Normal}{\textnormal{N}}
\newcommand{\Ga}{\textnormal{Ga}}
\newcommand{\Po}{\textnormal{Po}}
\DeclareMathOperator*{\sumPrime}{\sum{}^{\prime}}
\DeclareMathOperator{\expit}{expit}
\DeclareMathOperator{\lrt}{L}
\DeclareMathOperator{\wt}{W}
\DeclareMathOperator{\rt}{R}
\DeclareMathOperator{\Jack}{Jack}
\DeclareMathOperator{\Full}{Full}
\DeclareMathOperator{\Pair}{Pair}
\newcommand{\df}{q}
\newcommand{\inP}{  \overset{ \mathrm{pr}  }{\rightarrow} }
\newcommand{\inD}{\Rightarrow}
\newcommand{\simIID}{   \overset{ \text{iid}   }{\sim} }
\newcommand{\BIG}{\bBigg@{4}}
\newcommand{\BIGG}{\bBigg@{5}}
\begin{document}
\externaldocument[]{SMI_aos_supp}
\begin{frontmatter}
\title{General and Feasible Tests with Multiply-Imputed Datasets} 
\runtitle{Tests with Imputed Datasets}
\begin{aug}
\author[A]{\fnms{Kin Wai} \snm{Chan}\ead[label=e1]{kinwaichan@cuhk.edu.hk}}
\address[A]{Department of Statistics, The Chinese University of Hong Kong, \printead{e1}}
\end{aug}
\begin{abstract}
Multiple imputation (MI) is a technique especially designed for handling missing data
in public-use datasets. 
It allows analysts to perform incomplete-data inference straightforwardly
by using several already imputed datasets released by the dataset owners.
However, the existing MI tests 
require either 
a restrictive assumption on the missing-data mechanism, known as equal odds of missing information (EOMI), 
or an infinite number of imputations. 
Some of them also require analysts to have access to 
restrictive or non-standard computer subroutines. 
Besides, the existing MI testing procedures cover only
Wald's tests and likelihood ratio tests but not Rao's score tests, 
therefore, these MI testing procedures are not general enough. 
In addition, the MI Wald's tests and MI likelihood ratio tests are not procedurally identical, 
so analysts need to resort to distinct algorithms for implementation. 
In this paper, we propose a general MI procedure, called stacked multiple imputation (SMI),
for performing Wald's tests, likelihood ratio tests and Rao's score tests
by a unified algorithm. 
SMI requires neither EOMI nor an infinite number of imputations. 
It is particularly feasible for analysts as 
they just need to use a complete-data testing device  
for performing the corresponding incomplete-data test. 
\end{abstract}
\begin{keyword}[class=MSC2020]
\kwd[Primary ]{62D05} 
\kwd[; Secondary ]{62F03, 62E20}.  
\end{keyword}
\begin{keyword}
\kwd{fraction of missing information}
\kwd{hypothesis testing}
\kwd{jackknife}
\kwd{missing data}
\kwd{stacking}
\end{keyword}
\end{frontmatter}

\section{New thoughts on the old results}\label{sec:new_and_old}
\subsection{Introduction}
Missing data are usually encountered in real-data analysis,
both in observational and experimental studies.
Statistical inference of incomplete datasets is harder than that of complete datasets. 
Multiple imputation (MI), proposed by \citet*{rubin1978}, is one of the most popular ways of 
handling missing data. 
This method requires specifying an imputation model for filling in 
the missing data multiple times so that 
standard complete-data procedures can be straightforwardly applied to each of 
the imputed datasets; 
see Sections \ref{sec:bkgd} and \ref{sec:existingWork} for a review.  
Also see \citet{rubin1987}, \cite{CarpenterKenward2013}, and \cite{KimShao2013} for an introduction.
Although a fairly complete theory for performing MI tests is available, 
all existing results suffer from at least one of following three problems:
(i) reliance on strong statistical assumptions, 
(ii) requirement of infeasible computer subroutines, and 
(iii) lack of unified combining rules for various types of tests. 
We will discuss these problems thoroughly in Section \ref{sec:strengthWeak}.

The major goal of this paper is to derive a handy MI test 
that resolves the aforementioned problems.
This paper is structured as follows. 
In the remaining part of this section, 
the existing MI tests are reviewed. 
We also discuss their pros and cons.
In Section \ref{sec:plan}, we motivate the proposal test statistics and 
present the plan to achieve our proposed test. 
In section \ref{sec:method}, 
our proposed methodology and principle are discussed.
In Section \ref{sec:estimation}, 
a novel theory for 
estimating the odds of missing information is presented.  
In Section \ref{sec:testing}, 
a new test based on multiply-imputed datasets 
is derived.  
In Section \ref{sec:applications}, applications and simulation experiments are illustrated. 
In Section \ref{sec:conclusion}, we conclude the paper and discuss possible future work.
Proofs, auxiliary results, additional simulation results and 
an R-package \texttt{stackedMI} 
are included as supplementary materials.

\subsection{Background and problem setup}\label{sec:bkgd}
Let $X$ be a dataset in the form of a $n\times p$ matrix consisting of $n$ rows of independent units. 
Assume that $X$ is generated from a probability density $f(X\mid \psi)$, where 
$\psi \in \Psi$ is the model parameter.
The parameter of interest is 
a sub-vector of $\psi$ denoted by $\theta \in \Theta \subseteq \mathbb{R}^k$.
We are interested in testing $H_0: \theta=\theta_0$ against $H_1:\theta \neq \theta_0$
for some fixed $\theta_0 \in \Theta$.
If the complete dataset $X_{\com} := X$ is available, 
we may perform the Wald's test, likelihood ratio (LR) test, and Rao's score (RS) test.
The test statistics are 
\begin{eqnarray}
	\mathcal{d}_{\wt}(X)
		&:=& d_{\wt}(\widehat{\theta},\widehat{V}) 
		:= (\widehat{\theta}-\theta_0)^{\T}\widehat{V}^{-1}(\widehat{\theta}-\theta_0), \label{eqt:com_WT}\\
	\mathcal{d}_{\lrt}(X) 
		&:=& d_{\lrt}(\widehat{\psi}, \widehat{\psi}_0\mid X) 
		:= 2\log\{f(X\mid \widehat{\psi})/f(X\mid\widehat{\psi}_0)\},  \label{eqt:com_LRT} \\
	\mathcal{d}_{\rt}(X)
		&:=& d_{\rt}(\widehat{\psi}_0 \mid X) 
		:= u(\widehat{\psi}_0)^{\T} \{ I(\widehat{\psi}_0) \}^{-1} u(\widehat{\psi}_0 ), \label{eqt:com_RT}
\end{eqnarray}
respectively, 
where $\widehat{\theta} :=\widehat{\theta}(X)$ is the maximum likelihood estimator (MLE) of $\theta$; 
$\widehat{V} :=\widehat{V}(X)$ is a variance estimator of $\widehat{\theta}$; 
$\widehat{\psi} :=\widehat{\psi}(X)$ and $\widehat{\psi}_0:=\widehat{\psi}_0(X)$ are the unrestricted and $H_0$-restricted MLEs of $\psi$; 
$u(\psi) := u(\psi\mid X) := \partial \log f(X\mid \psi)/ \partial \psi$ is the score function;
and $I(\psi) := I(\psi\mid X) := - \partial^2 \log f(X\mid \psi)/ \partial \psi\partial \psi^{\T}$ 
is the Fisher's information.
See \citet{TSH_lehmannRomano} for more details.   
Throughout the paper, we use $\aleph\in\{\wt,\lrt,\rt\}$ 
to abbreviate the name of the test in various subscripts. 
In each of (\ref{eqt:com_WT})--(\ref{eqt:com_RT}), 
the mapping $X\mapsto\mathcal{d}_{\aleph}(X)$ is a function of the dataset $X$ only. 
We call such function $X\mapsto \mathcal{d}_{\aleph}(X)$ 
a standard testing \emph{device} (i.e., a testing \emph{subroutine} or a testing \emph{procedure}) in computer software.
The device $\mathcal{d}_{\aleph}(\cdot)$ is the only requirement for complete-data testing.  
Under standard regularity conditions (see, e.g., Section 4.4 of \cite{serfling1980}) and under $H_0$, 
we have, for any $\aleph\in\{\wt,\lrt, \rt\}$, that 
\begin{eqnarray}\label{eqt:comTestchisq}
	\mathcal{d}_{\aleph}(X)/k \;\inD\; \chi^2_k/k, 
	\qquad \text{as} \qquad
	n\rightarrow\infty,
\end{eqnarray}
where ``$\inD$'' denotes weak convergence; see Chapter 2 of \cite{vanDerVaart2000}.

If a part of $X_{\com} = \{X_{\obs}, X_{\mis}\}$ is missing such that 
only $X_{\obs}$ is available, 
testing $H_0$ is more involved. 
One widely used method is \emph{multiple imputation} (MI), which 
is a two-stage procedure.
\begin{itemize}
	\item The first stage involves an \emph{imputer} to handle the missing data.
			The imputer draws $X_{\mis}^{1}, \ldots, X_{\mis}^m$
			from the conditional distribution $[X_{\mis}\mid X_{\obs}]$
			so that the missing part $X_{\mis}$ can be filled in by $X_{\mis}^{1}, \ldots, X_{\mis}^m$
			to form $m$ completed datasets $X^{\ell} := \{X_{\obs},X_{\mis}^{\ell}\}$ ($\ell=1, \ldots,m$).
			Note that MI assumes the missing mechanism is ignorable \citep{rubin1976}.
			See Remark \ref{rem:imputer} for more discussion of this stage. 

	\item The second stage involves possibly many \emph{analysts} 
			to perform inference of their own interests. 
			Each of them receives the same completed datasets $X^{1}, \ldots, X^m$ from the imputer.
			Then, s/he can repeatedly apply some standard \emph{complete-data procedures} 
			to $X^{1}, \ldots, X^m$, and obtain $m$ preliminary results.
			The final result is obtained by appropriately combining them; 
			see Section  \ref{sec:existingWork} for a review.
\end{itemize}
MI is an attractive method because it naturally divides imputation and analysis tasks 
into two separate stages 
so that analysts do not need to be trained for handling incomplete datasets.
Indeed, it has been a very popular method in various fields; see, for example, 
\cite{tu1993JASA}, \cite{rubin1996multiple}, \cite{rubin1987}, \cite{Schafer1999}, \cite{KingHonakerJosephScheve2001}, \cite{PeughEnders2004}, \cite{KenwardCarpenter2007}, \cite{HarelZhou2007}, \cite{HortonKleinman2007}, \cite{RoseFraser2008}, \cite{HolanTothFerreiraKarr2010}, \cite{KimYang2017}, and \cite{YuChaneetal2021}.

\begin{remark}\label{rem:imputer}
MI is originally designed for handling public-use datasets. 
Hence, the imputers in stage 1 are in general different from the analysts in stage 2; 
see, for example, \cite{ParkerSchenker2007}.
Consequently, analysts cannot produce arbitrarily many imputed datasets as they wish. 
For example, 
only multiply-imputed datasets 
are released in National Health Interview Survey (NHIS) conducted in the United States; 
see \cite{SchenkerOther06}. 
In particular, only five imputed datasets are released in 2018 NHIS; see
\url{https://www.cdc.gov/nchs/nhis/nhis_2018_data_release.htm}.
Moreover, because the imputers usually belong to the organizations, 
for example, a census bureau of the government,  
who collect the data, 
they usually know better how the data are missing, 
and have auxiliary variables to impute the missing data.
So, the ignorability could be a reasonable assumption.  
See \citet{rubin1996multiple} for more comprehensive discussions of these matters.   
\end{remark}

\subsection{MI combining rules and reference null distribution}\label{sec:existingWork}
Applying the functions $\widehat{\theta}(\cdot)$, $\widehat{V}(\cdot)$, $\widehat{\psi}(\cdot)$ and $\widehat{\psi}_0(\cdot)$ to each of the imputed datasets $X^1, \ldots, X^m$, 
an analyst obtains 
$\widehat{\theta}^{\ell}:= \widehat{\theta}(X^{\ell})$, 
$\widehat{V}^{\ell}:=\widehat{V}(X^{\ell})$, 
$\widehat{\psi}^{\ell}:=\widehat{\psi}(X^{\ell})$ and  
$\widehat{\psi}_0^{\ell}:=\widehat{\psi}_0(X^{\ell})$ 
for $\ell=1,\ldots,m$. Define 
\begin{align}
	\widetilde{d}'_{\wt} &:= \frac{1}{m}\sum_{\ell=1}^m d_{\wt}(\widehat{\theta}^{\ell}, \bar{V}), &
	\widetilde{d}''_{\wt} &:= d_{\wt}(\bar{\theta}, \bar{V}), & \label{eqt:dp_dpp_w_l} \\
	\widetilde{d}'_{\lrt} &:= \frac{1}{m}{\sum_{\ell=1}^m d_{\lrt}(\widehat{\psi}^{\ell}, \widehat{\psi}^{\ell}_0 \mid X^{\ell})}, &
	\widetilde{d}''_{\lrt} &:= \frac{1}{m}{\sum_{\ell=1}^m d_{\lrt}(\bar{\psi}, \bar{\psi}_0 \mid X^{\ell})},  &\label{eqt:dp_dpp_w_l_LER}
\end{align}
where $\bar{\theta} := \sum_{\ell=1}^m \widehat{\theta}^{\ell}/m$,
and $\bar{V}$, $\bar{\psi}$, $\bar{\psi}_0$ are similarly defined. 
The MI Wald's statistic \citep{li91JASA} and MI LR statistic \citep{mengRubin92} are
$\widetilde{D}_{\wt}$ and $\widetilde{D}_{\lrt}$, respectively, where, 
for $\aleph\in\{\wt, \lrt\}$,
\begin{eqnarray}\label{eqt:Da}
	\qquad
	\widetilde{D}_{\aleph} := \frac{\widetilde{d}''_{\aleph}}{k\left\{ 1+\left(1+\frac{1}{m}\right)\widetilde{\mu}_{r,\aleph}\right\}} \qquad \text{and} \qquad
	\widetilde{\mu}_{r,\aleph} := \frac{\widetilde{d}'_{\aleph}-\widetilde{d}''_{\aleph}}{k(m-1)/m}.
\end{eqnarray} 
The factor $\{ 1+(1+1/m)\widetilde{\mu}_{r,\aleph}\}$ in $\widetilde{D}_{\aleph}$ is used to deflate 
$\widetilde{d}''_{\aleph}$ in order to adjust for the loss of information due to missingness. 
The LR test statistic $\widetilde{D}_{\lrt}$ may be negative. \cite*{chanMeng2017_MILRT} recently
proposed a corrected version.

If $m$ is fixed and $n\rightarrow\infty$,
the limiting distribution of $\widetilde{D}_{\aleph}$ is notoriously complicated  
because of the dependence on the unknown matrix  
$F := I_{\mis}I_{\com}^{-1}$ in a tangled way, 
where 
\begin{equation*}
	I_{\com} := \E\left\{ - \frac{\partial^2\log f(X\mid \psi)}{\partial \theta \partial \theta^{\T}} \right\} 
	\qquad \text{and} \qquad
	I_{\mis} := \E\left\{ - \frac{\partial^2\log f(X\mid X_{\obs},\psi)}{\partial \theta \partial \theta^{\T}} \right\}  \label{eqt:FI}
\end{equation*}
are the complete-data and missing-data Fisher's information of $\theta$, respectively.  
We also denote $I_{\obs} := I_{\com}-I_{\mis}$ as the observed-data Fisher's information of $\theta$. 
Let the eigenvalues of $F$ be $f_1 \geq \cdots \geq f_k$, and 
denote $r_j := f_j/(1-f_j)$ for each $j$. 
The values $f_j$ and $r_j$ are known as the \emph{fraction of missing information} (FMI) and 
the \emph{odds of missing information} (OMI), respectively.  
There is no loss in information if $r_1 = \cdots = r_k = 0$, 
or, equivalently, $f_1 = \cdots = f_k = 0$. 
Under regularity conditions (RCs) and $H_0$,  
\begin{eqnarray}\label{eqt:representation_D_limit}
	\widetilde{D}_{\aleph} \inD
		\mathbb{D} := \frac{\frac{1}{k}\sum_{j=1}^k \left\{ 1+ (1+\frac{1}{m})r_{j} \right\} G_j}{ 1+\frac{1}{k}\sum_{j=1}^k (1+\frac{1}{m})r_{j} H_j}, 
\end{eqnarray}
as $n\rightarrow\infty$,
where $G_1,\ldots, G_k \sim \chi^2_1$ and $H_1, \ldots, H_k \sim \chi^2_{m-1}/(m-1)$ are independent. 
The RCs and derivation of (\ref{eqt:representation_D_limit}) are presented in 
Proposition \ref{prop:limit_of_Dhat}. 
Since the distribution $\mathbb{D}$  
depends on the nuisance parameters $r_{1}, \ldots, r_{k}$ in a complicated way, 
it is not immediately possible to use $\mathbb{D}$ as a reference null distribution.  
In order to mitigate this fundamental difficulty, 
it has been a common practice to assume some structure on $r_{1}, \ldots, r_{k}$ 
(see Condition \ref{ass:EOMI} below),
and/or resort to asymptotic ($m\rightarrow\infty$) approximation. 
We present these existing strategies one by one, and give a summary in Table \ref{tab:approxNullDist}.

\begin{condition}[Equal odds of missing information (EOMI)]\label{ass:EOMI}
There is $\mu_r$ such that $r_{1} =\cdots=r_{k}=\mu_r$. 
\end{condition}

\begin{table}
\setlength{\tabcolsep}{3pt}
\renewcommand{\arraystretch}{0.85}
\centering
\footnotesize
\begin{tabular}{Hlllll}
\toprule
&\multicolumn{2}{c}{\bf Assumptions}  & \multicolumn{3}{c}{\bf Asymptotic null distribution of $\widetilde{D}$ or $\widehat{D}$}  \\
\cmidrule(r){2-3}\cmidrule(r){4-6}
\bf Case  & Condition \ref{ass:EOMI} & $m\rightarrow\infty$  &  \bf Exact  & \bf Approximated   & \bf Name \\
\midrule
(1) & Required& Required   & $\chi^2_k/k$ & $\chi^2_k/k$ & T1\\
\cmidrule(r){1-6}
(2) &Required & Not required & $\mathbb{D}_0$ in (\ref{eqt:approx_limit_D}) & $\approx F(k, \widehat{\df})$ \citep{li91JASA} & T2\\
\cmidrule(r){1-6} 
(3) &Not required & Required &   $\mathbb{D}_{\infty}$ in (\ref{eqt:limitingDm_mLarge}) &  $\approx \widehat{c}_0 + \widehat{c}_1 \chi^2_k/k$  \citep{mengPhDthesis}  & T3\\
\cmidrule(r){1-6}
(4) &Not required & Not required  &  $\mathbb{D}$ in (\ref{eqt:representation_D_limit}) & $\approx \widehat{\mathbb{D}}$ (Algorithm \ref{algo:SMI_exact}) & T4 (proposal)\\ 
\bottomrule
\end{tabular}
\caption{\footnotesize
Summary of the asymptotic reference null distributions. 
The reference distribution refers to 
the common weak limit ($n\rightarrow\infty$ and $m>1$) of
the existing test statistic $\widetilde{D} \in \{ \widetilde{D}_{\wt}, \widetilde{D}_{\lrt} \}$ 
defined in (\ref{eqt:Da}) and 
the proposed test statistics 
$\widehat{D}\in \{ \widehat{D}_{\wt}, \widehat{D}_{\lrt}, \widehat{D}_{\rt}\}$ to be defined in (\ref{eqt:asy_version_D_proposal}).} 
\label{tab:approxNullDist}
\end{table}

Condition \ref{ass:EOMI} is equivalent to $f_1 = \cdots = f_k$, 
known as equal fraction of missing information (EFMI).
Although it is a strong assumption and is almost always violated in real problems, 
it is widely used in the literature because of simplicity;
see \cite{rubin1987}.
Under Condition \ref{ass:EOMI},
$\mathbb{D}$ 
can be represented as 
\begin{eqnarray}\label{eqt:approx_limit_D}
	\mathbb{D}_0 := \frac{\left\{1+(1+\frac{1}{m})\mu_r \right\} G}{ 1+(1+\frac{1}{m})\mu_r H},  
\end{eqnarray}
where 
$G\sim {\chi^2_k}/{k}$ and $H\sim{\chi^2_{k(m-1)}}/\{k(m-1)\}$ are independent. 
In this case, $\mathbb{D}_0$ depends only on one unknown $\mu_r$, 
hence, approximating (\ref{eqt:approx_limit_D}) is easier. 
The first approximation of $\mathbb{D}_0$ was provided by \cite{rubin1987}.
Then \cite{li91JASA} refined it to $F(k, \widehat{\df})$, where
$\widehat{\df}$ is an estimate of 
\begin{equation*}\label{eqt:df1}
	\df := \left\{ \begin{array}{ll}
		4+(K_m-4)\left[1+(1-2/K_m)\{ (1+\frac{1}{m})\mu_r \}^{-1}\right]^2 , & \text{if $K_m>4$}; \\
		(m-1)\left[1+ \{ (1+\frac{1}{m})\mu_r \}^{-1} \right]^2(k+1)/2 ,& \text{otherwise},
	\end{array}\right. \nonumber \\
\end{equation*}
where $K_m := k(m-1) $.
In practice, $\widehat{\df}$ is constructed by plugging in 
an estimate of $\mu_r$ into $\df$. 
\citet{li91JASA} and \citet{mengRubin92} proposed to estimate $\mu_r$ by
$\widetilde{\mu}_{r,\wt}$ and $\widetilde{\mu}_{r,\lrt}$
if Wald's test and LR test are used, respectively.
Many software routines have implemented this approximation, for example, 
\cite{vanBuurenGroothuisOudshoorn2011}.
Besides, some approximations of $\mathbb{D}_0$ designed for small $n$
can be found in \cite{rubinSchenker1986}, \cite{BarnardRubin1999} and \cite{Reiter07}.

If Condition \ref{ass:EOMI} does not hold but $m\rightarrow\infty$, 
then $\mathbb{D}$ is simplified to $\mathbb{D}_{\infty}$,  
which can be represented by 
\begin{align}\label{eqt:limitingDm_mLarge}
	\mathbb{D}_{\infty} := \frac{1}{k(1+ \mu_r)}\sum_{j=1}^k (1+r_{j}) G_j , 
	\qquad \text{where} \qquad 
	\mu_r := \frac{1}{k}\sum_{j=1}^k r_{j}.
\end{align}
\cite{mengPhDthesis} and \cite{li91JASA} proposed approximating $\mathbb{D}_{\infty}$ by 
$\widehat{c}_0 + \widehat{c}_1\chi^2_k/k$ via matching their first two moments. 
The coefficients $\widehat{c}_0$ and $\widehat{c}_1$ are estimates of 
${c}_1 := \left\{1+\sqrt{\sigma^2_r/(1+\mu_r^2)} \right\}^{1/2}$
and 
${c}_0 := 1-{c}_1$,
where $\sigma^2_r := \sum_{j=1}^k (r_j - \mu_r)^2/k$.
It is remarked that \cite{mengPhDthesis} is an improvement over \cite{li91JASA}.
\cite{mengPhDthesis} proposed estimating $\mu_r$ and $\sigma^2_r$
by 
$\widetilde{\widetilde{\mu}}_{r,\wt} := \tr\{\widehat{B}\bar{V}^{-1}\}/k$ and
$\widetilde{\widetilde{\sigma}}_{r,\wt}^2 := \tr\{(\widehat{B}\bar{V}^{-1})^2\}/k - \widetilde{\widetilde{\mu}}_{r,\wt}^2\{1+ k/(m-1)\}$, 
respectively, where
$\widehat{B} :=\sum_{\ell=1}^m (\widehat{\theta}^{\ell}- \bar{\theta})(\widehat{\theta}^{\ell}- \bar{\theta})^{\T}/(m-1)$.

If both Condition \ref{ass:EOMI} and $m\rightarrow\infty$ are satisfied, 
then $\mathbb{D}\sim\chi^2_k/k$, that is, the standard reference distribution 
in (\ref{eqt:comTestchisq}). 
This very rough approximation was discussed in 
Section 3.2 of \cite{li91JASA}.

\subsection{Theoretical, implementational and universal concerns}\label{sec:strengthWeak}
Although fairly complete theories for performing MI tests are available, 
they are not fully satisfactory because of the following three concerns.

The first concern is a theoretical consideration.
As we discussed in Section \ref{sec:existingWork}, 
all existing MI tests (e.g., T1, T2 and T3 in Table \ref{tab:approxNullDist}) require 
Condition \ref{ass:EOMI} and/or $m\rightarrow \infty$.
In many real applications, they are very restrictive and can hardly be satisfied. 
Roughly speaking, Condition \ref{ass:EOMI} means 
that all parameters in $\theta = (\theta_1, \ldots, \theta_k)^{\T}$ 
are equally impacted by the missing data. 
It is violated in many applications, for example, 
linear regression (see Section \ref{sec:eg_reg}), 
testing variance-covariance matrix (see Section \ref{sec:eg_mvn} of the supplementary note), etc. 
The assumption of $m\rightarrow\infty$ does not match 
the current practice as well. 
For public-use datasets, 
the dataset owners may refuse to release a large number of imputed datasets to the public due to 
storage problem, processing inconvenience, or privacy concern, 
therefore, $m \leq 30$, or even $m\leq 10$, is typically used; 
see Remark \ref{rem:imputer} for an example. 
Moreover, unlike typical simulation study, 
the analysts cannot arbitrarily generate a large number imputed datasets.
Hence, the value of $m$ is typically not very large.

The second concern is about implementation.
MI tests are most useful if users only need to apply their 
intended complete-data testing device $X \mapsto \mathcal{d}(X)$, that is, 
either (\ref{eqt:com_WT}), (\ref{eqt:com_LRT}) or (\ref{eqt:com_RT}),
repeatedly to the imputed datasets 
$X^1, \ldots, X^m$.
Unfortunately, this ideal minimal requirement is not sufficient for most existing MI tests 
(e.g., T2, and T3 in Table \ref{tab:approxNullDist}). 
Instead of merely requiring the device $\mathcal{d}(\cdot)$, they may need 
(i) the variance-covariance matrix estimator $\widehat{V}(X)$, or
(ii) a non-standard likelihood function. 
The device (i) is required for computing, for example,  
$\widetilde{\mu}_{r,\wt}$ \citep{li91JASA}, 
and $\widetilde{\widetilde{\mu}}_{r,\wt}$ and $\widetilde{\widetilde{\sigma}}^2_{r,\wt}$ \citep{mengPhDthesis}.
It is infeasible because, in some problems, 
$\widehat{V}(X)$ is typically 
unavailable in standard computer subroutines. 
For example, 
to perform a G-test, that is, the LR test for goodness of fit in contingency tables
(see Section 6.4.3 of \citet{shan99}),
one may use the R function \texttt{GTest} in the package \texttt{DescTools}. 
But this function 
does not provide $\widehat{V}(X)$ in the output.
In some problems, computing $\widehat{V}(X)$ is highly non-trivial, for example, 
testing variance-covariance matrices; see Section \ref{sec:eg_mvn} of the supplementary note.  
The device (ii) usually requires analysts' effort to build,
so, it can be challenging or simply troublesome for them. 
For example, $(X,\psi_0,\psi_1)\mapsto d_{\lrt}(\psi_1, \psi_0\mid X)$,
instead of the standard $X\mapsto \mathcal{d}_{\lrt}(X)$,
is required in $\widetilde{\mu}_{r,\lrt}$ \citep{mengRubin92}. 
Arguably, most (if not all) LR test statistic subroutines are not built in this way. 
Although the recent work by \cite{chanMeng2017_MILRT} 
only requires $X\mapsto \mathcal{d}_{\lrt}(X)$ for performing MI tests, 
it assumes equal OMI and is restricted to LR tests.
In order to handle unequal OMI, we need a substantially more sophisticated principle and technique, 
which are completely novel and have never been discussed in the literature.
 
The third concern is about universality. 
Computing $\widetilde{D}_{\wt}$ and $\widetilde{D}_{\lrt}$ require different algorithms
as the functional forms of
$(\widetilde{d}'_{\wt}, \widetilde{d}''_{\wt})$ 
in (\ref{eqt:dp_dpp_w_l})
and $(\widetilde{d}'_{\lrt}, \widetilde{d}''_{\lrt})$ 
in (\ref{eqt:dp_dpp_w_l_LER}) are different.
It is inconvenient for users. 
In addition, 
the existing MI procedures only cover
Wald's test and LR test but not RS test. 
Since many tests are RS tests in nature
(see, e.g., \cite{beraBilias2001}), 
it reveals a gap between MI testing theory and practical usage.  
As discussed in \citet{rao2005},
there are many reasons to use RS test. 
For example,
RS test does not require fitting the full models, 
which are non-identifiable or computationally intensive in some problems; 
see Section \ref{sec:logisticEG} of the supplementary note for an example. 
So, it is desirable to have a unified MI procedure for all tests.

This paper addresses these three problems.  
A general, unified and feasible MI test without requiring 
Condition \ref{ass:EOMI} or $m\rightarrow \infty$ is proposed. 
It only requires the analysts to have a standard complete-data test device $X\mapsto \mathcal{d}(X)$, 
where $\mathcal{d}(\cdot)$ can be either $\mathcal{d}_{\wt}(\cdot)$, $\mathcal{d}_{\lrt}(\cdot)$ or $\mathcal{d}_{\rt}(\cdot)$
defined in (\ref{eqt:com_WT})--(\ref{eqt:com_RT}).

\section{Motivation and plan of proposal}\label{sec:plan}
\subsection{Motivation}\label{sec:motivation}
Complete-data Wald's and LR test statistics are asymptotically equivalent under $H_0$; 
see Section 4.4 of \cite{serfling1980}.
\cite{mengRubin92} showed 
that this asymptotic equivalence continues to hold for the MI statistics defined  
in (\ref{eqt:dp_dpp_w_l}) and (\ref{eqt:dp_dpp_w_l_LER}), that is, 
$\widetilde{d}'_{\wt} \sim \widetilde{d}'_{\lrt}$
and 
$\widetilde{d}''_{\wt} \sim \widetilde{d}''_{\lrt}$,
where $A_n \sim B_n$ means that $A_n-B_n\inP 0$, and  
``$\inP$'' denotes convergence in probability. 
So, we may asymptotically represent $\widetilde{d}'_{\wt}$ and $\widetilde{d}'_{\lrt}$ as $\widetilde{d}'$, 
and represent $\widetilde{d}''_{\wt}$ and $\widetilde{d}''_{\lrt}$ as $\widetilde{d}''$. 
Then $\widetilde{D}_{\wt}$ and $\widetilde{D}_{\lrt}$ equal to $\widetilde{D}$ asymptotically, where 
\begin{equation}
	\widetilde{D} := \frac{\widetilde{d}''}{k\left\{ 1+ (1+\frac{1}{m})\widetilde{\mu}_r \right\} }  
	\qquad\text{and}\qquad
	\widetilde{\mu}_r := \frac{\widetilde{d}'-\widetilde{d}''}{k(m-1)/m} . \label{eqt:asy_version_D}  
\end{equation}
From (\ref{eqt:asy_version_D}), we know that the MI test statistic $\widetilde{D}$ 
depends on $X^1, \ldots, X^m$ only through $\widetilde{d}'$ and 
$\widetilde{d}''$. 
So, all information contained in $X^1, \ldots, X^m$ 
is summarized by $\widetilde{d}'$ and $\widetilde{d}''$. 
In general,
the two-number summary $(\widetilde{d}',\widetilde{d}'')$ is not enough for estimating 
$k$ unknown parameters $r_{1}, \ldots, r_{k}$. 
Hence, in order to estimate all individual $r_1, \ldots, r_k$, 
it is necessary to derive a more general class of MI statistics 
other than $\widetilde{d}'$ and $\widetilde{d}''$.

Besides, we would like to have a MI testing procedure
that can be completed solely by the device $\mathcal{d}(\cdot)$.
In order to achieve this goal, 
we begin with representing the statistics $\widetilde{d}'$ and $\widetilde{d}''$ 
in terms of $\mathcal{d}(\cdot)$.
According to \cite{XXMeng2017}, we can asymptotically represent $\widetilde{d}'$ and $\widetilde{d}''$ as
\begin{eqnarray}\label{eqt:dp_dpp}
	\widetilde{d}' \sim \frac{1}{m}\sum_{\ell=1}^m \mathcal{d}(X^{\ell})
	\qquad \text{and} \qquad
	\widetilde{d}'' \sim \frac{1}{m}\mathcal{d}(X^{\{1:m\}}), 
\end{eqnarray}
where
$X^{\{1:m\}} := [(X^{1})^{\T}, \ldots, (X^m)^{\T} ]^{\T}$ is a stacked dataset.
Using (\ref{eqt:dp_dpp}),
we may interpret $\widetilde{d}'$ and $\widetilde{d}''$
as summary statistics via stacking \emph{one} and \emph{all} imputed datasets, respectively. 
Stacking different numbers of imputed datasets produces distinct inferential tools.
For example, 
$\widetilde{d}''$, the numerator of the MI test statistic $\widetilde{D}$ in (\ref{eqt:asy_version_D}), 
measures the amount of evidence against $H_0$; whereas
$\widetilde{d}'-\widetilde{d}''$, 
which is proportional to the estimator $\widetilde{\mu}_r$ in (\ref{eqt:asy_version_D}), 
measures the amount of information loss due to missing data. 
Hence, it motivates us to derive new MI statistics by stacking imputed datasets in various ways.

\subsection{Overview and plan of proposal}
Following the motivations in Section \ref{sec:motivation}, we propose a MI test statistic that admits the form 
\begin{eqnarray}
	\widehat{D} := \frac{\widehat{d}^{\{1:m\}}}{k\left\{ 1 + (1+\frac{1}{m})\widehat{\mu}_r \right\} }  ,
	\label{eqt:asy_version_D_proposal}  
\end{eqnarray}
where $\widehat{d}^{\{1:m\}}$ and $\widehat{\mu}_r$  
are some statistics to be derived so that 
(i) they can be computed solely by using the device $\mathcal{d}(\cdot)$, and  
(ii) the asymptotic null distribution of $\widehat{D}$ is $\mathbb{D}$ 
(see (\ref{eqt:representation_D_limit}))
without any ad-hoc approximation. 
The goals (i) and (ii) are completed in Sections \ref{sec:method}
and \ref{sec:testing}, respectively. 
Since the distribution of $\mathbb{D}$ depends on $r_1, \ldots, r_k$, 
we propose to approximate them by their estimators $\widehat{r}_j$'s, 
which are derived in Section \ref{sec:estimation} by using various stacked statistics.
Algorithm \ref{algo:SMI_exact} computes our proposed test statistic $\widehat{D}$ 
and the corresponding $p$-value.
We will explain the steps of Algorithm \ref{algo:SMI_exact} in the subsequent sections.

\begin{algorithm}[t]
\caption{Asymptotically correct MI test for $H_0$}\label{algo:SMI_exact}
\SetAlgoVlined
\DontPrintSemicolon
\SetNlSty{texttt}{[}{]}
\small
\textbf{Input}: {\;
(i) $X\mapsto\mathcal{d}(X)$ -- any complete-data testing device in (\ref{eqt:com_WT})--(\ref{eqt:com_RT}); \;
(ii) $X^1, \ldots, X^m$ -- $m$ properly imputed datasets; and\; 
(iii) $k$ -- the dimension of $\Theta$.}\\
\Begin{
Stack $X^1, \ldots,  X^m$ row-by-row to form $X^{\{1:m\}}$.\;
Compute $\widehat{d}^{\{1:m\}} \gets \mathcal{d}(X^{\{1:m\}})/m$.\;
\For{$\ell \in \{1, \ldots, m\}$}{
	Stack $X^1, \ldots, X^{\ell-1}, X^{\ell+1}, \ldots, X^m$ row-by-row to form $X^{\{-\ell\}}$.\;
	Compute $\widehat{d}^{\{-\ell\}} \gets \mathcal{d}(X^{\{-\ell\}})/(m-1)$.\;
	Compute $\widehat{d}^{\{\ell\}} \gets \mathcal{d}(X^{\ell})$.\; 
	Compute $\widehat{T}_{\ell} \gets (m-1)\widehat{d}^{\{-\ell\}} + \widehat{d}^{\{\ell\}} - m\widehat{d}^{\{1:m\}}$.\;
}
Compute $\widehat{t}_{j} \gets \sum_{\ell=1}^m \widehat{T}^{j}_{\ell}/m$
for each $j=1, \ldots, k$.\;
Compute $\widehat{r}_{1:k} \gets M_1^{-1}(M_2^{-1}(\widehat{t}_{1:k}))$ according to 
Proposition \ref{prop:inverse}.\;
Compute $\widehat{D}$ according to (\ref{eqt:asy_version_D_proposal}).\;
Draw $G_j^{(\iota)}\sim \chi^2_1$ and $H_j^{(\iota)}\sim \chi^2_{m-1}/(m-1)$ independently for $\iota=1, \ldots, N$ and $j=1, \ldots, k$. \;
Compute $\mathbb{\widehat{D}}^{(\iota)}$, $\iota=1, \ldots, N$,  according to (\ref{eqt:DmHat_est}). Set $N=10^4$ by default.\;
Compute $\widehat{p} \gets \sum_{\iota=1}^N \mathbb{1}(\widehat{\mathbb{D}}^{(\iota)} \geq \widehat{D})/N$.\;
}	
\textbf{return}: 
{$\widehat{p}$ -- the $p$-value for testing $H_0$ against $H_1$.}
\end{algorithm}

\section{Methodology and Principle}\label{sec:method}
\subsection{Stacking Principle}\label{sec:stackPrinciple}
In this section, we introduce a new class of MI statistics
by stacking $X^1, \ldots, X^m$ in various ways.  
As we shall see in Theorem \ref{thm:asy_Distof_bar_d_S} below, 
stacking them differently extracts different information from $X^1, \ldots, X^m$. 
We refer this phenomenon to a \emph{stacking principle}. 
Define the stacked dataset $X^S$ by 
stacking $\{ X^{\ell} : \ell\in S\}$ row-by-row for some non-empty $S \subseteq \{1,\ldots,m\}$.
For example, $X^S = [(X^{\ell_1})^{\T},\ldots, (X^{\ell_s})^{\T}]^{\T}$ is an $(ns)\times p$ matrix if $S=\{\ell_1,\ldots, \ell_s\}$. 
Define 
\begin{eqnarray}\label{eqt:def_bardS}
	\widehat{d}^S 
		:= \frac{1}{|S|} \mathcal{d}(X^S),
\end{eqnarray}
where $|S|$ is the cardinality of $S$, 
and $\mathcal{d}(\cdot)$ is any testing device in (\ref{eqt:com_WT})--(\ref{eqt:com_RT}). 
In particular, we have
$\widehat{d}^{\{\ell\}} = \mathcal{d}(X^{\ell})$,
$\widehat{d}^{\{-\ell\}}= \mathcal{d}(X^{\{-\ell\}})/(m-1)$, and
$\widehat{d}^{\{1:m\}} = \mathcal{d}(X^{\{1:m\}})/m$,
where $\{1:m\} := \{1,\ldots,m\}$ and 
$\{-\ell\} := \{1,\ldots,m\}\setminus \{\ell\}$. 
We denote $\widehat{d}^S$ by $\widehat{d}^S_{\wt}$, $\widehat{d}^S_{\lrt}$ and $\widehat{d}^S_{\rt}$ 
to emphasize that 
$\mathcal{d}_{\wt}$, $\mathcal{d}_{\lrt}$ and $\mathcal{d}_{\rt}$ are used, respectively.

It is worth mentioning that ``stacking'' is a \emph{universal} operation. 
Since it is problem-independent, 
the analysts can apply this operation to all kinds of testing problems universally. 
This nature is similar to some well-known procedures, for example, 
bootstrapping \citep{efro1979}, jackknife resampling \citep{Quenouille56}, subsampling \citep{politis1999}, etc.  
All these procedures are model-free and fully non-parametric.

The usefulness of 
$\widehat{d}^S$ can be seen from its asymptotic distribution.  
To derive it, we need some RCs.

\begin{condition}\label{cond-Normality}
The observed-data MLE $\widehat{\theta}_{\obs}$ of $\theta$ 
satisfies $T^{-1/2}(\widehat{\theta}_{\obs}-\theta^{\star}) \inD \Normal_k(0_k,I_k)$, 
where $T := I_{\obs}^{-1}$ is well-defined, 
$\theta^{\star}$ is the true value of $\theta$;
$0_k$ is a $k$-vector of zeros; and
$I_k$ is a $k\times k$ identity matrix.
\end{condition}

\begin{condition}\label{cond-properImp} 
The imputed statistics 
$(\widehat{\theta}^{1},\widehat{V}^{1}), \ldots, (\widehat{\theta}^{m},\widehat{V}^{m})$ 
are conditionally independent given $X_{\obs}$. 
Moreover, for each $\ell=1, \ldots,m$, they satisfy
$\{ B^{-1/2}(\widehat{\theta}^{\ell}-\widehat{\theta}_{\obs}) \mid X_{\obs}\}\inD \Normal_k(0_k,I_k)$ and $\{ T^{-1}(\widehat{V}^{\ell}-V) \mid X_{\obs} \} \inP O_k$, 
where $B := I_{\obs}^{-1}-I_{\com}^{-1}$ and $V := I_{\com}^{-1}$ are well-defined, and 
$O_k$ is a $k\times k$ matrix of zeros. 
\end{condition}

Condition \ref{cond-Normality} is satisfied under the usual RCs that guarantee 
asymptotic normality of MLEs; 
see, for example, \cite{wangRobins1998} and \cite{KimShao2013}.
Condition \ref{cond-properImp} is satisfied if a proper imputation model \citep{rubin1987} is used. 
The posterior predictive distribution $f(X_{\mis}\mid X_{\obs})$
is an example of proper imputation models, 
which have been widely used and adopted in MI; see 
Sections 2.4--2.7 of \citet{rubin1996multiple} for a comprehensive discussion of this assumption.
We emphasize that the analysts do not need to compute or know
the imputed statistics $(\widehat{\theta}^{\ell},\widehat{V}^{\ell})$ by themselves. 
Condition \ref{cond-properImp} guarantees that the imputer does his/her imputation job correctly. 
It is remarked that if the analyst's and imputer's models are uncongenial \citep{meng1994}, 
then the standard Rubin's MI procedure may not be valid. 
Generalizing MI procedures to the uncongenial case is not completely solved yet. 
Interested readers are referred to a recent discussion article \citep{XXMeng2017}
for a simple remedy when $m\rightarrow\infty$.
Extending our proposed method to the uncongenial case is left for further study.

\begin{definition}\label{cond-localH1}
Let $\theta^{\star}$ be the true value of $\theta$, and 
$\theta_0$ be the null value of $\theta$ specified in $H_0$.
The difference $\theta^{\star}-\theta_0$ satisfies that 
$\sqrt{n}A(\theta^{\star}-\theta_0) \rightarrow \delta \equiv (\delta_1,\ldots,\delta_k)^{\T}$  
for some $\delta$ and invertible matrix $A$.
\end{definition}
Definition \ref{cond-localH1} defines a sequence of local alternative hypotheses; 
see, for example,  \cite{vanDerVaart2000}. 
Note that it is required for 
proving asymptotic equivalence of 
the test statistics 
$\mathcal{d}_{\wt}(X)$, 
$\mathcal{d}_{\lrt}(X)$ and 
$\mathcal{d}_{\rt}(X)$; see, for example, \cite{serfling1980} and \cite{TSH_lehmannRomano}.  
This general setting allows us to prove the validity of MI estimators, 
even when $H_0$ is not true. 
Moreover, practical testing problems are more challenging under a local alternative hypothesis 
than under an obviously wrong fixed alternative hypothesis 
because, under a fixed alternative hypothesis,   
all three test statistics obviously diverge to infinity, and have power one asymptotically.
Hence, our setting is sufficient for most practical applications.
The following theorem states the asymptotic distribution of $\widehat{d}^{S}$.

\begin{theorem}\label{thm:asy_Distof_bar_d_S}
Assume Conditions \ref{cond-Normality}--\ref{cond-properImp}. 
Let $W_j, Z_{j1}, \ldots, Z_{jm}$, $j=1,\ldots,k$, be independent $\Normal(0,1)$ random variables, and 
$\bar{Z}_{j(S)} = \sum_{\ell\in S}Z_{j\ell}/|S|$ be the average of $\{Z_{j\ell} : \ell\in S\}$.
Then, for any non-empty multiset $S$ from $\{1,\ldots,m\}$, 
\begin{eqnarray}\label{eqt:representation_of_d}
	\widehat{d}^{S}
		\inD \mathbb{d}^S 
		:= \sum_{j=1}^k \left\{ \delta_j + (1+r_{j})^{1/2}W_j + r_{j}^{1/2} \bar{Z}_{j(S)} \right\}^2 , 
\end{eqnarray}
where the convergence is true jointly for all $S$.
\end{theorem}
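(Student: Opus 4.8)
The plan is to reduce the three testing devices to a single one, realize $\widehat{d}^{S}$ as a quadratic form in one asymptotically Gaussian vector, and then simultaneously diagonalize the complete- and missing-data information matrices, using the identity $1+r_j=(1-f_j)^{-1}$ to bring everything into the form claimed in (\ref{eqt:representation_of_d}). To start the reduction I would view the stacked dataset $X^{S}$ as an ordinary complete dataset with $n|S|$ rows and log-likelihood $\psi\mapsto\sum_{\ell\in S}\log f(X^{\ell}\mid\psi)$ (rows counted with multiplicity when $S$ is a multiset), and show that on it the three devices are asymptotically equivalent, $\widehat{d}^{S}_{\wt}=\widehat{d}^{S}_{\lrt}+o_p(1)=\widehat{d}^{S}_{\rt}+o_p(1)$, jointly over any finite family of multisets $S$. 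This is exactly the classical algebra behind the equivalence of Wald, likelihood-ratio and score statistics (Section~4.4 of \cite{serfling1980}; see also \cite{TSH_lehmannRomano}): it uses only that the stacked log-likelihood has a quadratic Taylor expansion about its maximizer with negligible remainder --- which holds under the regularity conditions because each of the finitely many summands does --- and that, by Definition~\ref{cond-localH1}, the $\theta$-part of that maximizer lies within $O_p(n^{-1/2})$ of $\theta_0$. Hence it suffices to prove (\ref{eqt:representation_of_d}) for $\mathcal{d}=\mathcal{d}_{\wt}$, and the joint statement over all $S$ will follow once finite-dimensional distributions converge.

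Next I would expand the stacked Wald statistic. By the quadratic expansion together with Condition~\ref{cond-properImp} --- under which the per-copy information estimate equals $I_{\com}$ up to a relative $o_p(1)$ for every $\ell$, so the precision weights are asymptotically equal --- the stacked MLE equals the plain average $\bar{\theta}_{(S)}:=|S|^{-1}\sum_{\ell\in S}\widehat{\theta}^{\ell}$ up to $o_p(n^{-1/2})$, and the device's variance estimator on $X^{S}$ equals $(|S|\,I_{\com})^{-1}$ up to a relative $o_p(1)$. Introduce the (sequence of) matrices $P$ with $I_{\com}=P^{\T}P$ and $I_{\mis}=P^{\T}\Lambda P$, $\Lambda=\diag(f_1,\dots,f_k)$, so that $I_{\obs}=P^{\T}(I_k-\Lambda)P$ and therefore $P\,T\,P^{\T}=(I_k-\Lambda)^{-1}$ and $P\,B\,P^{\T}=(I_k-\Lambda)^{-1}-I_k$. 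Substituting the two expansions into $\widehat{d}^{S}_{\wt}=|S|^{-1}(\widehat{\theta}(X^{S})-\theta_0)^{\T}\widehat{V}(X^{S})^{-1}(\widehat{\theta}(X^{S})-\theta_0)$ and using $P(\bar{\theta}_{(S)}-\theta_0)=O_p(1)$ to absorb the relative errors, I get
\begin{equation*}
\widehat{d}^{S}_{\wt}=\bigl(\bar{\theta}_{(S)}-\theta_0\bigr)^{\T}I_{\com}\bigl(\bar{\theta}_{(S)}-\theta_0\bigr)+o_p(1)=\bigl\|P\bigl(\bar{\theta}_{(S)}-\theta_0\bigr)\bigr\|^{2}+o_p(1).
\end{equation*}

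Finally I would pass to the coordinates $\widehat{\zeta}:=P(\widehat{\theta}_{\obs}-\theta^{\star})$ and $\widehat{\zeta}^{\ell}:=P(\widehat{\theta}^{\ell}-\widehat{\theta}_{\obs})$. The key algebraic fact is $1+r_j=(1-f_j)^{-1}$, which turns the identities above into $P\,T\,P^{\T}=\diag(1+r_1,\dots,1+r_k)$ and $P\,B\,P^{\T}=\diag(r_1,\dots,r_k)$. Then Condition~\ref{cond-Normality} gives $\widehat{\zeta}\inD\diag\{(1+r_j)^{1/2}\}W$ with $W\sim\Normal_k(0_k,I_k)$, and Condition~\ref{cond-properImp} gives that $\widehat{\zeta}^{1},\dots,\widehat{\zeta}^{m}$ are conditionally i.i.d.\ given $X_{\obs}$ with $\widehat{\zeta}^{\ell}\inD\diag\{r_j^{1/2}\}Z_{\ell}$, $Z_{\ell}\sim\Normal_k(0_k,I_k)$; a conditioning argument (the limiting conditional law not depending on $X_{\obs}$) makes $W,Z_1,\dots,Z_m$ jointly independent, and $P(\theta^{\star}-\theta_0)$ converges --- by Definition~\ref{cond-localH1} together with the convergence of $n^{-1}I_{\com}$ to a positive-definite limit --- to a vector whose entries we name $\delta_1,\dots,\delta_k$. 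Writing $P(\bar{\theta}_{(S)}-\theta_0)=P(\theta^{\star}-\theta_0)+\widehat{\zeta}+|S|^{-1}\sum_{\ell\in S}\widehat{\zeta}^{\ell}$ and applying the continuous mapping theorem to this fixed affine function of the jointly convergent vector $(\widehat{\zeta},\widehat{\zeta}^{1},\dots,\widehat{\zeta}^{m})$ yields, coordinatewise,
\begin{equation*}
\widehat{d}^{S}_{\wt}\;\inD\;\sum_{j=1}^{k}\bigl\{\delta_j+(1+r_j)^{1/2}W_j+r_j^{1/2}\bar{Z}_{j(S)}\bigr\}^{2}=\mathbb{d}^{S},\qquad\bar{Z}_{j(S)}=|S|^{-1}\sum_{\ell\in S}Z_{j\ell},
\end{equation*}
which is (\ref{eqt:representation_of_d}); the convergence is joint over all $S$ because each $\widehat{d}^{S}_{\wt}$ is, up to $o_p(1)$, the same fixed continuous function of that one limiting vector, so all finite-dimensional distributions converge.

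The step I expect to be the main obstacle is the second one: rigorously controlling the stacked analysis of a generic complete-data device. One must verify that feeding the device the dataset $X^{S}$ --- whose length-$n$ blocks are dependent through the shared $X_{\obs}$, and which repeats copies when $S$ is a multiset --- produces MLE, variance-estimator, score and restricted-MLE quantities all governed by a single quadratic expansion whose remainders are $o_p(1)$ uniformly over the finitely many $S$ entering any finite-dimensional statement, and that dealing with the nuisance coordinates of $\psi$ by profiling (so that $I_{\com}$ is read as the effective complete-data information for $\theta$) does not disturb this. Once that bookkeeping is done, the remaining steps are the standard delta-method and continuous-mapping arguments.
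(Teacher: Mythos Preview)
Your proposal is correct and follows essentially the same route as the paper: reduce to the Wald device, show the stacked MLE equals the average of the per-dataset MLEs up to $o_p(n^{-1/2})$, simultaneously diagonalize the complete- and missing-data informations (the paper does this via the spectral decomposition of $V_1^{-1/2}B_1V_1^{-1/2}$ directly in terms of $r_j$, you do it via $f_j$ and then invoke $1+r_j=(1-f_j)^{-1}$, but the resulting linear map $P$ coincides with the paper's $Q^{\T}V_1^{-1/2}\sqrt{n}$), and conclude by continuous mapping on the jointly Gaussian vector $(W,Z_1,\dots,Z_m)$. For the step you flag as the main obstacle --- that $\widehat{\theta}(X^{S})=\bar{\theta}_{(S)}+o_p(n^{-1/2})$ on the stacked data --- the paper does not re-derive it but simply cites Lemma~1 of \cite{wangRobins1998}.
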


Theorem \ref{thm:asy_Distof_bar_d_S} is true for any multiset $S$, for example, $S=\{1,1,2,3\}$. 
In this case $\bar{Z}_{j(S)} = (Z_{j1}+Z_{j1}+Z_{j2}+Z_{j3})/4$.
We emphasize that Theorem \ref{thm:asy_Distof_bar_d_S} 
requires neither 
Condition \ref{ass:EOMI} nor $m\rightarrow\infty$.
So, it is in line with the practical situation. 
The convergence (\ref{eqt:representation_of_d}) is with respect to the regime $n\rightarrow\infty$,
hence, it is simply a usual large-sample asymptotic result. 
More importantly, Theorem \ref{thm:asy_Distof_bar_d_S} sheds light on 
performing hypothesis tests because of two reasons. 
First, the limiting distribution $\mathbb{d}^S$ depends on $\delta$.
When $H_0$ is true, that is, $\delta_1 = \cdots = \delta_k = 0$, 
the statistic $\widehat{d}^{S}$ converges weakly to a non-degenerated distribution. 
When $\delta_j\rightarrow \infty$ for some $j$, 
$\widehat{d}^{S}$ diverges to infinity. 
So, the statistic $\widehat{d}^S$ can be used to test $H_0$ 
for every non-empty $S$.
Second, 
$\mathbb{d}^S$ depends on $r_{1},\ldots, r_{k}$, 
but the dependence on $r_{1},\ldots, r_{k}$ varies among $S$. 
Consequently, pooling information from different $\widehat{d}^S$ may help to estimate $r_{1},\ldots, r_{k}$.

However,  
$\widehat{d}^S$ is not immediately useful because of two reasons.  
First, $\mathbb{d}^S$ depends on 
$r_{1},\ldots, r_{k}$ in a complicated way.  
In particular, the $j$th summand on the right-hand side of (\ref{eqt:representation_of_d})
depends on $r_{j}$ \emph{non-linearly}. 
Thus, it is not clear how to use $\widehat{d}^S$ to estimate $r_{1}, \ldots, r_{k}$.
Second, 
for any nonempty $S_1$ and $S_2$, the statistics 
$\widehat{d}^{S_1}$ and $\widehat{d}^{S_2}$ are asymptotically dependent
through \emph{both} $\{W_j\}$ and $\{Z_{j\ell}\}$. 
The random variables $W_1, \ldots, W_k$ always appear in the 
limiting distributions of $\widehat{d}^{S_1}$ and $\widehat{d}^{S_2}$,
no matter how $S_1$ and $S_2$ are chosen.
So, $\widehat{d}^{S_1}$ and $\widehat{d}^{S_2}$ are too correlated to be useful 
if $S_1$ and $S_2$ are blindly selected.

\subsection{Stacked multiple imputation -- a new class of MI procedures}\label{sec:SMI_stat}
In this section, we propose a novel methodology for properly using $\widehat{d}^S$. 
According to the discussion in Section \ref{sec:stackPrinciple},
we know that $\widehat{d}^S$ is too complex to be useful
because of two sources of dependence: 
(i) the non-linear dependence of the limiting distribution of $\widehat{d}^S$ on $r_{1}, \ldots, r_{k}$, and 
(ii) the strong probabilistic dependence among different $\widehat{d}^S$ through the random variables $W_1,\ldots, W_k$.
In this section, we propose a method to get rid of all these two unwanted sources of dependence.

For any non-empty sets $S_1,S_2\subseteq \{1,\ldots, m\}$ such that $S_1\neq S_2$, define 
\begin{equation*}\label{eqt:J}
	\widehat{T}_{S_1, S_2} 
		= \frac{|S_1|+|S_2|}{|S_1|+|S_2|-2|S_1\cap S_2|} 
			\left\{ |S_1|\widehat{d}^{S_1} + |S_2|\widehat{d}^{S_2} - (|S_1|+|S_2|)\widehat{d}^{S_1\oplus S_2} \right\}, \nonumber \\
\end{equation*}
where $S_1 \oplus S_2$ is the multiset addition, for example, $\{1,3\}\oplus\{1,2\} = \{1,1,2,3\}$. 
We call $\widehat{T}_{S_1, S_2}$ a \emph{stacked multiple imputation} (SMI) statistic.  
Note that
the SMI statistic $\widehat{T}_{S_1, S_2}$ can be computed 
solely by stacking the imputed datasets and applying the complete-data testing device $\mathcal{d}(\cdot)$.  
Besides, as we shall see in Proposition \ref{prop:limit_of_J} below, 
$\widehat{T}_{S_1, S_2}$ is free of the two aforementioned sources of unwanted dependence.  
Hence, the SMI statistic $\widehat{T}_{S_1, S_2}$  has nice computational and theoretical properties. 
Consequently, it
is qualified to be a building block for all MI procedures proposed in this paper. 
Let 
\begin{eqnarray}\label{eqt:R_Rstar}
	R_{\tau} := \sum_{j=1}^{k} r_j^{\tau}, 
	\qquad \tau = 1,\ldots, k.
\end{eqnarray}
The asymptotic distribution of $\widehat{T}_{S_1, S_2}$ and its properties are shown below.

\begin{proposition}\label{prop:limit_of_J}
Assume Conditions \ref{cond-Normality}--\ref{cond-properImp}.  
Let $S_1,S_2\subseteq \{1,\ldots, m\}$ be any non-empty and non-identical sets.
Define $W_j, Z_{j1}, \ldots, Z_{jm}$,  $j=1,\ldots,k$ as in Theorem \ref{thm:asy_Distof_bar_d_S}.
\begin{enumerate}
	\item $\widehat{T}_{S_1, S_2}\inD \mathbb{T}_{S_1,S_2}$, where $\mathbb{T}_{S_1,S_2}$ is represented 
			as
			\begin{align}\label{eqt:limit_of_J}
				\mathbb{T}_{S_1,S_2}
					:= 
					\frac{|S_1|\times|S_2|}{|S_1|+|S_2|-2|S_1\cap S_2|}
					\sum_{j=1}^k r_j \left\{ \bar{Z}_{j(S_1)} - \bar{Z}_{j(S_2)} \right\}^2. 
			\end{align}
	\item $\mathbb{T}_{S_1,S_2}$ has the same marginal distribution as $\mathbb{T} := \sum_{i=1}^k r_i U_i$,
where $U_1,\ldots, U_k\sim \chi^2_1$ independently.
	\item Let $t_{\tau} := \E( \mathbb{T}^{\tau} )$ for $\tau\in\{1,\ldots, k\}$, 
			and $t_0:=1$.
			Then $t_1, \ldots, t_k$ can be found iteratively as follows: 
			\begin{eqnarray}\label{eqt:def_q}
				t_1 = R_1 
				\qquad \text{and} \qquad
				t_{\tau} = \sum_{j=1}^{\tau} \frac{(\tau-1)!}{(\tau-j)!} 2^{j-1} R_{j} t_{\tau-j}
			\end{eqnarray}
			for $\tau = 2,\ldots, k$. 
			In particular, 
			$\E(\mathbb{T}) = R_1$ and $\Var(\mathbb{T}) = 2R_2$. 
\end{enumerate}
\end{proposition}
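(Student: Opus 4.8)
The plan is to establish the three parts in order, using Theorem \ref{thm:asy_Distof_bar_d_S} as the starting point and then extracting marginal-distribution and moment consequences.

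For part (1), I would apply Theorem \ref{thm:asy_Distof_bar_d_S} to the three stacked datasets $X^{S_1}$, $X^{S_2}$, and $X^{S_1\oplus S_2}$, using the fact that the convergence there holds \emph{jointly} across all (multi)sets $S$. Thus $\widehat{T}_{S_1,S_2}$, being a fixed linear combination of $\widehat{d}^{S_1}$, $\widehat{d}^{S_2}$, $\widehat{d}^{S_1\oplus S_2}$, converges weakly to the same linear combination of the limits $\mathbb{d}^{S_1}$, $\mathbb{d}^{S_2}$, $\mathbb{d}^{S_1\oplus S_2}$. The key computation is then algebraic: expand each $\mathbb{d}^{S}$ as $\sum_j\{\delta_j + (1+r_j)^{1/2}W_j + r_j^{1/2}\bar Z_{j(S)}\}^2$ and verify that, after multiplying by the coefficient $\tfrac{|S_1|+|S_2|}{|S_1|+|S_2|-2|S_1\cap S_2|}$, the $\delta_j$-terms, the $W_j$-terms, and the cross terms all cancel, leaving only the quadratic-in-$\bar Z$ piece. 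The cancellation hinges on the identity $\bar Z_{j(S_1\oplus S_2)} = \tfrac{|S_1|\bar Z_{j(S_1)} + |S_2|\bar Z_{j(S_2)}}{|S_1|+|S_2|}$, which makes $|S_1|\bar Z_{j(S_1)} + |S_2|\bar Z_{j(S_2)} - (|S_1|+|S_2|)\bar Z_{j(S_1\oplus S_2)} = 0$, so the linear terms vanish exactly; a short quadratic-form computation then reduces the remaining terms to $\tfrac{|S_1|\,|S_2|}{|S_1|+|S_2|-2|S_1\cap S_2|}\sum_j r_j\{\bar Z_{j(S_1)}-\bar Z_{j(S_2)}\}^2$, which is (\ref{eqt:limit_of_J}). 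This bookkeeping is the main obstacle — not conceptually hard, but one must track the three overlap patterns ($S_1\cap S_2$ versus the disjoint parts) carefully to get the variance constant right.

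For part (2), I would compute, for each $j$, the variance of the Gaussian $\bar Z_{j(S_1)} - \bar Z_{j(S_2)}$. Writing $S_1\cap S_2$, $S_1\setminus S_2$, $S_2\setminus S_1$ as the three disjoint blocks and using independence of the $Z_{j\ell}$, one gets $\Var(\bar Z_{j(S_1)} - \bar Z_{j(S_2)}) = \tfrac{1}{|S_1|} + \tfrac{1}{|S_2|} - \tfrac{2|S_1\cap S_2|}{|S_1|\,|S_2|} = \tfrac{|S_1|+|S_2|-2|S_1\cap S_2|}{|S_1|\,|S_2|}$. Hence the prefactor in (\ref{eqt:limit_of_J}) is exactly the reciprocal of this variance, so $\tfrac{|S_1|\,|S_2|}{|S_1|+|S_2|-2|S_1\cap S_2|}\{\bar Z_{j(S_1)}-\bar Z_{j(S_2)}\}^2 \sim \chi^2_1$ marginally, and these $k$ chi-squares are independent across $j$ because the $Z_{j\ell}$ are independent across $j$. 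Therefore $\mathbb{T}_{S_1,S_2} \overset{d}{=} \sum_{j=1}^k r_j U_j$ with $U_j \overset{\text{iid}}{\sim}\chi^2_1$, independent of the choice of $S_1,S_2$ (as a marginal law), which is $\mathbb{T}$.

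For part (3), I would work entirely with $\mathbb{T} = \sum_{i=1}^k r_i U_i$. The cleanest route is moment generating functions: $\E(e^{s\mathbb{T}}) = \prod_{i=1}^k (1-2 r_i s)^{-1/2}$, so $\log \E(e^{s\mathbb{T}}) = -\tfrac12\sum_i \log(1-2r_i s) = \sum_{\tau\ge1} \tfrac{2^{\tau-1}}{\tau} R_\tau s^\tau$, giving the cumulants $\kappa_\tau = (\tau-1)!\,2^{\tau-1} R_\tau$ for $\tau\ge1$. In particular $\E(\mathbb{T}) = \kappa_1 = R_1$ and $\Var(\mathbb{T}) = \kappa_2 = 2R_2$, as claimed. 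The recursion (\ref{eqt:def_q}) for the raw moments $t_\tau$ then follows from the standard moments-from-cumulants identity $t_\tau = \sum_{j=1}^\tau \binom{\tau-1}{j-1}\kappa_j\, t_{\tau-j}$; substituting $\kappa_j = (j-1)!\,2^{j-1}R_j$ and simplifying $\binom{\tau-1}{j-1}(j-1)! = \tfrac{(\tau-1)!}{(\tau-j)!}$ yields exactly $t_\tau = \sum_{j=1}^\tau \tfrac{(\tau-1)!}{(\tau-j)!}2^{j-1} R_j\, t_{\tau-j}$, with $t_0 = 1$. I expect part (3) to be routine once the MGF is written down; the only care needed is citing or re-deriving the moment–cumulant recursion in the form above.
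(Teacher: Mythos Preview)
Your proposal is correct and follows essentially the same route as the paper: part~(1) is obtained by applying Theorem~\ref{thm:asy_Distof_bar_d_S} jointly and doing the algebraic cancellation you describe (the paper summarizes this as ``a direct application''), part~(2) by computing $\Var\{\bar Z_{j(S_1)}-\bar Z_{j(S_2)}\}$ exactly as you do, and part~(3) by identifying the cumulants $\kappa_\tau(\mathbb{T}) = 2^{\tau-1}(\tau-1)!\,R_\tau$ and invoking the standard moment--cumulant recursion. Your derivation of the cumulants via the log-MGF expansion is a minor variant of the paper's use of the known $\chi^2_1$ cumulant formula plus additivity and homogeneity, but the substance is identical.
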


According to Proposition \ref{prop:limit_of_J} (1),  
the limiting distribution $\mathbb{T}_{S_1,S_2}$ 
depends on $r_{1}, \ldots, r_{k}$ linearly, and 
is independent on the random variables $W_1, \ldots, W_k$ that appear in (\ref{eqt:representation_of_d}).
Hence, the SMI statistic $\widehat{T}_{S_1, S_2}$ ``filters'' out the two unwanted dependence structures.
Consequently, $\widehat{T}_{S_1, S_2}$ is easier to work with.
Proposition \ref{prop:limit_of_J} (2)
states that 
$\widehat{T}_{S_1, S_2}$'s are asymptotically 
identically distributed over different $S_1, S_2$.
Proposition \ref{prop:limit_of_J} (3) states that the $\tau$th moments of $\mathbb{T}$
depends on $r_1, \ldots r_k$ only through $R_1, \ldots, R_{\tau}$ for each $\tau =1, \ldots, k$.

Proposition \ref{prop:limit_of_J} implies that 
the sample $\tau$th moments of $\widehat{T}_{S_1, S_2}$ over a set of pairs of $(S_1, S_2)$ 
is a natural estimator of $t_{\tau}$. 
Let $\Lambda \subseteq \mathcal{L} := \{ (S_1,S_2) \subseteq \{1,\ldots,m\}^2: S_1,S_2\neq\emptyset ; S_1\neq S_2\}$
so that $\Lambda$ is a set of appropriately chosen pairs of $(S_1, S_2)$. 
For $\tau\in\{1, \ldots, k\}$, define an estimator of $t_{\tau}$ by 
\begin{eqnarray}\label{eqt:tHat}
	\widehat{t}_{\tau} := \widehat{t}_{\tau}(\Lambda) := \frac{1}{|\Lambda|}\sum_{(S_1,S_2)\in\Lambda} \widehat{T}_{S_1, S_2}^{\tau}, 
\end{eqnarray}
where the short notation $\widehat{t}_{\tau}$ is used when 
$\Lambda$ is clear in the context. 
Some examples of $\Lambda$ are given below. 

\begin{example}\label{eg:Lambda}  
The following selection rules of $\Lambda$ are suggested. Let
\begin{align}\label{eqt:LambdaEg}
	\begin{array}{c}
	\displaystyle \Lambda_{\Jack} := \Big\{ \Big(\{\ell\}, \{-\ell\}\Big) \Big\}_{1\leq \ell\leq m},\quad
	\displaystyle \Lambda_{\Full} := \Big\{ \Big(\{\ell\}, \{1:m\}\Big) \Big\}_{1\leq \ell\leq m}, \\[2ex]
	\displaystyle \Lambda_{\Pair} := \Big\{ \Big(\{\ell\}, \{\ell'\}\Big) \Big\}_{1\leq \ell<\ell'\leq m}
	\end{array}
\end{align}
be the \emph{Jackknife}, \emph{full} and \emph{pair} selection rules for $\Lambda$.
Note that $|\Lambda_{\Jack}| = |\Lambda_{\Full}| = m$, whereas $|\Lambda_{\Pair}|=m(m-1)/2$.
Putting $\Lambda = \Lambda_{\Jack}, \Lambda_{\Full},\Lambda_{\Pair}$ into (\ref{eqt:tHat}), 
we obtain the following three estimators of $t_{\tau}$:
\begin{align*}\label{eqt:tHatEg}
	\begin{array}{c}
	\displaystyle\widehat{t}_{\tau}(\Lambda_{\Jack}) = \frac{1}{m}\sum_{\ell=1}^m \widehat{T}_{\{\ell\},\{-\ell\}}^{\tau} , \qquad
	\displaystyle\widehat{t}_{\tau}(\Lambda_{\Full}) = \frac{1}{m}\sum_{\ell=1}^m \widehat{T}_{\{\ell\},\{1:m\}}^{\tau},\\
	\displaystyle\widehat{t}_{\tau}(\Lambda_{\Pair}) = \frac{2}{m(m-1)}\sum_{\ell'=2}^m\sum_{\ell=1}^{\ell'-1} \widehat{T}_{\{\ell\},\{\ell'\}}^{\tau},
	\end{array}
\end{align*}
respectively. 
Since $\widehat{t}_{\tau}(\Lambda_{\Pair})$ requires stacking at most two datasets,
it should be used when computing $\mathcal{d}(X)$ is difficult for a large dataset $X$. 
Although the device $\mathcal{d}(\cdot)$ has to be implemented $3m(m-1)/2 = O(m^2)$ times, 
the computations can be parallelized easily.
On the other hand, computing $\widehat{t}_{\tau}(\Lambda_{\Jack})$ and $\widehat{t}_{\tau}(\Lambda_{\Full})$ requires implementing the device $\mathcal{d}(\cdot)$ $3m=O(m)$ times only. It is preferable when $m$ is large.
\end{example}

The message behind Example \ref{eg:Lambda} is that 
stacked MI is flexible enough to allow users to choose the most computationally viable 
statistics according to their problems. 
Although testing on stacked datasets is more computationally intensive and 
requires more computing memory, 
these computational requirements are usually affordable by standard laptop computers nowadays. 
Indeed, the increase in \emph{computing cost} is used to exchange for 
a decrease in \emph{human time cost} of deriving or searching non-standard computing devices
required in the exiting tests T2 and T3 stated in Table \ref{tab:approxNullDist}.

\subsection{Asymptotic Properties}
In this section, we study the asymptotic properties of $\widehat{t}_{\tau}(\Lambda)$ 
in the $\mathcal{L}^2$ sense. 
From now on,
we denote the indicator function by $\mathbb{1}(\cdot)$.
The following condition is required for developing $\mathcal{L}^2$ convergence results.

\begin{condition}\label{cond-UI}
For any non-empty and non-identical $S_1,S_2\subseteq \{1,\ldots, m\}$, denote $\widehat{T}_{S_1, S_2}(n) = \widehat{T}_{S_1, S_2}$ as a sequence indexed by the sample size $n$. 
The sequence $\{ \widehat{T}_{S_1, S_2}^{2\tau}(n) : n\in\mathbb{N} \}$ is assumed uniformly integrable, that is, 
\[
	\lim_{C\rightarrow\infty} \sup_{n\in\mathbb{N}} \E\left[ \widehat{T}_{S_1, S_2}^{2\tau}(n) \mathbb{1}\{|\widehat{T}_{S_1, S_2}(n)|>C\} \right]
	=0.
\]	
\end{condition}

The following theorem derives 
the limits of $\E\{\widehat{t}_{\tau}(\Lambda)\}$ and $\Var\{\widehat{t}_{\tau}(\Lambda)\}$ 
as $n\rightarrow\infty$.

\begin{theorem}\label{thm:limit_EJ}  
Assume Conditions \ref{cond-Normality}--\ref{cond-UI}. 
Let $m>1$, $\Lambda\subseteq\mathcal{L}$ and $\tau\in\mathbb{N}$. 
Then the following hold.  
(1) $\E\{\widehat{t}_{\tau}(\Lambda)\}\rightarrow t_{\tau}$ as $n\rightarrow\infty$. 
(2)  For $(S_1,S_2),(S_3,S_4)\in \Lambda$, define 
			\begin{eqnarray*}
				\rho(S_1,S_2,S_3,S_4) &:=& (s_1+s_2-2s_{12})^{-1/2}(s_3+s_4-2s_{34})^{-1/2}\\
					&&\qquad \times\left(\frac{s_{13}}{s_1s_3}-\frac{s_{14}}{s_1s_4}-\frac{s_{23}}{s_2s_3}+\frac{s_{24}}{s_2s_4}\right), 
			\end{eqnarray*}
			where $s_a=|S_a|$ and $s_{ab}=|S_a\cap S_b|$ for $a,b\in\{1,2,3,4\}$.
			Then 
			\begin{align}\label{eqt:limVarJbar}
				\Var\{\widehat{t}_{\tau}(\Lambda)\} \rightarrow \frac{1}{|\Lambda|^2}\sum_{(S_1,S_2)\in\Lambda}\sum_{(S_3,S_4)\in\Lambda} \rho^2(S_1,S_2,S_3,S_4) C_{\tau}(S_1,S_2,S_3,S_4), 
			\end{align}
			where $C_{\tau}(S_1,S_2,S_3,S_4)$ satisfies 
			$
				\sup_{(S_1,S_2),(S_3,S_4)\in\Lambda} |C_{\tau}(S_1,S_2,S_3,S_4)| \leq C_{\tau}
			$
			for some finite $C_{\tau}$ which depends on $\tau,r_1,\ldots, r_k$. 
			In particular, $C_{1}(S_1,S_2,S_3,S_4) = 2R_2$.

\end{theorem}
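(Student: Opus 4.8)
The plan is to work directly from the representation $\widehat{t}_{\tau}(\Lambda) = |\Lambda|^{-1}\sum_{(S_1,S_2)\in\Lambda}\widehat{T}_{S_1,S_2}^{\tau}$ and to transfer the weak convergence of the joint family $\{\widehat{T}_{S_1,S_2} : (S_1,S_2)\in\Lambda\}$ from Proposition \ref{prop:limit_of_J} (1) into $\mathcal{L}^2$ convergence using Condition \ref{cond-UI}. First I would observe that by the continuous mapping theorem, jointly over all $(S_1,S_2)\in\Lambda$, the vector $(\widehat{T}_{S_1,S_2}^{\tau})$ converges weakly to $(\mathbb{T}_{S_1,S_2}^{\tau})$, where the limit variables are built from the common Gaussians $Z_{j\ell}$ as in \eqref{eqt:limit_of_J}. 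Condition \ref{cond-UI} asserts uniform integrability of $\widehat{T}_{S_1,S_2}^{2\tau}(n)$, which dominates both $\widehat{T}_{S_1,S_2}^{\tau}(n)$ and, after a Cauchy--Schwarz step, each product $\widehat{T}_{S_1,S_2}^{\tau}(n)\widehat{T}_{S_3,S_4}^{\tau}(n)$; hence weak convergence upgrades to convergence of first and second moments. This gives $\E\{\widehat{t}_{\tau}(\Lambda)\}\to |\Lambda|^{-1}\sum \E(\mathbb{T}_{S_1,S_2}^{\tau}) = t_{\tau}$ by Proposition \ref{prop:limit_of_J} (2)--(3), proving part (1).

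For part (2), I would expand the variance as
\[
	\Var\{\widehat{t}_{\tau}(\Lambda)\} = \frac{1}{|\Lambda|^2}\sum_{(S_1,S_2)\in\Lambda}\sum_{(S_3,S_4)\in\Lambda}\Cov\{\widehat{T}_{S_1,S_2}^{\tau},\widehat{T}_{S_3,S_4}^{\tau}\},
\]
and by the moment convergence just established each covariance tends to $\Cov(\mathbb{T}_{S_1,S_2}^{\tau},\mathbb{T}_{S_3,S_4}^{\tau})$. The core computation is then to identify this limiting covariance. From \eqref{eqt:limit_of_J}, $\mathbb{T}_{S_1,S_2} = \sum_{j=1}^k r_j X_j$ where $X_j := c_{12}\{\bar Z_{j(S_1)} - \bar Z_{j(S_2)}\}^2$ with $c_{12} := |S_1||S_2|/(|S_1|+|S_2|-2|S_1\cap S_2|)$; one checks $\bar Z_{j(S_1)} - \bar Z_{j(S_2)}$ is mean-zero Gaussian with variance $c_{12}^{-1}$, so $X_j\sim\chi^2_1$. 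For the pair $(S_3,S_4)$ the analogous centred Gaussian $\bar Z_{j(S_3)}-\bar Z_{j(S_4)}$ has covariance with $\bar Z_{j(S_1)}-\bar Z_{j(S_2)}$ equal to $\sum_\ell (\mathbb 1_{S_1}(\ell)/s_1 - \mathbb 1_{S_2}(\ell)/s_2)(\mathbb 1_{S_3}(\ell)/s_3 - \mathbb 1_{S_4}(\ell)/s_4) = s_{13}/(s_1 s_3) - s_{14}/(s_1 s_4) - s_{23}/(s_2 s_3) + s_{24}/(s_2 s_4)$, which after normalising by the two standard deviations $c_{12}^{-1/2}$ and $c_{34}^{-1/2}$ is exactly $\rho(S_1,S_2,S_3,S_4)$. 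Using independence across $j$ and Wick's formula / the covariance of two jointly Gaussian squared variables ($\Cov(A^2,B^2) = 2\,\Cov(A,B)^2$ when $A,B$ are mean-zero jointly Gaussian with unit variance), the leading term factorises as $\rho^2(S_1,S_2,S_3,S_4)$ times a quantity $C_\tau(S_1,S_2,S_3,S_4)$ depending on the $r_j$'s and on $\tau$ through the higher moments of the $\chi^2$ pieces; for $\tau=1$ this reduces to $2\sum_j r_j^2 = 2R_2$, giving $C_1 = 2R_2$. The uniform bound $|C_\tau|\le C_\tau$ follows since $\rho$ is a normalised correlation and hence $|\rho|\le 1$, so the moments of $\mathbb{T}_{S_1,S_2}^{\tau}$ entering $C_\tau$ are bounded by fixed functions of $\tau$ and $r_1,\ldots,r_k$ only.

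The main obstacle I anticipate is the bookkeeping in the $\tau\ge 2$ case: the covariance $\Cov(\mathbb{T}_{S_1,S_2}^{\tau},\mathbb{T}_{S_3,S_4}^{\tau})$ is no longer simply proportional to $\rho^2$, and one must argue that $C_\tau(S_1,S_2,S_3,S_4)$ can be \emph{defined} so that the displayed identity \eqref{eqt:limVarJbar} holds with a \emph{uniformly bounded} remainder. The clean way is to set $C_\tau := \rho^{-2}\Cov(\mathbb{T}_{S_1,S_2}^{\tau},\mathbb{T}_{S_3,S_4}^{\tau})$ when $\rho\ne 0$ (and $0$ otherwise, which is consistent since $\rho=0$ forces the two limit variables to be independent, hence zero covariance), and then show via a Gaussian-chaos / Mehler-kernel expansion that this ratio stays bounded uniformly in $(S_1,S_2,S_3,S_4)$: the correlation between the underlying Gaussian increments is $\rho$, so every nonzero chaos contribution to the covariance carries at least a factor $\rho^2$, and summing the resulting series (finite since we work with $\chi^2_1$ marginals raised to the fixed power $\tau$) yields a bound in terms of $\tau$ and $\|r\|$ only. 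Everything else — the upgrade from weak to $\mathcal{L}^2$ convergence and the reduction of $\E\{\widehat{t}_\tau\}$ and $\Var\{\widehat{t}_\tau\}$ to limiting-variable computations — is routine given Condition \ref{cond-UI}.
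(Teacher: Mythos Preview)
Your proposal is correct and follows essentially the same route as the paper: upgrade the joint weak convergence of Proposition~\ref{prop:limit_of_J} to moment convergence via Condition~\ref{cond-UI}, reduce $\Var\{\widehat{t}_\tau(\Lambda)\}$ to the double sum of $\Cov(\mathbb{T}_{S_1,S_2}^\tau,\mathbb{T}_{S_3,S_4}^\tau)$, identify $\rho(S_1,S_2,S_3,S_4)$ as the correlation of the underlying Gaussian increments, and then factor out $\rho^2$.

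The only difference is in how the $\rho^2$ factor is extracted for general $\tau$. The paper makes this concrete: it Cholesky-decomposes the second increment as $\rho G_j+(1-\rho^2)^{1/2}W_j$, sets $Q_1=\sum_j r_jG_j^2$, $Q_2=\sum_j r_jW_j^2$, $Q_{12}=\sum_j r_jG_jW_j$, and expands $\bigl(\rho^2Q_1+(1-\rho^2)Q_2+2\rho(1-\rho^2)^{1/2}Q_{12}\bigr)^\tau$ by the trinomial theorem; the $\rho^0$ term $Q_2^\tau$ is independent of $Q_1^\tau$ and the $\rho^1$ term $Q_2^{\tau-1}Q_{12}$ is odd in the $W_j$'s, so both have zero covariance with $Q_1^\tau$, leaving $\rho^2$ times an explicit finite sum $C_\tau(S_1,S_2,S_3,S_4)$ whose uniform boundedness is then immediate because $|\rho|\le1$ and the remaining covariances $\Cov(Q_1^\tau,Q_1^bQ_2^{\tau-a}Q_{12}^{a-b})$ do not depend on $(S_1,\ldots,S_4)$. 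Your Hermite/Mehler argument reaches the same conclusion (even functions of a standard Gaussian have no first-order Hermite component, so the covariance series starts at $\rho^2$), but the paper's trinomial expansion is more explicit and makes the uniform bound transparent without invoking chaos machinery.
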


From Theorem \ref{thm:limit_EJ}, the choice of $\Lambda$ only affects the limit of $\Var\{\widehat{t}_{\tau}(\Lambda)\}$ but not the limit of $\E\{\widehat{t}_{\tau}(\Lambda)\}$. 
The asymptotic variance of $\widehat{t}_{\tau}(\Lambda)$ is affected by 
two factors: (i) $1/|\Lambda|^2$ and (ii) the double summation in (\ref{eqt:limVarJbar}). 
If $|\Lambda|$ is too small, the first factor $1/|\Lambda|$ is large.  
On the other hand, if $|\Lambda|$ is too large, 
then 
$\{ \widehat{T}_{S_1, S_2} : (S_1,S_2)\in\Lambda\}$ may be highly correlated
in the sense that  
most of the $\rho(S_1,S_2,S_3,S_4)$'s in (\ref{eqt:limVarJbar}) are large.
Hence, $\Lambda$ needs to be chosen carefully 
so that the estimator $\widehat{t}_{\tau}(\Lambda)$ has a small asymptotic variance. 
Theorem \ref{thm:limit_EJ} is easy to use.
Once the set $\Lambda$ is fixed, one can easily compute 
$\rho(S_1,S_2,S_3,S_4)$. 
By simple counting, the order of magnitude of the asymptotic variance of  
$\widehat{t}_{\tau}(\Lambda)$ can also be found.
In particular, we show
that $\widehat{t}_{\tau}(\Lambda_{\Jack})$, $\widehat{t}_{\tau}(\Lambda_{\Full})$ and $\widehat{t}_{\tau}(\Lambda_{\Pair})$
in Example \ref{eg:Lambda}
are all good estimators of $t_{\tau}$.

\begin{corollary}\label{coro:example_var}
Define $\Lambda_{\Jack}$, $\Lambda_{\Full}$ and $\Lambda_{\Pair}$ according to (\ref{eqt:LambdaEg}).
Assume Conditions \ref{cond-Normality}--\ref{cond-UI}. 
For any $\mathcal{S}\in\{\Jack, \Full, \Pair\}$ and any $\tau\in\{1,\ldots, k\}$,  
we have
$\Var\{\widehat{t}_{\tau}(\Lambda_{\mathcal{S}})\} \rightarrow V_{\mathcal{S},\tau}(m)$
as $n\rightarrow\infty$,
where $V_{\mathcal{S},\tau}(m)=O(1/m)$ as $m\rightarrow\infty$.
In particular, $V_{\Jack, 1}(m) = V_{\Full, 1}(m) =V_{\Pair, 1}(m) = 2R_2/(m-1)$.
\end{corollary}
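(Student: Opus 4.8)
The plan is to obtain the corollary as a direct specialization of Theorem~\ref{thm:limit_EJ}. First I would check the routine point that each of $\Lambda_{\Jack},\Lambda_{\Full},\Lambda_{\Pair}$ is a subset of $\mathcal{L}$ when $m>1$ (all the index sets are non-empty and, within each pair, distinct), so that (\ref{eqt:limVarJbar}) applies with $\Lambda=\Lambda_{\mathcal{S}}$. The substance of the proof is then a combinatorial bookkeeping of the double sum in (\ref{eqt:limVarJbar}): for a fixed rule, enumerate the configurations of a pair $\big((S_1,S_2),(S_3,S_4)\big)\in\Lambda_{\mathcal{S}}\times\Lambda_{\mathcal{S}}$ according to how the underlying indices overlap, and for each configuration tabulate the cardinalities $s_a=|S_a|$ and the intersection sizes $s_{ab}=|S_a\cap S_b|$. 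For $\Lambda_{\Jack}$ and $\Lambda_{\Full}$ there are only two configurations, the diagonal ($\ell=\ell'$) and the off-diagonal ($\ell\neq\ell'$); for $\Lambda_{\Pair}$ one must further split according to whether the two unordered index pairs share two, one, or zero elements, and in the one-shared case record which of the four terms in the bracket of $\rho$ survives, since this fixes its sign.

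Next I would substitute these tables into the definition of $\rho(S_1,S_2,S_3,S_4)$ and simplify. I expect $\rho^2$ to equal $1$ on the diagonal; to be $\Theta(1/m^2)$ for the off-diagonal configurations in $\Lambda_{\Jack},\Lambda_{\Full}$ and for the one-shared-index configurations in $\Lambda_{\Pair}$; and to vanish when the two members of $\Lambda$ share no index at all, which occurs only for $\Lambda_{\Pair}$. Counting configurations -- $m$ diagonal and $m(m-1)$ off-diagonal pairs for $\Lambda_{\Jack},\Lambda_{\Full}$; $m(m-1)/2$ diagonal, $\Theta(m^3)$ one-shared, and the rest zero for $\Lambda_{\Pair}$ -- the double sum collapses to an explicit rational function of $m$. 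Dividing by $|\Lambda_{\mathcal{S}}|^2$ (equal to $m^2$, $m^2$, and $m^2(m-1)^2/4$ respectively) yields $V_{\mathcal{S},\tau}(m)$, whose order $O(1/m)$ is read off directly. For $\tau=1$ the bound on $C_\tau$ in Theorem~\ref{thm:limit_EJ} is an equality, $C_1\equiv 2R_2$, so the three computations each give precisely $2R_2/(m-1)$; for $\tau\geq 2$ I would instead replace $C_\tau(S_1,S_2,S_3,S_4)$ by the uniform bound $C_\tau$, which only costs a constant factor and still leaves the $O(1/m)$ conclusion intact.

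The main obstacle I anticipate is the $\Lambda_{\Pair}$ case: correctly accounting for the signs of the four terms $s_{13}/(s_1s_3)-s_{14}/(s_1s_4)-s_{23}/(s_2s_3)+s_{24}/(s_2s_4)$ across all the one-shared-index subcases (which term is nonzero depends on whether the shared index occupies the first or the second slot of each pair), and counting those configurations without double-counting, is where a slip is most likely. Everything else -- the algebra simplifying $\rho$ and the final division by $|\Lambda_{\mathcal{S}}|^2$ -- is routine once the tables of $(s_a,s_{ab})$ are in hand.
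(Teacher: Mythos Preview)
Your plan is the same as the paper's: specialize Theorem~\ref{thm:limit_EJ} by tabulating $\rho$ over the configurations of each $\Lambda_{\mathcal{S}}$ and then sum. One point to correct: for $\Lambda_{\Pair}$ all four sets are singletons, so in the one-shared-index case the denominator $(s_1+s_2-2s_{12})^{1/2}(s_3+s_4-2s_{34})^{1/2}$ equals $2$, not something growing with $m$. Exactly one of $s_{13},s_{14},s_{23},s_{24}$ is $1$ and the rest are $0$, so $\rho=\pm 1/2$ and $\rho^2=1/4$, a constant, not $\Theta(1/m^2)$ as you anticipated. With the correct value the arithmetic for $\tau=1$ becomes
\[
\binom{m}{2}^{-2}\,2R_2\left[\binom{m}{2}\cdot 1 + m(m-1)(m-2)\cdot\tfrac14\right]
=\frac{4}{m^2(m-1)^2}\cdot 2R_2\cdot\frac{m^2(m-1)}{4}
=\frac{2R_2}{m-1},
\]
and for general $\tau$ the same count gives $O(m^3)/O(m^4)=O(1/m)$, so your conclusion survives. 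Your worry about tracking the sign of $\rho$ across the subcases is unnecessary: only $\rho^2$ enters (\ref{eqt:limVarJbar}), so the sign is irrelevant and the bookkeeping is lighter than you feared.
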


Corollary \ref{coro:example_var} shows that, asymptotically,   
the precision of $\widehat{t}_{\tau}(\Lambda_{\mathcal{S}})$ 
increases with $m$ for any selection rule $\mathcal{S}\in\{\Jack, \Full, \Pair\}$.
Together with Theorem \ref{thm:limit_EJ} (1), 
the mean squared error (MSE) of $\widehat{t}_{\tau}(\Lambda_{\mathcal{S}})$, that is, 
$\MSE\{\widehat{t}_{\tau}(\Lambda_{\mathcal{S}})\} := \E\{\widehat{t}_{\tau}(\Lambda_{\mathcal{S}}) - t_{\tau}\}^2$ 
decreases in the order of $O(1/m)$ as $m$ increases.

Unless otherwise stated, we use $\Lambda = \Lambda_{\Jack}$ by default. 
Then the estimator $\widehat{t}_{\tau}(\Lambda_{\Jack})$ is as simple as 
\[
	\widehat{t}_{\tau} := \frac{1}{m}\sum_{\ell=1}^m \widehat{T}_{\ell}^{\tau}, 
	\qquad \text{where} \qquad 
	\widehat{T}_{\ell}  
		:= \widehat{d}^{\{\ell\}} + (m-1)\widehat{d}^{\{-\ell\}} - m\widehat{d}^{\{1:m\}} .
\]

\section{Estimation of OMI}\label{sec:estimation}
This section proposes estimators for all individual $r_1, \ldots, r_k$. 
For notational simplicity, 
we denote $r_{1:k} = (r_1, \ldots, r_k)^{\T}$. 
We also abbreviate other variables similarly, 
for example, $t_{1:k} = (t_1, \ldots, t_k)^{\T}$.

From Proposition \ref{prop:limit_of_J}, 
$t_{1:k}$ are defined via the following two-step mapping:
\begin{eqnarray}\label{eqt:mapping_2steps}
	r_{1:k}
	\;\overset{M_1}{\mapsto}\; R_{1:k}
	\;\overset{M_2}{\mapsto}\; t_{1:k},
\end{eqnarray}
where the maps $M_1$ and $M_2$ are defined according to (\ref{eqt:R_Rstar}) and (\ref{eqt:def_q}), 
respectively. 
From Theorem \ref{thm:limit_EJ}, 
$\widehat{t}_{1:k}$ are good estimators of $t_{1:k}$.
Our goal is to estimate $r_{1:k}$,
through ``reverse engineering'' the two-step transformation (\ref{eqt:mapping_2steps}). 
However, it is impossible unless $r_{1:k}$ is uniquely determined by $t_{1:k}$. 
The following proposition shows that  
(\ref{eqt:mapping_2steps}) is a one-to-one function, 
that is, the function inverse of $M_2(M_1(\cdot))$ always exists.

\begin{proposition}\label{prop:inverse}
Define the functions $M_1(\cdot)$ and $M_2(\cdot)$ according to (\ref{eqt:R_Rstar}) and (\ref{eqt:def_q}), 
respectively. 
\begin{enumerate}
	\item The inverse function $r_{1:k} = M_1^{-1}(R_{1:k})$ exists,
			that is, $r_{1:k}$ are uniquely determined by $R_{1:k}$. 
			Let
			\begin{eqnarray}
			A := 
			\left[ 
			\begin{array}{c|ccc}
				0_{k-1} & & I_{k-1}& \\ \hline
				{a}_1 & {a}_2 & \cdots & {a}_{k}  
			\end{array}
			\right],  \label{eqt:def_A}
		\end{eqnarray}
		where $a_k:=R_1$ and ${a}_{k-j+1} := ( {R}_j - \sum_{i=1}^{j-1} {R}_{i} {a}_{k-j+i+1} )/j$, 
		for $j=2,\ldots,k$.
		Then $r_j$ is the $j$th largest modulus of the eigenvalue of $A$ for $j=1, \ldots,k$.

	\item The inverse function $R_{1:k} = M_2^{-1}(t_{1:k})$ exists,
			that is, $R_{1:k}$ are uniquely determined by $t_{1:k}$ as follows: 
			$R_1 = t_1$ and 
			\begin{align}\label{eqt:estimator_Rstar}
				{R}_{\tau} = \frac{t_{\tau}}{(\tau-1)!2^{\tau-1}} - \sum_{j=1}^{\tau-1} \frac{t_{\tau-j}{R}_j}{(\tau-j)! 2^{\tau-j}}, \qquad \tau=2, \ldots,k.  
			\end{align}			
\end{enumerate}
\end{proposition}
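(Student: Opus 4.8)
The plan is to establish the two parts separately, since they concern different maps, and part (2) is essentially a routine triangular-system inversion while part (1) requires an algebraic trick involving companion matrices.

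For part (2), I would simply read off the recursion (\ref{eqt:def_q}) defining $M_2$. Isolating the $j=\tau$ term (which carries the factor $(\tau-1)! 2^{\tau-1} R_\tau t_0 = (\tau-1)! 2^{\tau-1} R_\tau$, using $t_0 = 1$) and moving all other terms to the other side gives exactly (\ref{eqt:estimator_Rstar}). Since the system is lower-triangular in $(R_1, \dots, R_k)$ with nonzero diagonal coefficients $(\tau-1)! 2^{\tau-1}$, the solution is unique and is obtained by forward substitution: $R_1 = t_1$, then $R_2$ from $t_1, t_2$, and so on. No nondegeneracy beyond $(\tau-1)! 2^{\tau-1} \neq 0$ is needed, so this direction is immediate.

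For part (1), the goal is to recover $r_{1:k}$ from the power sums $R_\tau = \sum_j r_j^\tau$. The classical route is Newton's identities: the power sums $R_1, \dots, R_k$ determine the elementary symmetric polynomials $e_1, \dots, e_k$ of $r_1, \dots, r_k$, hence determine the monic polynomial $P(x) = \prod_{j=1}^k (x - r_j) = x^k - e_1 x^{k-1} + \cdots + (-1)^k e_k$, whose roots are exactly $r_1, \dots, r_k$. The matrix $A$ in (\ref{eqt:def_A}) is (a sign-normalized version of) the companion matrix of this polynomial: its characteristic polynomial is $P$, so the moduli of its eigenvalues are $r_1 \geq \cdots \geq r_k$ (all $r_j \geq 0$ since they are odds of missing information, so "largest modulus" and "largest value" agree). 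The work is to check that the specific entries $a_k = R_1$ and $a_{k-j+1} = (R_j - \sum_{i=1}^{j-1} R_i a_{k-j+i+1})/j$ are exactly what Newton's identities produce for the coefficients of the companion matrix. Concretely, Newton's identity in the form $R_j = \sum_{i=1}^{j-1} e_i' R_{j-i} \cdots$ (with appropriate signs) can be rearranged into the stated recursion for the $a$'s once one identifies $a_{k-j+1}$ with the relevant coefficient of $P$; the factor $1/j$ on the right is the hallmark of the $j e_j$ term in Newton's identity $j e_j = \sum_{i=1}^{j} (-1)^{i-1} e_{j-i} R_i$.

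The main obstacle is bookkeeping the signs and index shifts so that the bottom row of $A$ in (\ref{eqt:def_A}) really is the companion-matrix row for $P(x)$. I would proceed by induction on $k$ (or on $j$): assume $a_k, \dots, a_{k-j+2}$ have been shown to equal the correct coefficients (up to sign) of $x^{k-1}, \dots, x^{k-j+1}$ in $P$, then verify that Newton's $j$-th identity forces $a_{k-j+1}$ to take the claimed value. Once the identification of $A$ with the companion matrix is secured, the eigenvalue statement is a standard fact (the characteristic polynomial of a companion matrix equals the polynomial it encodes), and the one-to-one claim for $M_1$ follows because the multiset of roots of $P$ is determined by $P$, hence by $R_{1:k}$. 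Combining with part (2), $M_2 \circ M_1$ is one-to-one with the explicit inverse $r_{1:k} = M_1^{-1}(M_2^{-1}(t_{1:k}))$, matching the formula used in Algorithm \ref{algo:SMI_exact}.
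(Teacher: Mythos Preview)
Your proposal is correct and follows essentially the same approach as the paper: part (2) is handled by isolating $R_\tau$ in the lower-triangular recursion (\ref{eqt:def_q}) and forward-substituting, and part (1) invokes Newton's identities to pass from power sums to the coefficients of $\prod_j(x-r_j)$, whose companion matrix is $A$. The paper's own proof is in fact terser than yours---it simply cites Newton's identities (referring to Kalman, 2000) for part (1) and notes the triangular linear system for part (2)---so your added detail on the companion-matrix bookkeeping and the role of the $1/j$ factor from $j e_j = \sum_{i=1}^{j}(-1)^{i-1}e_{j-i}R_i$ is a welcome elaboration rather than a departure.
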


Proposition \ref{prop:inverse} implies 
that the parameters $r_{1:k}$ are identifiable via 
the moment conditions $\E(\mathbb{T}^{\tau})=t_{\tau}$ ($\tau = 1,\ldots, k$)
in such a way that
$r_{1:k} = M_1^{-1}\left(M_2^{-1}(t_{1:k})\right)$.
Since the function $M_1^{-1}(M_2^{-1}(\cdot))$ is continuous and 
does not depend on any unknown,
we can estimate $R_{1:k}$ and  $r_{1:k}$ by
\begin{eqnarray}\label{eqt:initialEst_rR}
	\widehat{R}_{1:k}
	:= M_2^{-1}(\widehat{t}_{1:k})
	\qquad \text{and} \qquad
	\widehat{r}_{1:k}
	:= M_1^{-1} (\widehat{R}_{1:k}),
\end{eqnarray}
respectively. 
The corollary below states the large-$n$ asymptotic MSEs of $\widehat{r}_{j}$.

\begin{corollary}\label{coro:rHat}
Let $\Lambda$ be either $\Lambda_{\Jack}$, $\Lambda_{\Full}$ or $\Lambda_{\Pair}$. 
Under Conditions \ref{cond-Normality}--\ref{cond-UI}, as $n\rightarrow\infty$, 
$\MSE(\widehat{r}_j) := \E(\widehat{r}_j- r_j)^2 \rightarrow V(m)$, 
where $V(m)$ is a function of $m$ such that $V(m)\rightarrow 0$ as $m\rightarrow\infty$.
\end{corollary}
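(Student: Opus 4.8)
The plan is to combine the $\mathcal{L}^2$ convergence of $\widehat{t}_{1:k}$ from Theorem \ref{thm:limit_EJ} and Corollary \ref{coro:example_var} with the fact, established in Proposition \ref{prop:inverse}, that $r_{1:k} = M_1^{-1}(M_2^{-1}(t_{1:k}))$ for a composite map that does not depend on any unknown. The key structural observation is that $\widehat{r}_{1:k}$ is obtained by applying exactly this same fixed map to $\widehat{t}_{1:k}$, so $\widehat{r}_{1:k} - r_{1:k} = \Phi(\widehat{t}_{1:k}) - \Phi(t_{1:k})$ with $\Phi := M_1^{-1} \circ M_2^{-1}$. First I would record from Theorem \ref{thm:limit_EJ}(1) and Corollary \ref{coro:example_var} that, for each $\mathcal{S}\in\{\Jack,\Full,\Pair\}$ and each $\tau\in\{1,\dots,k\}$, $\E\{\widehat{t}_\tau(\Lambda_{\mathcal{S}})\}\to t_\tau$ and $\Var\{\widehat{t}_\tau(\Lambda_{\mathcal{S}})\}\to V_{\mathcal{S},\tau}(m)$ with $V_{\mathcal{S},\tau}(m)=O(1/m)$; in particular $\MSE\{\widehat{t}_\tau(\Lambda_{\mathcal{S}})\}\to V_{\mathcal{S},\tau}(m)\to 0$ as first $n\to\infty$ and then $m\to\infty$. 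Hence $\widehat{t}_{1:k}\inP t_{1:k}$ as $n\to\infty$ (for fixed $m>1$).

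Next I would establish continuity and local boundedness of $\Phi$ at the true point $t_{1:k}$. The map $M_2^{-1}$ is a polynomial (indeed affine-triangular in the components, by \eqref{eqt:estimator_Rstar}), hence globally smooth. The map $M_1^{-1}$ sends $R_{1:k}$ to the moduli of the eigenvalues of the companion-type matrix $A$ in \eqref{eqt:def_A}, whose entries are polynomials in $R_{1:k}$; eigenvalues depend continuously on matrix entries, and on a neighbourhood of the true $R_{1:k}$ — where the $r_j$ are the genuine distinct-or-not, but real and nonnegative, eigenvalues — one can choose the ordered-modulus selection to be continuous. So $\Phi$ is continuous at $t_{1:k}$, and by shrinking to a compact neighbourhood $\mathcal{N}$ of $t_{1:k}$ it is also Lipschitz on $\mathcal{N}$ with some constant $L=L(r_{1:k})<\infty$; write the Lipschitz bound as $\|\Phi(x)-\Phi(t_{1:k})\|\le L\|x-t_{1:k}\|$ for $x\in\mathcal{N}$. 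I would also fix a crude global bound: on the complement of $\mathcal{N}$ the eigenvalue moduli of $A$ are controlled by $\|R_{1:k}\|$, which is a polynomial in $\widehat{t}_{1:k}$, so $\|\widehat{r}_{1:k}\|^2$ is dominated by a polynomial in $\|\widehat{t}_{1:k}\|$; Condition \ref{cond-UI} (uniform integrability of $\widehat{T}_{S_1,S_2}^{2\tau}$, hence of all lower even powers) gives uniform integrability of that polynomial, so the contribution from the event $\{\widehat{t}_{1:k}\notin\mathcal{N}\}$ to $\E\|\widehat{r}_{1:k}-r_{1:k}\|^2$ is $o(1)$ as $n\to\infty$ because $\pr(\widehat{t}_{1:k}\notin\mathcal{N})\to 0$.

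Putting the pieces together: split $\E\|\widehat{r}_{1:k}-r_{1:k}\|^2$ over $\{\widehat{t}_{1:k}\in\mathcal{N}\}$ and its complement. On the good event the Lipschitz bound gives $\E\big[\|\widehat{r}_{1:k}-r_{1:k}\|^2\mathbb{1}(\widehat{t}_{1:k}\in\mathcal{N})\big]\le L^2\,\E\|\widehat{t}_{1:k}-t_{1:k}\|^2 = L^2\sum_{\tau=1}^k\MSE\{\widehat{t}_\tau(\Lambda)\} \to L^2\sum_{\tau=1}^k V_{\mathcal{S},\tau}(m) =: V(m)$, where the limit is first in $n$; on the bad event the contribution vanishes by the uniform-integrability argument above. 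Therefore $\limsup_{n\to\infty}\MSE(\widehat{r}_j)\le V(m)$ for each $j$, and by Corollary \ref{coro:example_var} each $V_{\mathcal{S},\tau}(m)=O(1/m)$, so $V(m)\to 0$ as $m\to\infty$, which is the claim. The main obstacle is the second step — producing a genuinely rigorous continuity/Lipschitz statement for the eigenvalue-modulus map $M_1^{-1}$ near the true parameter, since eigenvalues of a general matrix are only Hölder-$1/k$ continuous at points with repeated eigenvalues; one resolves this either by appealing to the fact that $A$'s eigenvalues are real (so ordering by modulus is locally Lipschitz once one works with the sorted real spectrum and uses that real eigenvalues of a real matrix move Lipschitz-continuously along the parameter, cf.\ perturbation bounds for symmetrizable families) or, more robustly, by passing through the elementary symmetric functions $R_{1:k}$ of the $r_j$ themselves — which are exactly what $M_2^{-1}(\widehat{t}_{1:k})$ estimates — and invoking continuity of the roots of a monic polynomial as an unordered multiset, then noting that the sorted-modulus selection is continuous on the (closed) region of real nonnegative roots where the true parameter lives.
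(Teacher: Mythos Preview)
Your approach is essentially the paper's: both recognize that $\widehat{r}_{1:k}-r_{1:k}=\Phi(\widehat{t}_{1:k})-\Phi(t_{1:k})$ with $\Phi:=M_1^{-1}\circ M_2^{-1}$ and then push the $\mathcal{L}^2$ bounds on $\widehat{t}_{\tau}-t_{\tau}$ from Theorem~\ref{thm:limit_EJ} and Corollary~\ref{coro:example_var} through a local regularity estimate on $\Phi$. The paper does this by asserting that $\Phi$ is differentiable at $t_{1:k}$, writing the first-order Taylor expansion, and bounding the $\mathcal{L}^2$ norm termwise via Minkowski (invoking Condition~\ref{cond-UI} to control the remainder); you instead use a local Lipschitz bound on a compact neighbourhood together with an explicit good-event/bad-event split, with Condition~\ref{cond-UI} supplying the uniform integrability needed to kill the tail. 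These are the same argument at slightly different levels of detail; your tail treatment is actually more explicit than the paper's.

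Your flagging of the eigenvalue-regularity issue for $M_1^{-1}$ is well taken---the paper simply asserts differentiability of $\Phi$ at $t_{1:k}$ without addressing what happens when some $r_j$ coincide---but note that for the corollary as stated you do not need to fully resolve Lipschitz continuity. Since $M_2^{-1}$ is polynomial and the roots of a monic degree-$k$ polynomial depend H\"older-$1/k$ continuously on its coefficients, on the good event you already have $\|\widehat{r}_{1:k}-r_{1:k}\|^2\lesssim\|\widehat{R}_{1:k}-R_{1:k}\|^{2/k}$; combining with Jensen and Corollary~\ref{coro:example_var} gives a limiting bound of order $O(m^{-1/k})\to 0$, which is all the corollary claims.
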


Corollary \ref{coro:rHat} guarantees that the estimators $\widehat{r}_{1:k}$ have 
small MSEs when $m,n$ are sufficiently large. 
The step-by-step procedure for computing $\widehat{r}_{1:k}$ 
with $\Lambda = \Lambda_{\Jack}$
is shown in Algorithm \ref{algo:SMI_exact}.
This algorithm is user-friendly for analysts 
as only the complete-data testing device $X\mapsto \mathcal{d}(X)$ is required. 
It is, indeed, the minimal requirement even for compete-data testing.

Besides, it is sometimes informative to summarize $r_{1}, \ldots, r_k$ through their mean and variance, that is,  
\begin{equation*}\label{eqt:moments_r}
	\mu_r := \frac{1}{k}\sum_{j=1}^k r_j  
	\qquad \text{and}  \qquad
	\sigma^2_r := \frac{1}{k}\sum_{j=1}^k(r_j - \mu_r)^2  .
\end{equation*}
These two values are required for approximating 
the limiting null distribution $\mathbb{D}$ in (\ref{eqt:representation_D_limit}) for testing $H_0$, 
for example, T2 and T3 in Table \ref{tab:approxNullDist}; 
see \citet{mengPhDthesis}, \cite{li91JASA}, and \cite{mengRubin92} for details. 
In Section \ref{sec:estimator_of_moments_r} of the supplementary note, 
we show that  
\begin{equation}\label{eqt:estimator_mur_sigmar}
	\widehat{\mu}_r := \frac{\widehat{t}_1}{k}
	\qquad \text{and} \qquad
	\widehat{\sigma}_{r}^2 
		:=  \frac{\left\{k(m-1)+2\right\} \widehat{t}_2
			-(m-1)(k+2)\widehat{t}_1^2}{2k^2(m-2)} 
\end{equation}
are asymptotically unbiased estimators of $\mu_r$ and $\sigma^2_r$, respectively.  
The precise statement and their properties are deferred to the supplementary note
due to space constraint. 
There are two immediate applications of (\ref{eqt:estimator_mur_sigmar}).
\begin{enumerate}
	\item The estimators $\widehat{\mu}_r$ and $\widehat{\sigma}_{r}^2$
			can be used to compute the approximated null distributions of T2 and T3 in Table \ref{tab:approxNullDist} 
			as the distributions depend only on $\mu_r$ and $\sigma^2_r$. 
			We emphasize that the original approximations T2 \citep{li91JASA} and T3 \citep{mengPhDthesis}
			work for Wald's test only. 
			Although \citet{mengRubin92} extended T2 to LR test, 
			T2 and T3 were still inapplicable to RS test.
			With the proposed estimators in  
			(\ref{eqt:estimator_mur_sigmar}), 
			the generalized T2 and T3 support Wald's, LR and RS tests.
	\item The statistic $\widehat{\sigma}_{r}^2/\widehat{\mu}_r^2$ estimates the squared 
			coefficient of variation of $\{r_j\}$.
			It can be used to construct a formal test 
			for the validity of EOMI; see
			Section \ref{sec:testEOMI} of the supplementary note for details.
\end{enumerate}

We conclude this subsection with an example.

\begin{example}\label{eg:rHat}
Let $r_1, \ldots, r_k$ be evenly spread in $[0.1, r_{\max}]$, that is,  
\begin{eqnarray}\label{eqt:delta_L25}
	r_j = 0.1+(r_{\max}-0.1)\frac{j-1}{k-1}, \qquad j=1, \ldots, k,
\end{eqnarray}
where $r_{\max} \in \{0.1, 0.2, \ldots, 0.9\}$ and $k\in\{2,4,6\}$.  
The coefficient of variation (CV) of $r_{1:k}$
increases with the value of $r_{\max}$. 
We evaluate $\widehat{r}_{1:k}$ under $H_0$ and $H_1$. 
Under $H_0$, $\delta_1=\cdots = \delta_k = 0$ in (\ref{eqt:representation_of_d}).
Under $H_1$, we set $\delta_1=\cdots = \delta_k = 1$ in particular.
In the simulation experiments, we assume that the sample size $n\rightarrow \infty$
but the number of imputation $m$ is fixed.
So, the experiments assess solely the performance of the MI procedure  
instead of the performance of the large-$n$ $\chi^2$-approximation (\ref{eqt:comTestchisq}). 
Applications and simulation experiments with finite $n$ are studied in Section \ref{sec:applications}.

\begin{figure}
\begin{center}
\includegraphics[width=\linewidth]{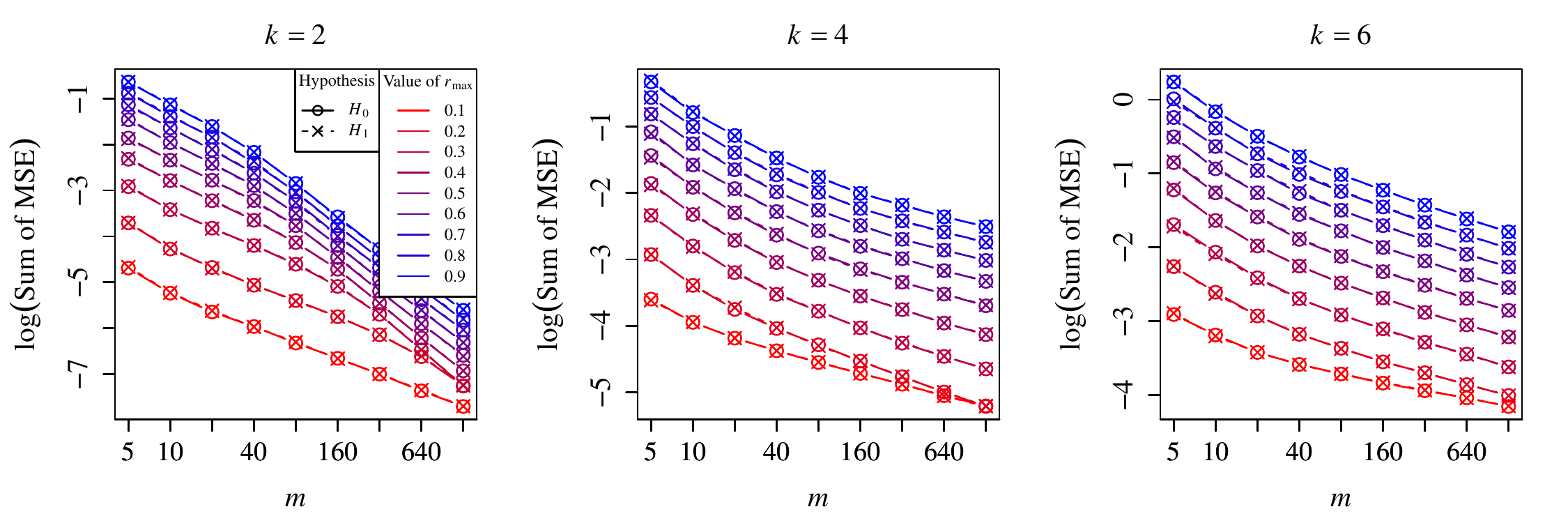}
\end{center}
\vspace{-0.5cm}
\caption{\footnotesize The log of sum of MSEs of $\widehat{r}_{1:k}$, i.e.,
$\log \sum_{j=1}^k \E(\widehat{r}_j - r_j)^2$; see Example \ref{eg:rHat}. } 
\label{fig:SMI_L025-030_ind_smallk}
\end{figure}

The sum of the MSEs of $\widehat{r}_{1:k}$, that is, 
$E := \sum_{j=1}^k \E(\widehat{r}_j - r_j)^2$
is shown in Figure \ref{fig:SMI_L025-030_ind_smallk}. 
The values of $E$ under $H_0$ and $H_1$ are nearly identical, implying  
that it is safe to use $\widehat{r}_{1:k}$,
no matter $H_0$ is true or not. 
Second, the value of $E$ decreases when $m$ increases. 
It verifies Corollary \ref{coro:rHat}. 
However, the performance of $\widehat{r}_{1:k}$ declines when $r_{\max}$ or $k$ increases. 
It is reasonable as the CV of $r_{1:k}$ increases with $r_{\max}$, 
and the number of estimands ($r_{1:k}$) increases with $k$.
In either case, the estimation problem is harder by nature.
\end{example}

\section{General multiple imputation procedures}\label{sec:testing}
\subsection{Hypothesis testing of model parameters}
We denote $\widehat{D}$ by $\widehat{D}_{\wt},\widehat{D}_{\lrt},\widehat{D}_{\rt}$ 
to emphasize that 
$\mathcal{d}=\mathcal{d}_{\wt},\mathcal{d}_{\lrt},\mathcal{d}_{\rt}$ is used, respectively.
The limiting distribution of $\widehat{D}$ is stated below.

\begin{proposition}\label{prop:limit_of_Dhat}
Assume Conditions \ref{cond-Normality}--\ref{cond-properImp}.  
Let $\widehat{D}\in \{\widehat{D}_{\wt}, \widehat{D}_{\lrt},\widehat{D}_{\rt}\}$ and $m>1$. 
Under $H_0$, we have, as $n\rightarrow\infty$, that 
(1) $\widehat{D}- \widetilde{D} \inP 0$, 
			where 
			$\widetilde{D}\in\{\widetilde{D}_{\wt},\widetilde{D}_{\lrt}\}$; and 
(2) $\widehat{D} \inD \mathbb{D}$, where 
			$\mathbb{D}$ is defined in  (\ref{eqt:representation_D_limit}).
\end{proposition}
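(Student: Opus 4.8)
The plan is to prove both claims at once, by pinning down the exact \emph{joint} weak limit of the two ingredients of $\widehat{D}$ under $H_0$ and then invoking the continuous mapping theorem; beyond Theorem~\ref{thm:asy_Distof_bar_d_S} (specialised to $\delta_1=\cdots=\delta_k=0$) nothing is needed but elementary Gaussian algebra. Fixing the default $\Lambda=\Lambda_{\Jack}$, I would first rewrite $\widehat{D}=(\widehat{d}^{\{1:m\}}/k)/\{1+\tfrac1k(1+\tfrac1m)\widehat{t}_1\}$, where $\widehat{t}_1=m^{-1}\sum_{\ell}\widehat{T}_\ell$ and $\widehat{T}_\ell=\widehat{d}^{\{\ell\}}+(m-1)\widehat{d}^{\{-\ell\}}-m\widehat{d}^{\{1:m\}}$, so everything reduces to the joint limit of $(\widehat{d}^{\{1:m\}},\widehat{t}_1)$.

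For the numerator, Theorem~\ref{thm:asy_Distof_bar_d_S} gives $\widehat{d}^{\{1:m\}}\inD\sum_j\{(1+r_j)^{1/2}W_j+r_j^{1/2}\bar Z_{j(\{1:m\})}\}^2$; since $W_j$ and $\bar Z_{j(\{1:m\})}\sim\Normal(0,1/m)$ are independent, the $j$th summand has law $\{1+(1+\tfrac1m)r_j\}G_j$ with $G_j\sim\chi^2_1$, reproducing the numerator of $\mathbb{D}$ in (\ref{eqt:representation_D_limit}). For $\widehat{t}_1$ I would use the joint convergence in the theorem: expanding each square in $\widehat{T}_\ell$ as $\{a_j+r_j^{1/2}c\}^2$ with $a_j=(1+r_j)^{1/2}W_j$, the coefficients of $a_j^2$ sum to $1+(m-1)-m=0$ and the cross terms reduce to $Z_{j\ell}+\sum_{\ell'\ne\ell}Z_{j\ell'}-\sum_{\ell'}Z_{j\ell'}=0$, so the $W$-dependence cancels identically and $\widehat{T}_\ell\inD\sum_j r_j\{Z_{j\ell}^2+(m-1)\bar Z_{j(\{-\ell\})}^2-m\bar Z_{j(\{1:m\})}^2\}$ (consistent with Proposition~\ref{prop:limit_of_J}(1)). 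Averaging over $\ell$ and using $\bar Z_{j(\{-\ell\})}=(m\bar Z_{j(\{1:m\})}-Z_{j\ell})/(m-1)$ collapses the bracket to $\tfrac{1}{m-1}S_j$, where $S_j:=\sum_\ell(Z_{j\ell}-\bar Z_{j(\{1:m\})})^2\sim\chi^2_{m-1}$, whence $\widehat{t}_1\inD\sum_j r_j H_j$ with $H_j:=S_j/(m-1)$. Independence of $\{G_j\}$ and $\{H_j\}$ follows from the classical independence of a Gaussian sample mean and its residual sum of squares, and independence across $j$ is built into the theorem. Since $(x,y)\mapsto (x/k)/\{1+\tfrac1k(1+\tfrac1m)y\}$ is continuous wherever the denominator is nonzero, a set containing the whole support of the nonnegative limit $\sum_j r_j H_j$, the continuous mapping theorem delivers $\widehat{D}\inD\mathbb{D}$. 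The argument is device-agnostic: it applies verbatim to $\widehat{D}_{\wt},\widehat{D}_{\lrt},\widehat{D}_{\rt}$ because Theorem~\ref{thm:asy_Distof_bar_d_S} covers all three devices, which proves part~(2).

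For part~(1) I would bridge to the classical statistics. By (\ref{eqt:dp_dpp}) and the asymptotic equivalence of \citet{mengRubin92}, under $H_0$ both $\widetilde{D}_{\wt}$ and $\widetilde{D}_{\lrt}$ are within $\inP 0$ of $\widetilde{d}''/[k\{1+(1+\tfrac1m)\widetilde{\mu}_r\}]$ with $\widetilde{d}''\sim\widehat{d}^{\{1:m\}}$ and $k\widetilde{\mu}_r\sim\tfrac{1}{m-1}\sum_\ell\widehat{d}^{\{\ell\}}-\tfrac{m}{m-1}\widehat{d}^{\{1:m\}}$. Evaluating this linear combination at the limiting variables of Theorem~\ref{thm:asy_Distof_bar_d_S}, the $W$-terms cancel exactly as above and the limit is again $\sum_j r_j H_j$; hence $\widehat{t}_1-k\widetilde{\mu}_r\inP 0$. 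Combined with $\widehat{d}^{\{1:m\}}_{\aleph}-\widetilde{d}''\inP 0$ for every device, Slutsky's theorem gives $\widehat{D}_{\aleph}-\widetilde{D}\inP 0$.

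I expect the main obstacle to be twofold. First, the $W$-cancellation in $\widehat{T}_\ell$ and the matching of $\widehat{t}_1$ with $k\widetilde{\mu}_r$ are exact identities among quadratic forms, but they must be carried out on the \emph{joint} limit so that these two distinct functionals land on the \emph{same} random variable rather than merely on equal-in-law ones; this is what legitimises the step from $\inD$ to $\inP$. Second, there is no pre-existing $\widetilde{D}_{\rt}$ to compare against; this is resolved by noting that under $H_0$ all three complete-data devices agree asymptotically on each stacked dataset (again Theorem~\ref{thm:asy_Distof_bar_d_S}), so $\widehat{D}_{\rt}$ inherits the equivalence with $\widetilde{D}_{\wt}$ and $\widetilde{D}_{\lrt}$.
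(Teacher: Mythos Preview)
Your proposal is correct and follows essentially the same approach as the paper: both arguments apply Theorem~\ref{thm:asy_Distof_bar_d_S} under $H_0$ to obtain the joint representation of the stacked statistics, exploit the exact cancellation of the $W_j$-terms in the linear combinations defining $\widehat{t}_1$ (equivalently $k\widehat{\mu}_r$) and $k\widetilde{\mu}_r$, and then use the independence of the Gaussian sample mean $\bar{Z}_{j\bullet}$ and sample variance $s_{Z_j}^2$ to arrive at the representation~(\ref{eqt:representation_D_limit}). The only cosmetic difference is that the paper routes part~(1) through the intermediate estimator $\overline{\mu}_r$ of Remark~\ref{rem:asy_equiv_of_murHat} and then cites that remark, whereas you verify the equality of the two limiting functionals directly; the underlying computation is identical.
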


Proposition \ref{prop:limit_of_Dhat} states that 
$\widehat{D}$ and $\widetilde{D}$ are asymptotically ($n\rightarrow\infty$) equivalent
for any $m$, and have the same 
limiting null distribution $\mathbb{D}$.
We emphasize again that computing $\widetilde{D}$ is not feasible
as it requires problem-specific devices other than $\mathcal{d}(\cdot)$.
However, computing our proposed $\widehat{D}$ requires only $\mathcal{d}(\cdot)$.

We propose approximating the limiting null distribution $\mathbb{D}$ by
substituting $r_{j}=\widehat{r}_{j}$ into (\ref{eqt:representation_D_limit}). 
Although it is not a named distribution, 
we can easily compute its quantile via Monte Carlo methods. 
Precisely, we first generate
$G_j^{(\iota)}\sim \chi^2_1$ and $H_j^{(\iota)}\sim \chi^2_{m-1}/(m-1)$ independently 
for $j=1, \ldots, k$ and $\iota=1, \ldots, N$. 
Upon conditioning on $\widehat{r}_{1}, \ldots, \widehat{r}_{k}$, 
we can generate $N$ random replicates of $\widehat{\mathbb{D}}$ as follows:
\begin{equation}\label{eqt:DmHat_est}
	\mathbb{\widehat{D}}^{(\iota)} := \frac{\frac{1}{k}\sum_{j=1}^k \{ 1+ (1+\frac{1}{m})\widehat{r}_{j}\} G_j^{(\iota)}}{ 1+\frac{1}{k}\sum_{j=1}^k (1+\frac{1}{m})\widehat{r}_{j} H_j^{(\iota)}}, 
	\qquad \iota=1, \ldots, N.
\end{equation}
The $100(1-\alpha_0)\%$ quantile of $\mathbb{D}$ can then be estimated by the 
$100(1-\alpha_0)\%$ sample quantile of $\{\widehat{\mathbb{D}}^{(1)},\ldots, \widehat{\mathbb{D}}^{(N)}\}$, 
where $\alpha_0 \in (0, 1)$. 
The sample quantile can be served as a critical value for testing $H_0:\theta=\theta_0$.
Similarly, 
the $p$-value can be found by 
$
	\widehat{p} = \sum_{\iota=1}^N \mathbb{1} \{\widehat{\mathbb{D}}^{(\iota)} \geq \widehat{D}  \}/N. 
$ 
The null hypothesis $H_0$ is rejected at size $\alpha_0$ if $\widehat{p} < \alpha_0$.
We emphasize that the proposed MI test
is asymptotically correct 
with or without Condition \ref{ass:EOMI}. 
Step-by-step procedure for computing 
$\widehat{p}$ is presented in Algorithm \ref{algo:SMI_exact}.

In Section \ref{sec:approxMethod} of the supplementary note,  
we present several alternative approximation schemes by 
projecting $\mathbb{D}$ to some distributions that depend only on
$\mu_r$ and $\sigma^2_r$ instead of $r_{1:k}$. 
This idea is similar to \cite{mengPhDthesis}.
Such approximations can be used if one only wants to estimate $\mu_r$ and $\sigma^2_r$.
Although these approximations are algorithmically simpler,  
the resulting MI tests control type-I error rates substantially worse than 
our proposal (\ref{eqt:DmHat_est}).
A quick simulation example is presented in Section \ref{sec:theoretical_test_size}
of the supplementary note for illustration.

\subsection{Discussion on applications}\label{sec:applicationsDiscussion}
Our proposed method uses $p$-value as a one-number summary for assessing 
variability of estimators in the presence of missing data. 
It can be applied not only to hypothesis testing but also 
other statistical procedures that require variability assessment
or use $p$-value as a part of the automatic procedures.   

Let $\widehat{p}(\theta_0)$ be the $p$-value returned by Algorithm \ref{algo:SMI_exact} 
for testing $H_0:\theta=\theta_0$.
The function $\widehat{p}(\cdot):\Theta \rightarrow [0,1]$ is called 
a $p$-value function \citep{Fraser2019}. 
It measures the degree of falsity of $H_0:\theta=\theta_0$.
Similar concepts include confidence curves \citep{Birnbaum1961}, 
confidence distributions \citep{XieSinghStrawderman2011,XieSingh2013}, 
significance functions \citep{Fraser1991},
plausibility functions \citep{Martin2015}, etc.
There are many automatic procedures that are built on the $p$-value function; 
see, for example, \citet{Martin2017}. 
We are not able to exhaust all applications.
Only four examples are presented here.

First, 
one obvious application is confidence regions (CR) construction. 
By the duality of hypothesis testing and CR (Section 5.4 of \cite{TSH_lehmannRomano}), 
a $100(1-\alpha_0)\%$ CR for $\theta$ is 
$\mathcal{C}=\{\theta_0 \in\Theta : \widehat{p}(\theta_0) \geq \alpha_0 \}$, 
which can be obtained by repeatedly using Algorithm \ref{algo:SMI_exact}.
Second, if researchers prefer not to fix the confidence level in advance, 
it is possible to report the $p$-value function as an estimator of $\theta$; 
see \cite{InfangerSchmidt2019} for some examples in medical studies. 
Third, 
$p$-value has been a commonly-used tool for combining evidence in meta-analysis 
\citep{HeardHeard2018}. 
For example, if the $p$-value for testing $H_0:\theta=\theta_0$ by the $g$th dataset is 
$\widehat{p}_g(\theta_0)$ for $g=1,\ldots G$, a possible combined $p$-value is 
$\sum_{g=1}^G \log \widehat{p}_g(\theta_0)$ \citep{Fisher1934}.
Fourth, $p$-values can be used for stepwise variable selection in generalized linear model 
(Section 4.6.1 of \citet{Agresti2015}).

In a nutshell, 
classical Rubin's rule use variance as a medium for assessing uncertainty, 
whereas our proposed method use $p$-value.
Our proposal is useful not only for the standard null hypothesis testing but also for other statistical procedures
that require variability assessment.

\section{Monte Carlo experiments and applications}\label{sec:applications}
Incomplete-data testing of linear regression coefficient 
and region estimation of a probability vector are presented in Sections \ref{sec:eg_reg} and \ref{sec:CR}, 
respectively.
Due to space constraints,
some additional simulation experiments and real-data examples are deferred to the supplementary note. 
They include 
(i) inference of variance-covariance matrix in Section \ref{sec:eg_mvn}, 
(ii) variable selection in generalized linear model in Section \ref{variable_selection_glm},
(iii) contingency tables in Sections \ref{sec:eg_contingency_table}, and 
(iv) logistic regression in Section \ref{sec:logisticEG}.
  
\subsection{Linear regression}\label{sec:eg_reg}
Let $y_i$ and $x_i=(x_{i1},\ldots, x_{ip})^{\T}$ be a univariate random response and $p$ deterministic covariates of the $i$th unit, respectively, where $i=1,\ldots,n$. 
Consider the linear regression model: $y_i = \beta_0 + x_i^{\T}\beta + \epsilon_i$
for each $i$, 
where $\epsilon_1, \ldots, \epsilon_n \simIID \Normal(0,\sigma^2)$. 
The full set of model parameters is $\psi = (\beta_0, \beta^{\T}, \sigma)^{\T}$; and  
the parameter of interest is $\theta = \beta$. 
We want to test $H_0: \theta=0_p$ against $H_1:\theta\neq 0_p$. 
Clearly, in this case, $k=p$. 

For each $i$, suppose $y_i$ and $x_{i1}$ are always observed, while $x_{i2}, \ldots, x_{ip}$ 
may be missing. 
Let $I_{ij}=0$ if $x_{ij}$ is missing, otherwise $I_{ij}=1$. 
Suppose further that $I_{ij}$ follows a logistic regression model:  
\begin{equation}\label{eqt:missingDataMech}
	\pr\left( I_{ij} = 1 \mid x_{i,j-1},I_{i,j-1} =a\right) 
		= \expit\left(  \gamma_0 + \gamma_1 x_{i,j-1} \right) \mathbb{1}(a=1)
\end{equation}
for $i=1,\ldots, n$ and $j=2, \ldots, p$, where $\expit(t) := 1/(1+e^{-t})$.
The missing data $X_{\mis} = \{x_{ij} : I_{ij}=0\}$ are then imputed $m\in\{10,30\}$ times  
by a Bayesian model; see 
Section \ref{sec:imputationModel_regression} of the 
supplementary note for details.

In the simulation experiment, 
all three devices $\mathcal{d}_{\wt}$, $\mathcal{d}_{\lrt}$ and $\mathcal{d}_{\rt}$
are studied; 
see Section \ref{sec:imputationModel_regression} for their formulas. 
We compute the MI test statistics $\widehat{D}_{\wt}$, $\widehat{D}_{\lrt}$ and $\widehat{D}_{\rt}$,
and refer them to 
four different approximated null distributions in Table \ref{tab:approxNullDist}, that is,  
T1 ($\chi^2_k/k$), 
T2 \citep{li91JASA}, 
T3 \citep{mengPhDthesis}, and 
T4 (the proposal in Algorithm \ref{algo:SMI_exact}). 
Note that the unknowns $\mu_r$ and $\sigma^2_r$ in T2 and T3 are estimated by (\ref{eqt:estimator_mur_sigmar})
with the devices $\mathcal{d}_{\wt}$, $\mathcal{d}_{\lrt}$ and $\mathcal{d}_{\rt}$
for $\widehat{D}_{\wt}$, $\widehat{D}_{\lrt}$ and $\widehat{D}_{\rt}$, respectively.

We consider $n=1000$, $p=5$, 
$x_i\simIID \Normal_p( 1_p , \Sigma_x)$ with 
$\left( \Sigma_{x} \right)_{ab}=2^{-|a-b|}$ for each $1\leq a,b\leq p$, 
$\sigma^2=1$ and $\beta_0 = 1$, where $1_p$ is a $p$-vector of ones.  
Two sets of $(\gamma_0,\gamma_1) \in\{ (1,0), (0,1) \}$ are considered. 
Note that the data are missing completely at random (MCAR) when $\gamma_1=0$, whereas 
the data are missing at random (MAR) when $\gamma_1=1$. 
Note that the fractions of missing covariates 
are about $(0,16\%, 29\%, 41\%, 50\%)$ and $(0, 24\%, 40\%,51\%, 61\%)$
in the MCAR and MAR cases, respectively.

\begin{figure}
\centering
\includegraphics[width=\linewidth]{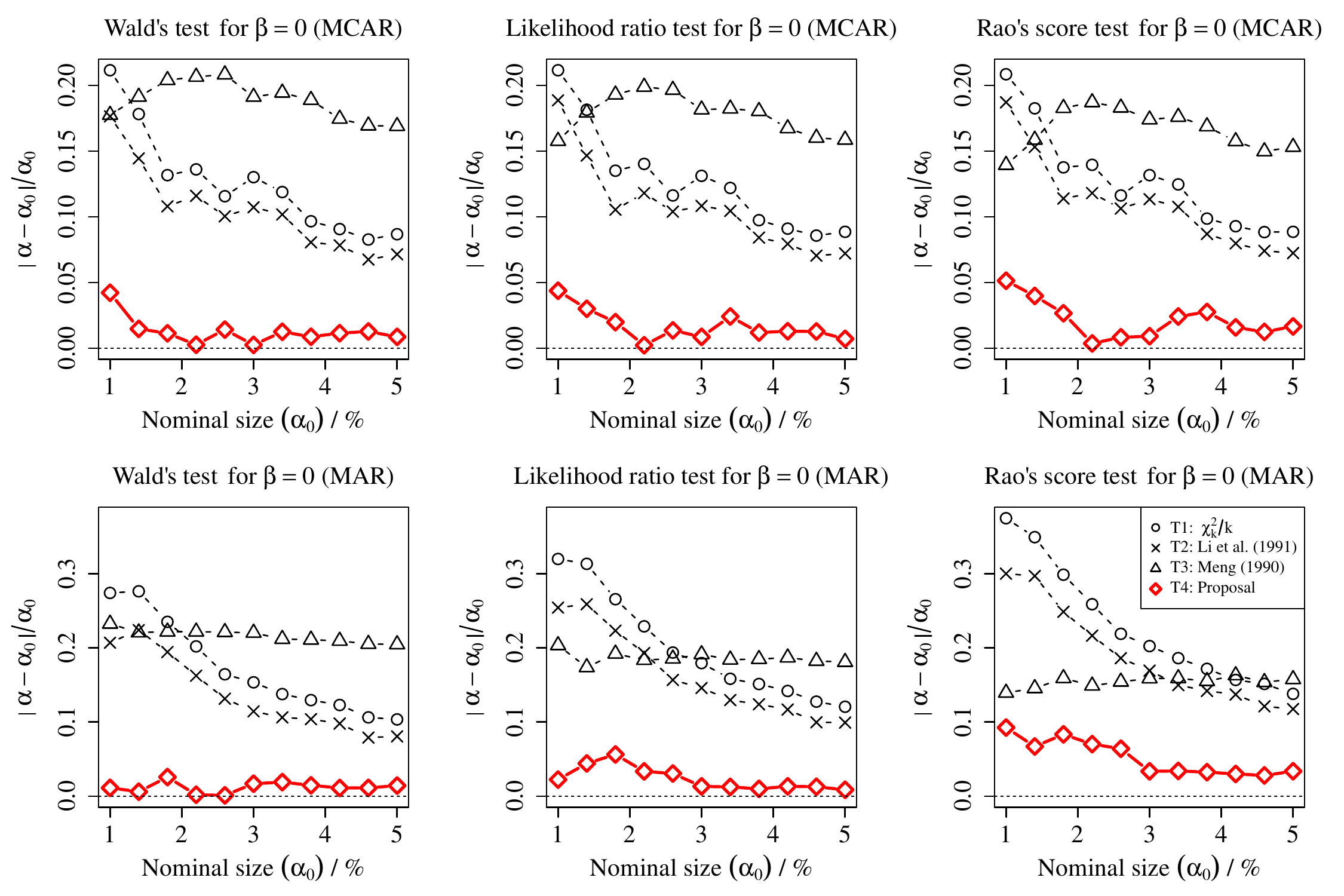}
\vspace{-0.5cm}
\caption{\footnotesize Size accuracy of the MI tests regarding regression coefficients in Section \ref{sec:eg_reg}.}
\label{fig:SMI_I34_H0}
\end{figure}

We record the sizes of the tests (denoted by $\alpha$)
at various nominal size $\alpha_0 \in [1\%, 5\%]$.
The results for $m=30$ are shown in Figure \ref{fig:SMI_I34_H0}. 
The proposed test T4 controls the size substantially 
more accurately than all other competitors in all cases. 
Although the widely used T2 performs well when Condition \ref{ass:EOMI} is true (see \cite{li91JASA}),
it only has a marginal improvement over T1 when Condition \ref{ass:EOMI} does not hold. 
Besides, our proposed test is a unified method for 
Wald's test, LR test, and RS test.
So, users only need to change the testing device 
(i.e., $\mathcal{d}_{\wt}, \mathcal{d}_{\lrt}, \mathcal{d}_{\rt}$) in Algorithm \ref{tab:approxNullDist}.
It is worth mentioning again that our proposal is the only MI procedure that 
can handle RS tests.

The results for $m=10$ are deferred to Figure \ref{fig:SMI_I33_H0_full}
of the supplementary note.  
The patterns are similar to Figure \ref{fig:SMI_I34_H0} except for T3. 
Note that T3 is not trustworthy unless $m$ is large. 
In this example, its performance converges to an increasingly bad state as $m$ increases.  
The powers of the MI tests are not directly comparable 
as their sizes are not equally accurate. 
Their size-adjusted powers are identical 
because they are based on the same test statistic $\widehat{D}$ (or its asymptotic equivalent). 
For reference, the power curves are shown in Section \ref{sec:imputationModel_regression}
of the supplementary note. 
Additional discussions about the effects of $m$ are deferred to 
Section \ref{sec:imputationModel_regression}.

\subsection{Confidence region and $p$-value function of a probability vector}\label{sec:CR}
Let $[y_i \mid x_i] \sim \Bern(\theta_1^{x_i}\theta_0^{1-x_i})$ independently for $i=1, \ldots, n$, 
where $x_1, \ldots, x_n \in \{0,1\}$ are fixed binary covariates, and $\theta_0, \theta_1\in (0,1)$
are unknown parameters.
The covariates $x_1, \ldots, x_n$ are always observed, but the responses $y_1, \ldots, y_n$ may be missing. 
If $y_i$ is observed, we denote $I_i=1$, otherwise $I_i=0$. 
Suppose that the missing mechanism is $[I_i\mid x_i] \sim \Bern( \pi_1^{x_i} \pi_0^{1-x_i})$, 
where $\pi_0,\pi_1\in(0,1)$ are unknown. 
The goal is to estimate $\theta = (\theta_0,\theta_1)^{\T}$ with the incomplete dataset 
$\{(x_i, y_i I_i) \}_{i=1}^n$.
In the simulation study, $n=100$ and around 40\% of the $x_i$'s are 1. 
The unknown true values are $\theta_0 = 0.15$, $\theta_1 = 0.75$, $\pi_0 = 0.9$ and $\pi_1 = 0.1$.
Note that $y_i$ is very likely to be missing when $x_i=1$. 
As discussed in Section \ref{sec:applicationsDiscussion}, 
we could construct a CR or $p$-value function as an estimator of $\theta$. 
We try both in this example.

Traditionally, one may construct the Wald's CR by Rubin's rule:
\[
	\mathcal{C}_{\wt} = \left\{\theta \in \mathbb{R}^2: (\bar{\theta}- \theta)^{\T}\left(\widehat{B}+\frac{m+1}{m}\bar{V}\right)^{-1}(\bar{\theta}- \theta) \leq c \right\}, 
\]
where 
the critical value $c$ can be found according to \cite{li91JASA}. 
It is well-known that the Wald's CR $\mathcal{C}_{\wt}$ must be an ellipse, 
which is restrictive in some problems. 
Moreover, $\mathcal{C}_{\wt}$ may not be a subset of the support $\Theta = (0,1)^2$.
Figure \ref{fig:SMI_T3_802} (a) visualizes these two problems. 
Alternatively, one may invert the LR test to construct a CR 
$\mathcal{C}_{\lrt}$ (say). 
It does not have the two aforementioned structural problems; 
see Figure \ref{fig:SMI_T3_802} (a) again.

We assess the performance of 
the likelihood-based $100(1-\alpha_0)\%$ CRs for $\theta$ by using T1--T4. 
We compute the actual non-coverage rates $\widehat{\alpha}$ for different methods, 
and compare them with the nominal value $\alpha_0$. 
The relative error $|\widehat{\alpha}-\alpha_0|/\alpha_0$ is reported for each method in Figure \ref{fig:SMI_T3_802} (b)
under various $\alpha_0 \in[0.01, 0.05]$.
The relative error of the Wald's CR is also computed for reference. 
Our proposed method has the lowest error uniformly. 
Figure \ref{fig:SMI_T3_802} (c) shows the $p$-value function $\widehat{p}(\theta)$ 
produced by our proposed T4 and Algorithm \ref{algo:SMI_exact}. 
It offers an alternative way for estimating $\theta$ with variability assessment
but without specifying the confidence level in advance.  
From the above example, our proposed method is particularly useful for 
handling non-normal data and parameters with bounded supports.

\begin{figure}
\centering
\includegraphics[width=1\linewidth]{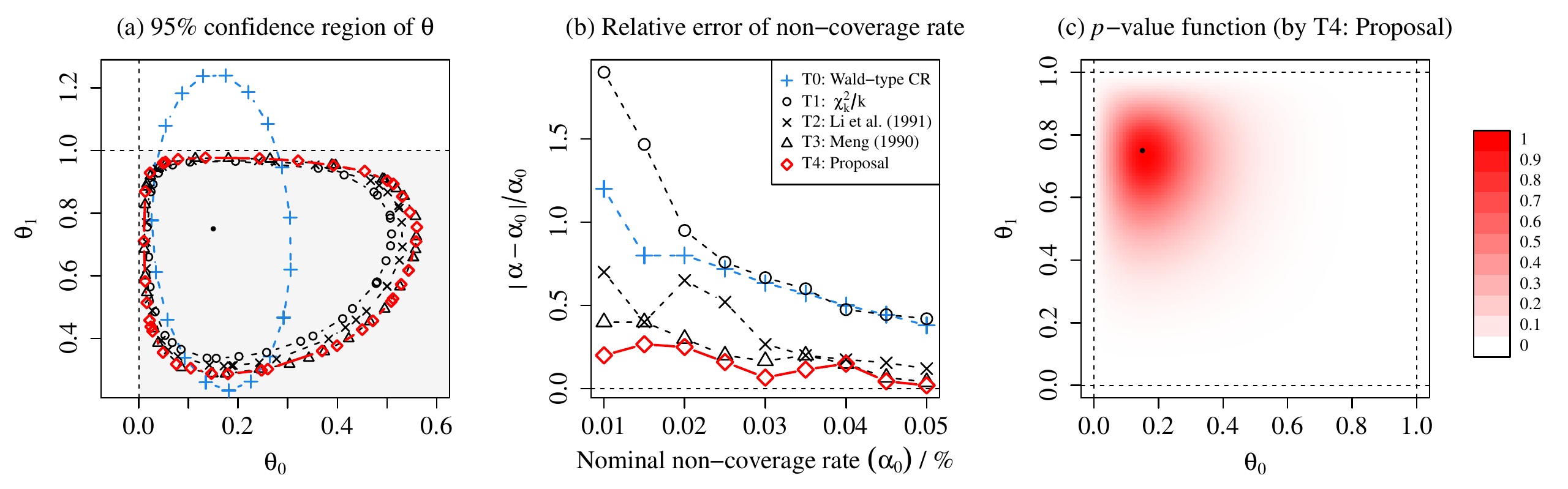}
\vspace{-0.5cm}
\caption{\footnotesize (a) One typical realization of 95\% CRs of $\theta$ using different methods, where the grey region is the support $(0,1)^2$ of $\theta$.
(b) Coverage accuracy of different CRs. (c) Heat-map of the $p$-value function by using the 
proposed T4. 
The solid black dots in plots (a) and (c) denote the true value of $\theta=(0.15, 0.75)^{\T}$. 
See Section \ref{sec:CR} for detailed descriptions.}
\label{fig:SMI_T3_802}
\end{figure}

\section{Conclusion, discussion and future work}\label{sec:conclusion}
The proposed test for handling multiply-imputed datasets is general as 
it does not require $m\rightarrow \infty$ or equal odds of missing information. 
So, it is particularly suitable for handling 
public-use datasets that have non-trivial missingness structures; 
see Remark \ref{rem:imputer} and Section \ref{sec:applications} for some examples. 
The test is feasible in the sense that 
only a standard complete-data testing device is needed 
for performing incomplete-data Wald's, likelihood ratio, and Rao's score tests; 
see Algorithm \ref{algo:SMI_exact}.  
Besides, the proposed method is also useful for 
general statistical procedures that use $p$-value as an assessment tool; 
see Section \ref{sec:applicationsDiscussion}. 
So, it has a wide range of applications.
Although the proposed test improves the existing counterparts, 
further studies are needed in the following directions. 

First, this paper assumes the sample size $n$ is large enough so that 
the standard large-sample $\chi^2$ approximation kicks in. 
Small-sample approximations (e.g., approaches similar to \cite{BarnardRubin1999} and \cite{Reiter07})
are likely to further improve the proposed test. 
Second, 
we need to perform more computationally expensive tests on several stacked datasets.
We increase the computing cost 
in order to minimize the human time cost needed to 
build non-standard computing functions required in tests T2 and T3; see Table \ref{tab:approxNullDist}. 
Given the computing power of the current computers, 
we believe that computing time cost is a lesser constraint than human time cost. 
However, it is still desirable to further reduce the computing cost.

\begin{acks}[Acknowledgments]
A part of the theoretical results in this article are partially developed from the author's Ph.D. thesis 
under the supervision of Xiao-Li Meng, who provided many insightful ideas that greatly contribute to this paper. 
The author would also like to thank the anonymous referees, an Associate
Editor and the Editor for their constructive comments that improved the
scope and presentation of the paper. 
\end{acks}

\begin{funding}
The author acknowledges the financial support from the 
Early Career Scheme (24306919) provided by the University Grant Committee of Hong Kong.
\end{funding}

\smallskip
\begin{center}
{SUPPLEMENTARY MATERIAL}
\end{center}
The supplementary note includes  
proofs, supplementary results and additional examples.
An R-package \texttt{stackedMI} is also provided.

\bibliographystyle{rss}
\bibliography{myRef.bib}

\begin{thebibliography}{54}
\expandafter\ifx\csname natexlab\endcsname\relax\def\natexlab#1{#1}\fi
\expandafter\ifx\csname url\endcsname\relax
  \def\url#1{\texttt{#1}}\fi
\expandafter\ifx\csname urlprefix\endcsname\relax\def\urlprefix{URL }\fi

\bibitem[{Agresti(2015)}]{Agresti2015}
Agresti, A. (2015) \emph{Foundations of linear and generalized linear models}.
\newblock John Wiley \& Sons.

\bibitem[{Barnard and Rubin(1999)}]{BarnardRubin1999}
Barnard, J. and Rubin, D.~B. (1999) Small-sample degrees of freedom with
  multiple imputation.
\newblock \emph{Biometrika}, \textbf{86}, 948--955.

\bibitem[{Bera and Bilias(2001)}]{beraBilias2001}
Bera, A.~K. and Bilias, Y. (2001) {R}ao's score, {N}eyman's ${C}(\alpha)$ and
  {S}ilvey's {LM} tests: an essay on historical developments and some new
  results.
\newblock \emph{Journal of Statistical Planning and Inference}, \textbf{97},
  9--44.

\bibitem[{Birnbaum(1961)}]{Birnbaum1961}
Birnbaum, A. (1961) Confidence curves: An omnibus technique for estimation and
  testing statistical hypotheses.
\newblock \emph{J. Amer. Statist. Assoc.}, \textbf{56}, 246--249.

\bibitem[{Bodenham and Adams(2016)}]{BodenhamAdams2016}
Bodenham, D.~A. and Adams, N.~M. (2016) A comparison of efficient
  approximations for a weighted sum of chi-squared random variables.
\newblock \emph{Stat. Comput.}, \textbf{26}, 917---928.

\bibitem[{Carpenter and Kenward(2013)}]{CarpenterKenward2013}
Carpenter, J.~R. and Kenward, M.~G. (2013) \emph{Multiple Imputation and its
  Application}.
\newblock Wiley.

\bibitem[{Chan and Meng(2021+)}]{chanMeng2017_MILRT}
Chan, K.~W. and Meng, X.-L. (2021+) Multiple improvements of multiple
  imputation likelihood ratio tests.
\newblock \emph{Statist. Sinica (To appear)}.

\bibitem[{Efron(1979)}]{efro1979}
Efron, B. (1979) Bootstrap methods: Another look at the jackknife.
\newblock \emph{Ann. Statist.}, \textbf{7}, 1--26.

\bibitem[{Fisher(1934)}]{Fisher1934}
Fisher, R.~A. (1934) \emph{Statistical Methods for Research Workers}.
\newblock Oliver and Boyd: Edinburgh, 5 edn.

\bibitem[{Fraser(2019)}]{Fraser2019}
Fraser, D. (2019) The $p$-value function and statistical inference.
\newblock \emph{The American Statistician}, \textbf{73}, 135--147.

\bibitem[{Fraser(1991)}]{Fraser1991}
Fraser, D. A.~S. (1991) Statistical inference: Likelihood to significance.
\newblock \emph{J. Amer. Statist. Assoc.}, \textbf{86}, 258--265.

\bibitem[{Harel and Zhou(2007)}]{HarelZhou2007}
Harel, O. and Zhou, X.-H. (2007) Multiple imputation - review of theory,
  implementation and software.
\newblock \emph{Statistics in medicine}, \textbf{26}, 3057--3077.

\bibitem[{Heard and Rubin-Delanchy(2018)}]{HeardHeard2018}
Heard, N.~A. and Rubin-Delanchy, P. (2018) Choosing between methods of
  combining $p$-values.
\newblock \emph{Biometrika}, \textbf{105}, 239--246.

\bibitem[{Holan \emph{et~al.}(2010)Holan, Toth, Ferreira and
  Karr}]{HolanTothFerreiraKarr2010}
Holan, S.~H., Toth, D., Ferreira, M. A.~R. and Karr, A.~F. (2010) Bayesian
  multiscale multiple imputation with implications for data confidentiality.
\newblock \emph{Journal of the American Statistical Association}, \textbf{105},
  564--577.

\bibitem[{Horton and Kleinman(2007)}]{HortonKleinman2007}
Horton, N. and Kleinman, K.~P. (2007) Much ado about nothing: A comparison of
  missing data methods and software to fit incomplete data regression models.
\newblock \emph{The American Statistician}, \textbf{61}, 79--90.

\bibitem[{Infanger and Schmidt‐Trucks\"{a}ss(2019)}]{InfangerSchmidt2019}
Infanger, D. and Schmidt‐Trucks\"{a}ss, A. (2019) ${P}$ value functions: An
  underused method to present research results and to promote quantitative
  reasoning.
\newblock \emph{Statistics in Medicine}, \textbf{38}, 4189--4197.

\bibitem[{Kalman(2000)}]{Kalman2000}
Kalman, D. (2000) A matrix proof of {N}ewton's identities.
\newblock \emph{Mathematics Magazine}, \textbf{73}, 313--315.

\bibitem[{Kenward and Carpenter(2007)}]{KenwardCarpenter2007}
Kenward, M.~G. and Carpenter, J.~R. (2007) Multiple imputation: current
  perspectives.
\newblock \emph{Statistical Methods in Medical Research}, \textbf{16},
  199--218.

\bibitem[{Kim and Shao(2013)}]{KimShao2013}
Kim, J.~K. and Shao, J. (2013) \emph{Statistical Methods for Handling
  Incomplete Data}.
\newblock Chapman and Hall/CRC, Boca Raton.

\bibitem[{Kim and Yang(2017)}]{KimYang2017}
Kim, J.~K. and Yang, S. (2017) A note on multiple imputation under complex
  sampling.
\newblock \emph{Biometrika}, \textbf{104}, 221--228.

\bibitem[{King \emph{et~al.}(2001)King, Honaker, Joseph and
  Scheve}]{KingHonakerJosephScheve2001}
King, G., Honaker, J., Joseph, A. and Scheve, K. (2001) Analyzing incomplete
  political science data: An alternative algorithm for multiple imputation.
\newblock \emph{American Political Science Review}, \textbf{95}, 49--69.

\bibitem[{Kohavi(1996)}]{Kohavi1996}
Kohavi, R. (1996) Scaling up the accuracy of naive-{B}ayes classifiers: A
  decision-tree hybrid.
\newblock \emph{Proceedings of the Second International Conference on Knowledge
  Discovery and Data Mining},  202--207.

\bibitem[{Lehmann and Romano(2005)}]{TSH_lehmannRomano}
Lehmann, E.~L. and Romano, J.~P. (2005) \emph{Testing Statistical Hypotheses}.
\newblock Springer, New York.

\bibitem[{Li \emph{et~al.}(1991)Li, Raghunathan and Rubin}]{li91JASA}
Li, K.~H., Raghunathan, T.~E. and Rubin, D.~B. (1991) Large-sample significance
  levels from multiply imputed data using moment-based statistics and an {$F$}
  reference distribution.
\newblock \emph{Journal of the American Statistical Association}, \textbf{86},
  1065--1073.

\bibitem[{Martin(2015)}]{Martin2015}
Martin, R. (2015) Plausibility functions and exact frequentist inference.
\newblock \emph{J. Amer. Statist. Assoc.}, \textbf{110}, 1552--1561.

\bibitem[{Martin(2017)}]{Martin2017}
Martin, R. (2017) A statistical inference course based on p-values.
\newblock \emph{The American Statistician}, \textbf{71}, 128--136.

\bibitem[{Meng(1990)}]{mengPhDthesis}
Meng, X.-L. (1990) \emph{Towards complete results for some incomplete-data
  problems}.
\newblock Ph.D. thesis, Harvard University.

\bibitem[{Meng(1994)}]{meng1994}
Meng, X.-L. (1994) Multiple-imputation inferences with uncongenial sources of
  input.
\newblock \emph{Statistical Science}, \textbf{9}, 538--573.

\bibitem[{Meng and Rubin(1992)}]{mengRubin92}
Meng, X.-L. and Rubin, D.~B. (1992) Performing likelihood ratio tests with
  multiply-imputed data sets.
\newblock \emph{Biometrika}, \textbf{79}, 103--111.

\bibitem[{Parker and Schenker(2007)}]{ParkerSchenker2007}
Parker, J.~D. and Schenker, N. (2007) Multiple imputation for national
  public‐use datasets and its possible application for gestational age in
  united states natality files.
\newblock \emph{Paediatr Perinat Epidemiol}, \textbf{21}, 97--105.

\bibitem[{Peugh and Enders(2004)}]{PeughEnders2004}
Peugh, J.~L. and Enders, C.~K. (2004) Missing data in educational research: A
  review of reporting practices and suggestions for improvement.
\newblock \emph{Review of Educational Research}, \textbf{74}, 525--556.

\bibitem[{Politis \emph{et~al.}(1999)Politis, Romano and Wolf}]{politis1999}
Politis, D.~N., Romano, J.~P. and Wolf, M. (1999) \emph{Subsampling.}
\newblock Springer, New York.

\bibitem[{Quenouille(1956)}]{Quenouille56}
Quenouille, M.~H. (1956) Notes on bias in estimation.
\newblock \emph{Biometrika}, \textbf{43}, 353--360.

\bibitem[{Rao(2005)}]{rao2005}
Rao, C.~R. (2005) Score test: Historical review and recent developments.
\newblock In \emph{Advances in Ranking and Selection, Multiple Comparisons, and
  Reliability} (eds. N.~Balakrishnan, N.~Kannan and H.~N. Nagaraja),  3--20.
  Springer, New York.

\bibitem[{Reiter(2007)}]{Reiter07}
Reiter, J.~P. (2007) Small-sample degrees of freedom for multi-component
  significance tests with multiple imputation for missing data.
\newblock \emph{Biometrika}, \textbf{94}, 502--508.

\bibitem[{Rose and Fraser(2008)}]{RoseFraser2008}
Rose, R.~A. and Fraser, M.~W. (2008) A simplified framework for using multiple
  imputation in social work research.
\newblock \emph{Social Work Research}, \textbf{32}, 171--178.

\bibitem[{Rubin(1976)}]{rubin1976}
Rubin, D.~B. (1976) Inference and missing data.
\newblock \emph{Biometrika}, \textbf{63}, 581--592.

\bibitem[{Rubin(1978)}]{rubin1978}
Rubin, D.~B. (1978) Multiple imputations in sample surveys - a phenomenological
  {B}ayesian approach to nonresponse.
\newblock \emph{Proceedings of the Survey Research Methods Section of the
  American Statistical Association},  20--34.

\bibitem[{Rubin(1987)}]{rubin1987}
Rubin, D.~B. (1987) \emph{Multiple Imputation for Nonresponse in Surveys}.
\newblock Wiley, New York.

\bibitem[{Rubin(1996)}]{rubin1996multiple}
Rubin, D.~B. (1996) Multiple imputation after 18+ years.
\newblock \emph{Journal of the American Statistical Association}, \textbf{91},
  473--489.

\bibitem[{Rubin and Schenker(1986)}]{rubinSchenker1986}
Rubin, D.~B. and Schenker, N. (1986) Multiple imputation for interval
  estimation from simple random samples with ignorable nonresponse.
\newblock \emph{Journal of the American Statistical Association}, \textbf{81},
  366--374.

\bibitem[{Schafer(1999)}]{Schafer1999}
Schafer, J.~L. (1999) Multiple imputation: A primer.
\newblock \emph{Statistical Methods in Medical Research}, \textbf{8}, 3--15.

\bibitem[{Schenker \emph{et~al.}(2006)Schenker, Raghunathan, Chiu, Makuc, Zhang
  and Cohen}]{SchenkerOther06}
Schenker, N., Raghunathan, T.~E., Chiu, P.-L., Makuc, D.~M., Zhang, G. and
  Cohen, A.~J. (2006) Multiple imputation of missing income data in the
  national health interview survey.
\newblock \emph{J. Amer. Statist. Assoc.}, \textbf{101}, 924--933.

\bibitem[{Serfling(2001)}]{serfling1980}
Serfling, R.~J. (2001) \emph{Approximation Theorems of Mathematical
  Statistics}.
\newblock Wiley-Interscience.

\bibitem[{Shao(1999)}]{shan99}
Shao, J. (1999) \emph{Mathematical Statistics}.
\newblock Springer-Verlag New York, 1 edn.

\bibitem[{Tu \emph{et~al.}(1993)Tu, Meng and Pagano}]{tu1993JASA}
Tu, X.~M., Meng, X.-L. and Pagano, M. (1993) The {AIDS} epidemic: Estimating
  survival after {AIDS} diagnosis from surveillance data.
\newblock \emph{Journal of the American Statistical Association}, \textbf{88},
  26--36.

\bibitem[{van Buuren and
  Groothuis-Oudshoorn(2011)}]{vanBuurenGroothuisOudshoorn2011}
van Buuren, S. and Groothuis-Oudshoorn, K. (2011) Mice: Multivariate imputation
  by chained equations in {R}.
\newblock \emph{Journal of Statistical Software}, \textbf{45}, 1--67.

\bibitem[{van~der Vaart(2000)}]{vanDerVaart2000}
van~der Vaart, A.~W. (2000) \emph{Asymptotic Statistics}.
\newblock Cambridge University Press, Cambridge.

\bibitem[{Wang \emph{et~al.}(2018)Wang, Zhu and Ma}]{WangZhuMa2018}
Wang, H.~Y., Zhu, R. and Ma, P. (2018) Optimal subsampling for large sample
  logistic regression.
\newblock \emph{Journal of the American Statistical Association}, \textbf{113},
  829--844.

\bibitem[{Wang and Robins(1998)}]{wangRobins1998}
Wang, N. and Robins, J.~M. (1998) Large-sample theory for parametric multiple
  imputation procedures.
\newblock \emph{Biometrika}, \textbf{85}, 935--948.

\bibitem[{Xie and Singh(2013)}]{XieSingh2013}
Xie, M. and Singh, K. (2013) Confidence distribution, the frequentist
  distribution estimator of a parameter: A review.
\newblock \emph{International Statistical Review}, \textbf{81}, 3--39.

\bibitem[{Xie and Strawderman(2011)}]{XieSinghStrawderman2011}
Xie, M., S.~K. and Strawderman, W.~E. (2011) Confidence distributions and a
  unifying framework for meta-analysis.
\newblock \emph{J. Amer. Statist. Assoc.}, \textbf{106}, 320--333.

\bibitem[{Xie and Meng(2017)}]{XXMeng2017}
Xie, X. and Meng, X.-L. (2017) Dissecting multiple imputation from a
  multi-phase inference perspective: What happens when {G}od's, imputer's and
  analyst's models are uncongenial? (with discussion).
\newblock \emph{Statistica Sinica}, \textbf{27}, 1485--1594.

\bibitem[{Yu \emph{et~al.}(2021)Yu, Chan, Lau, Wan, Chen and
  Ng}]{YuChaneetal2021}
Yu, P. S.~Y., Chan, K.~W., Lau, R. W.~H., Wan, I. Y.~P., Chen, G.~G. and Ng, C.
  S.~H. (2021) Uniportal video-assisted thoracic surgery for major lung
  resection is associated with less immunochemokine disturbances than
  multiportal approach.
\newblock \emph{Scientific Reports}, \textbf{11}, 10369.

\end{thebibliography}

\newpage
\begin{center}
\bfseries\MakeUppercase{Supplementary Note to ``General and Feasible Tests with Multiply-Imputed Datasets''}
\end{center}
\appendix

\section{Auxiliary and additional results}\label{sec:auxiliary_results}

\subsection{Unbiased Estimation of Mean and Variance of OMI}\label{sec:estimator_of_moments_r}
Recall that 
the mean and variance of $r_{1}, \ldots, r_k$ are defined as 
\begin{equation*}\label{eqt:moments_r}
	\mu_r := \frac{1}{k}\sum_{j=1}^k r_j = R_1/k
	\qquad \text{and}  \qquad
	\sigma^2_r := \frac{1}{k}\sum_{j=1}^k(r_j - \mu_r)^2 = R_2/k - \mu_r^2.
\end{equation*}
These two values are required for approximating 
the limiting null distribution $\mathbb{D}$ in (\ref{eqt:representation_D_limit}) for testing $H_0$, 
for example, T2 and T3 in Table \ref{tab:approxNullDist}. 
In this section, we derive their asymptotically unbiased estimators.

By Theorems \ref{prop:limit_of_J} and \ref{thm:limit_EJ}, 
$\widehat{t}_1 \inP R_1$ and $\widehat{t}_2 \inP 2R_2 + R_1^2$. 
So, $\mu_r$ and $\sigma^2_r$ can be trivially estimated by
\begin{eqnarray}\label{eqt:muHatr}
	\widehat{\mu}_r := \frac{\widehat{t}_1}{k}	
	\qquad \text{and} \qquad
	\widecheck{\sigma}^{2}_r := \frac{\widehat{t}_2}{2k} - \frac{(2+k)\widehat{t}_1^2}{2k^2},	
\end{eqnarray}
respectively;
see Remark \ref{rem:asy_equiv_of_murHat} for an asymptotically equivalent definition of 
$\widehat{\mu}_r$. 
It is not hard to see that the estimator $\widehat{\mu}_r$ is asymptotically unbiased for $\mu_r$
even for finite $m$. 
However, $\widecheck{\sigma}^{2}_r$ can be severely biased for $\sigma^2_r$ when $m$ is small; 
see Table \ref{table:SMI_L025-030_bias} of Example \ref{eg:varHat} presented below.  
It is desirable to have a de-biased estimator of $\sigma^2_r$.
Since we have the detailed moment properties of 
$\widehat{t}_{\tau} = \sum_{\ell=1}^m \widehat{T}_{\ell}^{\tau}/m$
in Proposition \ref{prop:limit_of_J}, 
we can correct the bias of $\widecheck{\sigma}_{r}^{2}$.
If $m\geq 3$, 
the bias-corrected estimator of $\sigma^2_r$ is given by 
\begin{eqnarray*}
	\widehat{\sigma}_{r}^2 
		:=  \frac{\left\{k(m-1)+2\right\} \widehat{t}_2
			-(m-1)(k+2)\widehat{t}_1^2}{2k^2(m-2)} . 
\end{eqnarray*}
The following corollary shows its asymptotic ($n\rightarrow\infty$) 
properties for $m\geq 3$.

\begin{corollary}\label{coro:unbiasedEstofVarlambda}  
Assume Conditions \ref{cond-Normality}--\ref{cond-UI}. 
Let $m\geq 3$ be fixed and $\Lambda=\Lambda_{\Jack}$.
As $n\rightarrow\infty$, we have 
(1) $\E( \widehat{\sigma}_{r}^2  ) \rightarrow \sigma^2_r$, and 
(2) 
\[
	\Var( \widehat{\sigma}_{r}^2  )
		\rightarrow {2\left( 6k^2R_4 -8k R_3R_1 + k^2R_2^2 + 4 R_1^2 R_2\right)}/(k^4m)
			+ I_m,
\]
where $I_m$ is a reminder term such that $I_m = o(1/m)$.
\end{corollary}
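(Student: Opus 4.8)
The plan is to reduce everything to the moment bookkeeping already established in Proposition \ref{prop:limit_of_J} and Theorem \ref{thm:limit_EJ}, and then do a careful but finite computation. First I would note that both $\widehat{\sigma}_r^2$ and $\widecheck\sigma_r^2$ are fixed linear combinations of $\widehat t_1^2$ and $\widehat t_2$ with coefficients depending only on $k$ and $m$, so establishing (1) amounts to computing the large-$n$ limits of $\E(\widehat t_2)$ and $\E(\widehat t_1^2)$. By Theorem \ref{thm:limit_EJ}(1) with $\Lambda=\Lambda_{\Jack}$ we have $\E(\widehat t_2)\to t_2 = 2R_2 + R_1^2$ (using the iteration in \eqref{eqt:def_q} with $\tau=2$). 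For $\E(\widehat t_1^2)$, write $\widehat t_1^2 = \{\E(\widehat t_1)\}^2 + \Var(\widehat t_1) + o(1)$; Theorem \ref{thm:limit_EJ}(1) gives $\E(\widehat t_1)\to t_1 = R_1$, and Corollary \ref{coro:example_var} gives $\Var(\widehat t_1)\to 2R_2/(m-1)$. Hence $\E(\widehat t_1^2)\to R_1^2 + 2R_2/(m-1)$. Substituting these two limits into the defining coefficients of $\widehat\sigma_r^2$ — that is, evaluating $[\{k(m-1)+2\}(2R_2+R_1^2) - (m-1)(k+2)(R_1^2 + 2R_2/(m-1))]/\{2k^2(m-2)\}$ — the $R_1^2$ terms collapse to $-(m-2)R_1^2\cdot k \cdot(\dots)$ and the $R_2$ terms to the coefficient that makes the whole expression equal $R_2/k - R_1^2/k^2 = \sigma_r^2$. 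I expect this algebra to be the one genuinely tedious-but-mechanical step, and the choice of the constants in the definition of $\widehat\sigma_r^2$ is precisely what makes it cancel; part (1) follows. (The requirement $m\ge 3$ is exactly what is needed for the $1/(m-2)$ factor to be finite and for Corollary \ref{coro:example_var} to apply with $m>1$.)

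For part (2), I would expand $\widehat\sigma_r^2 = a_m \widehat t_2 + b_m \widehat t_1^2$ with $a_m = \{k(m-1)+2\}/\{2k^2(m-2)\}$ and $b_m = -(m-1)(k+2)/\{2k^2(m-2)\}$, both $O(1/m)$. Then
\[
	\Var(\widehat\sigma_r^2) = a_m^2 \Var(\widehat t_2) + b_m^2 \Var(\widehat t_1^2) + 2 a_m b_m \Cov(\widehat t_2, \widehat t_1^2).
\]
Since $a_m, b_m = O(1/m)$, only the leading-order behaviour of each of $\Var(\widehat t_2)$, $\Var(\widehat t_1^2)$ and $\Cov(\widehat t_2,\widehat t_1^2)$ matters up to $o(1/m)$. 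I would compute these from the joint moment structure of the i.i.d.\ (conditional on $X_{\obs}$, asymptotically) building blocks $\widehat T_\ell$, whose limiting law $\mathbb T = \sum_i r_i U_i$ with $U_i\sim\chi^2_1$ has all moments available through Proposition \ref{prop:limit_of_J}(3): in particular $\E(\mathbb T)=R_1$, $\E(\mathbb T^2)=2R_2+R_1^2$, and the higher moments $\E(\mathbb T^3)$, $\E(\mathbb T^4)$ expand in $R_1,\dots,R_4$ via \eqref{eqt:def_q}. Because $\widehat t_\tau = m^{-1}\sum_\ell \widehat T_\ell^\tau$ is (asymptotically) an average of $m$ exchangeable terms that are \emph{not} independent — they share the stacked statistic $\widehat d^{\{1:m\}}$ — I will need the asymptotic covariance structure among $\{\widehat T_\ell\}$, which is exactly the content encoded by $\rho(S_1,S_2,S_3,S_4)$ in Theorem \ref{thm:limit_EJ}(2). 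Plugging $\Lambda_{\Jack}$ into that formula, counting the diagonal ($\ell=\ell'$) versus off-diagonal contributions, and collecting the $O(1/m)$ pieces should yield the stated leading term $2(6k^2R_4 - 8kR_3R_1 + k^2R_2^2 + 4R_1^2R_2)/(k^4 m)$, with everything of smaller order absorbed into $I_m = o(1/m)$.

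The main obstacle will be part (2): it is not conceptually deep, but it requires assembling fourth-order joint moments of the $\widehat T_\ell$'s (since $\Var(\widehat t_2)$ involves $\E \widehat T_\ell^4$ and cross terms $\E \widehat T_\ell^2 \widehat T_{\ell'}^2$, and $\Var(\widehat t_1^2)$ involves $\E \widehat T_\ell \widehat T_{\ell'} \widehat T_{\ell''}\widehat T_{\ell'''}$), tracking which index-coincidence patterns survive at order $1/m$, and verifying that the uniform integrability hypothesis (Condition \ref{cond-UI}) legitimately upgrades the weak convergence of Proposition \ref{prop:limit_of_J}(1) to convergence of these moments. I would organize this by first recording, as a lemma, the limits of $\E(\widehat T_{\ell_1}^{j_1}\cdots\widehat T_{\ell_4}^{j_4})$ for the multi-indices actually needed, then substituting mechanically. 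The cancellation of the $R_1^2$ and $R_1^3$ contributions that one might naively expect to appear (they don't, in the stated answer) will serve as a useful internal consistency check on the computation.
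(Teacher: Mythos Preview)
Your overall plan matches the paper's proof: part (1) via the limits of $\E(\widehat t_2)$ and $\E(\widehat t_1^2)$, and part (2) via the decomposition $\widehat\sigma_r^2 = a_m\widehat t_2 + b_m\widehat t_1^2$ followed by computing $\Var(\widehat t_2)$, $\Var(\widehat t_1^2)$ and $\Cov(\widehat t_2,\widehat t_1^2)$ from joint fourth moments of the $\widehat T_\ell$. The paper indeed records the needed moments $\E(\mathbb T_1^4)$, $\E(\mathbb T_1^3\mathbb T_2)$, $\E(\mathbb T_1^2\mathbb T_2^2)$, $\E(\mathbb T_1^2\mathbb T_2\mathbb T_3)$, $\E(\mathbb T_1\mathbb T_2\mathbb T_3\mathbb T_4)$ in a separate lemma, exactly as you anticipate.

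Two points to correct. First, your order bookkeeping for $a_m,b_m$ is wrong: $a_m=\{k(m-1)+2\}/\{2k^2(m-2)\}\to 1/(2k)$ and $b_m\to -(k+2)/(2k^2)$, so both are $O(1)$, not $O(1/m)$. With your stated orders the entire variance would be $o(1/m)$ and you would lose the leading term. The correct logic is that $a_m,b_m$ have finite nonzero limits while each of the three (co)variances is of exact order $1/m$; the product then gives the displayed $1/m$ coefficient, and the $O(1/m)$ corrections in $a_m,b_m$ contribute only to $I_m$. Second, Theorem \ref{thm:limit_EJ}(2) alone does not give you the fourth-order \emph{joint} moments needed for $\Var(\widehat t_1^2)$ and $\Cov(\widehat t_2,\widehat t_1^2)$; that theorem controls $\Cov(\widehat T_{S_1,S_2}^\tau,\widehat T_{S_3,S_4}^\tau)$, not products with mixed exponents or four distinct indices. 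The paper obtains these moments directly from the representation $\mathbb T_\ell=\sum_j r_j(Z_{j\ell}-\bar Z_{j\bullet})^2$ by writing $(Z_{11}-\bar Z_{1\bullet},\ldots,Z_{14}-\bar Z_{1\bullet})$ via an explicit Cholesky factorization and then doing Gaussian moment calculus case-by-case over the coincidence patterns of $(i,i',j,j')$. Your proposed lemma is the right object, but you will need this Gaussian route (or an equivalent cumulant computation) rather than an appeal to $\rho$.
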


Corollary \ref{coro:unbiasedEstofVarlambda} states that 
$\widehat{\sigma}_{r}^2$ is asymptotically unbiased, and  
its variance decreases in the rate of $O(1/m)$. 
So, $\widehat{\sigma}_{r}^2$ is a good estimator of $\sigma^2_r$.
Since $\widecheck{\sigma}_{r}^{2}$ and $\widehat{\sigma}_{r}^2$
are not guaranteed to be non-negative, 
we may use 
$\widecheck{\sigma}_{r+}^2 := \max\{0,\widecheck{\sigma}_{r}^{2}\}$ and $\widehat{\sigma}_{r+}^2:=\max\{0,\widehat{\sigma}_{r}^2\}$ in practice.
Besides,  $\widecheck{\sigma}^{2}_r$ and $\widehat{\sigma}^{2}_r$
are asymptotically equivalent 
when $m,n\rightarrow \infty$.
So, the bias improvement of $\widehat{\sigma}^{2}_r$ is substantial only when $m$ is small. 
We also remark that the exact limit of $\Var( \widehat{\sigma}_{r}^2 )$ is complicated:
\begin{align*}
	&\lim_{n\rightarrow\infty}
		\Var\left(\widehat{\sigma}_{r}^2 \right)\\
		&=\frac{18(m-3)(k+2)^2}{k^4(m-1)^5m(m-2)^2}R_4  \nonumber\\
		&\quad
			+ \frac{2}{k^4(m-1)^3(m-2)} 
			\left[\begin{array}{c}
				2\left\{(m-1)^2(3m-7)k^2 - 4(m-1)k -4\right\}\\
				-8(m-1)^2(m-2)k\\
				(m-2)\left\{ (m-1)^2k^2 + 4(m-1)k+4m\right\}\\
				4(m-1)^2(m-2)
			\end{array}
			\right]^{\T}
			\left[\begin{array}{c}
				R_4\\
				R_3R_1\\
				R_2^2\\ 
				R_1^2R_2
			\end{array}
			\right].\nonumber
\end{align*}

An application of $\widehat{\mu}_r$ and $\widehat{\sigma}^2_r$
is to approximate $\mathbb{D}$ in (\ref{eqt:representation_D_limit}).
Recall that the reference null distributions of the existing 
MI tests T2 and T3 in Table \ref{tab:approxNullDist}
require estimating $\mu_r$ and/or $\sigma^2_r$. 
However, in the literature, they cannot be estimated solely 
by the complete-data testing device $\mathcal{d}(\cdot)$. 
Consequently, those MI tests cannot be carried out 
if users are only equipped with $\mathcal{d}(\cdot)$.
In practice, this situation is not uncommon.   
So, our proposed method provides a viable route to implement existing MI tests.

\begin{example}\label{eg:varHat}
We consider the same setting as in Example \ref{eg:rHat}.
First, we compare the biases of $\widecheck{\sigma}_{r}^2$ and $\widehat{\sigma}_{r}^2$, 
that is, $\Bias(\widecheck{\sigma}^2_r) := \E(\widecheck{\sigma}^2_r)-\sigma^2_r$ and 
$\Bias(\widehat{\sigma}^2_r) := \E(\widehat{\sigma}^2_r)-\sigma^2_r$, respectively.  
The results are shown in Table \ref{table:SMI_L025-030_bias}. 
Note that $\widecheck{\sigma}_{r}^2$ is negatively biased for $\sigma^2_r$,
but $\widehat{\sigma}_r^2$ only has a negligible bias.

\begin{table*}[t]
\centering
\setlength{\tabcolsep}{3pt}
\footnotesize
\begin{tabular}{llrrrrrrrrr}\toprule
	&& \multicolumn{9}{c}{Value of $r_{\max}$}\\
	\cline{3-11}
	$m$ & Estimator  & $0.1$ & $0.2$ & $0.3$ & $0.4$ & $0.5$ & $0.6$ & $0.7$ & $0.8$ & $0.9$  \\
	\cmidrule(r){1-11}
	$5$ & $\widecheck{\sigma}^2_r$ (Biased)	&	$ -0.37$	&	$ -0.90$	&	$ -1.71$	&	$ -2.81$	&	$ -4.24$	&	$ -5.91$	&	$ -7.96$	&	$-10.09$	&	$-12.51$	\\ 
	& $\widehat{\sigma}^2_r$ (Unbiased) &	$  0.01$	&	$ -0.01$	&	$  0.00$	&	$ -0.03$	&	$  0.01$	&	$  0.12$	&	$ -0.08$	&	$ -0.02$	&	$  0.27$	\\ 
	\hline
	$10$ & $\widecheck{\sigma}^2_r$ (Biased)&	$ -0.17$	&	$ -0.41$	&	$ -0.77$	&	$ -1.20$	&	$ -1.87$	&	$ -2.64$	&	$ -3.56$	&	$ -4.51$	&	$ -5.75$	\\ 
	 & $\widehat{\sigma}^2_r$	(Unbiased) &	$  0.00$	&	$ -0.02$	&	$ -0.01$	&	$  0.06$	&	$  0.00$	&	$ -0.04$	&	$ -0.08$	&	$  0.02$	&	$ -0.16$	\\ 
	\bottomrule
\end{tabular}
\caption{\footnotesize The biases of $\widecheck{\sigma}^2_r$ and $\widehat{\sigma}^2_r$
after multiplying by $100$ 
when $m\in\{5,10\}$ and $r_{\max}\in\{0.1, \ldots, 0.9\}$.}
\label{table:SMI_L025-030_bias}
\end{table*}

Next, we study  
the MSE of $\widehat{\sigma}_{r}^2$, 
whose value in the log-scale, 
that is, $\log\E(\widehat{\sigma}_{r}^2-\sigma_r^2)^2$, is shown in 
Figure \ref{fig:SMI_L025-030_moment_smallk}.
Clearly, the MSE decreases when $m$ increases.
Besides, 
the MSE is less impacted by an increase of $k$
compared with Figure \ref{fig:SMI_L025-030_ind_smallk}. 
It is because we just need to estimate one parameter $\sigma^2_r$ for any $k$.

\begin{figure}
\begin{center}
\includegraphics[width=\linewidth]{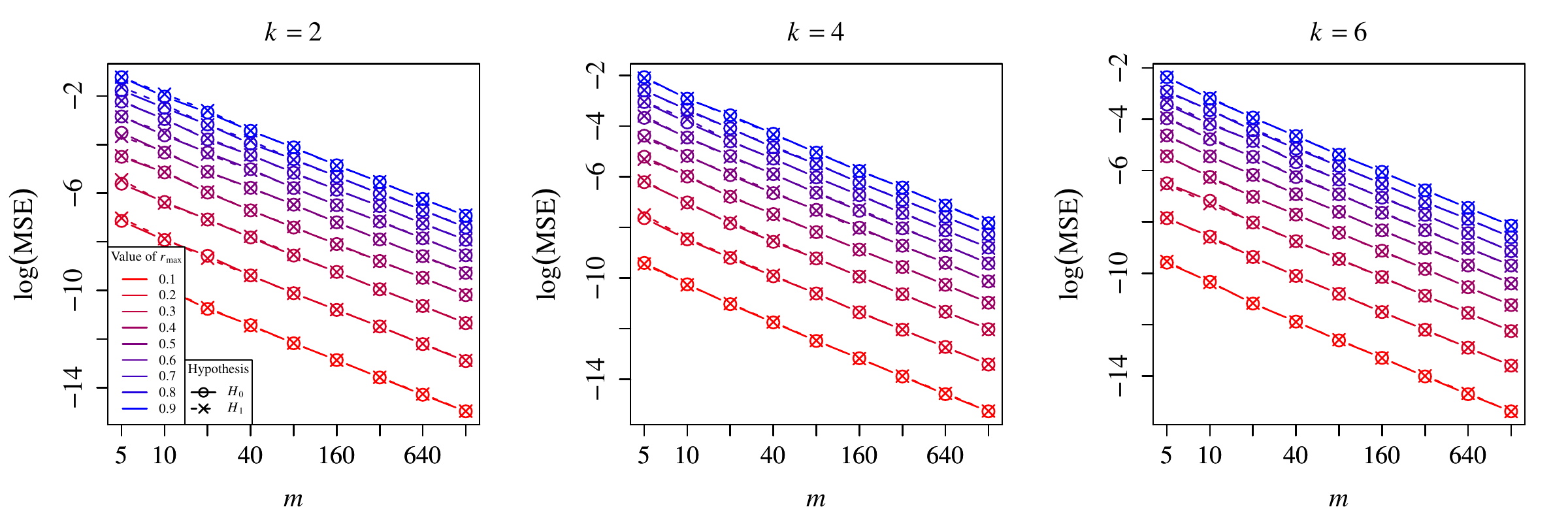}
\end{center}
\vspace{-0.7cm}
\caption{\footnotesize The log MSE of $\widehat{\sigma}_{r}^2$ in the log-scale, i.e.,
$\log\E(\widehat{\sigma}_{r}^2-\sigma_r^2)^2$. } 
\label{fig:SMI_L025-030_moment_smallk}
\end{figure}
\end{example}

\begin{remark}\label{rem:asy_equiv_of_murHat}   
Using Theorem \ref{thm:asy_Distof_bar_d_S},  
one can easily verify that $\widehat{\mu}_r$ in (\ref{eqt:muHatr})
is asymptotically equivalent to 
\begin{equation*}
	\overline{\mu}_r = \frac{\widehat{d}'-\widehat{d}''}{k(m-1)/m},
	\quad \text{where} \quad
	\widehat{d}' := \frac{1}{m} \sum_{\ell=1}^m \widehat{d}^{\{\ell\}},
	\qquad
	\widehat{d}'' := \widehat{d}^{\{1:m\}},
	\label{eqt:barmurDef}
\end{equation*}
for any $m$,
as $n\rightarrow\infty$.
So, one may use $\widehat{\mu}_r$ and $\overline{\mu}_r$ interchangeably 
without altering the asymptotic behaviors.  
Since computing $\widetilde{\mu}_{r,\wt}$ and $\widetilde{\mu}_{r,\lrt}$ defined in (\ref{eqt:Da})
require problem-specific device other than $\mathcal{d}(\cdot)$,
one may regard $\widehat{\mu}_r$ and $\overline{\mu}_r$ as computationally feasible alternatives 
to $\widetilde{\mu}_{r,\wt}$ and $\widetilde{\mu}_{r,\wt}$.
\end{remark}

\subsection{Hypothesis testing of the EOMI assumption}\label{sec:testEOMI}
This section provides a test for Condition \ref{ass:EOMI}.
Formally, we would like to test $H_0': \sigma^2_r = 0$ against $H_1': \sigma^2_r > 0$. 
Remark \ref{rem:EOMItest} discusses the purposes of testing $H_0'$.
Let the CV of $r_1, \ldots, r_k$ be
$c_r:=\sigma_r/\mu_r$, where $0/0 := 0$.
Since $c_r^2$ normalizes $\sigma^2_r$ by $\mu_r^2$, the quotient $c_r^2$ is scale-free.
For testing $H_0'$, 
we proposed the test statistic
\begin{eqnarray}\label{eqt:def_w}
	\widehat{Q} 
		:= \sqrt{m} {\widehat{\sigma}_r^2}/{\widehat{\mu}_r^2}
\end{eqnarray} 
if $\widehat{\mu}_r\neq 0$, and $\widehat{Q}=0$ if otherwise.
The limiting distribution of $\widehat{Q}$ under both $H_0'$ and $H_1'$ are stated below. 

\begin{proposition}\label{prop:wHat}
Assume Conditions \ref{cond-Normality}--\ref{cond-localH1}.
Define $Z_{j\ell}$'s according to Theorem \ref{thm:asy_Distof_bar_d_S}. 
We have    
\begin{equation}\label{eqt:conv_of_u}
	\widehat{Q} \inD   \mathbb{Q} := \sqrt{m}\left\{ \frac{\{k(m-1)+2\}m}{2(m-2)}\frac{\sum_{\ell=1}^m M_{\ell}^2}{\left(\sum_{\ell=1}^m M_{\ell}\right)^2} - \frac{(m-1)(k+2)}{2(m-2)} \right\},
\end{equation}
as $n\rightarrow\infty$ for any $m\geq 3$, 
where $M_{\ell} := \sum_{j=1}^k r_j (Z_{j\ell} - \bar{Z}_{j\bullet})^2$ and $\bar{Z}_{j\bullet} := m^{-1}\sum_{\ell=1}^m Z_{j\ell}$. 
If $H_0'$ is true, 
$\mathbb{Q}$ reduces to $\mathbb{Q}_0$
which is defined as in (\ref{eqt:conv_of_u}) except that 
$M_{\ell}$ is replaced by $M_{0\ell} := \sum_{i=1}^k  (Z_{i\ell} - \bar{Z}_{i\bullet})^2$.
\end{proposition}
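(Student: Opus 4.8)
The plan is to reduce the statement to the joint weak limit of the jackknife SMI statistics, which is already available from Theorem~\ref{thm:asy_Distof_bar_d_S} and Proposition~\ref{prop:limit_of_J}, and then to push that limit through the explicit rational formula defining $\widehat Q$ by the continuous mapping theorem and Slutsky's lemma. Throughout I work with $\Lambda=\Lambda_{\Jack}$, so that $\widehat\mu_r$ and $\widehat\sigma_r^2$ in \eqref{eqt:estimator_mur_sigmar} are functions of $\widehat t_1=m^{-1}\sum_\ell\widehat T_\ell$ and $\widehat t_2=m^{-1}\sum_\ell\widehat T_\ell^2$, which in turn are continuous functions of the vector $(\widehat T_1,\dots,\widehat T_m)$, where $\widehat T_\ell=\widehat T_{\{\ell\},\{-\ell\}}$.

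First I would identify the joint limit of $(\widehat T_1,\dots,\widehat T_m)$. Theorem~\ref{thm:asy_Distof_bar_d_S} gives joint convergence of the stacked devices $\widehat d^{\{\ell\}}$, $\widehat d^{\{-\ell\}}$ and $\widehat d^{\{1:m\}}$, hence joint convergence of $(\widehat T_1,\dots,\widehat T_m)$; by Proposition~\ref{prop:limit_of_J}(1), the limit of $\widehat T_\ell$ is $\tfrac{m-1}{m}\sum_{j}r_j\{\bar Z_{j(\{\ell\})}-\bar Z_{j(\{-\ell\})}\}^2$, with the $Z_{j\ell}$ shared across $\ell$. Using $\bar Z_{j(\{\ell\})}=Z_{j\ell}$ and $\bar Z_{j(\{-\ell\})}=(m-1)^{-1}\sum_{\ell'\neq\ell}Z_{j\ell'}$, the elementary identity $Z_{j\ell}-\bar Z_{j(\{-\ell\})}=\tfrac{m}{m-1}(Z_{j\ell}-\bar Z_{j\bullet})$ turns this limit into $\tfrac{m}{m-1}M_\ell$, where $M_\ell=\sum_j r_j(Z_{j\ell}-\bar Z_{j\bullet})^2$, jointly over $\ell$. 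By the continuous mapping theorem, $\widehat t_1\inD(m-1)^{-1}\sum_\ell M_\ell$ and $\widehat t_2\inD m(m-1)^{-2}\sum_\ell M_\ell^2$ jointly.

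Next I would substitute these into $\widehat\mu_r=\widehat t_1/k$ and into the closed form of $\widehat\sigma_r^2$ in \eqref{eqt:estimator_mur_sigmar}, and apply the continuous mapping theorem once more to $\widehat Q=\sqrt m\,\widehat\sigma_r^2/\widehat\mu_r^2$. A short computation in which every power of $k$ and of $m-1$ cancels gives precisely $\widehat Q\inD\mathbb Q$ with $\mathbb Q$ as displayed in \eqref{eqt:conv_of_u}. For this step I must verify that the underlying map $(x_1,\dots,x_m)\mapsto\sqrt m\bigl[\tfrac{\{k(m-1)+2\}m}{2(m-2)}\tfrac{\sum_\ell x_\ell^2}{(\sum_\ell x_\ell)^2}-\tfrac{(m-1)(k+2)}{2(m-2)}\bigr]$ is continuous at the limit point, i.e.\ $\sum_\ell M_\ell\neq0$ almost surely: since $\sum_\ell M_\ell=\sum_j r_j\sum_\ell(Z_{j\ell}-\bar Z_{j\bullet})^2$ and, for $m\geq2$, each inner sum is a.s.\ strictly positive, $\sum_\ell M_\ell>0$ a.s.\ whenever some $r_j>0$, i.e.\ whenever $\mu_r>0$ (the case $\mu_r=0$ being degenerate and excluded). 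The piecewise convention $\widehat Q=0$ on $\{\widehat\mu_r=0\}$ is harmless because $\widehat\mu_r$ converges in distribution to an a.s.\ positive limit, so $\Pr(\widehat\mu_r=0)\to0$ and Slutsky's lemma applies.

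Finally, under $H_0'$ one has $\sigma_r^2=0$, hence $r_1=\dots=r_k=\mu_r$, so $M_\ell=\mu_r M_{0\ell}$ with $M_{0\ell}=\sum_i(Z_{i\ell}-\bar Z_{i\bullet})^2$; the common factor $\mu_r^2$ cancels between $\sum_\ell M_\ell^2$ and $(\sum_\ell M_\ell)^2$, so $\mathbb Q$ collapses to $\mathbb Q_0$, which is \eqref{eqt:conv_of_u} with $M_\ell$ replaced by $M_{0\ell}$. I expect the only genuinely non-routine point to be the bookkeeping around the non-negative denominator — establishing a.s.\ positivity of $\sum_\ell M_\ell$ so that the ratio map is continuous at the limit, and discharging the piecewise definition of $\widehat Q$ via Slutsky — since the algebraic reductions are immediate once Proposition~\ref{prop:limit_of_J} and the explicit form of $\widehat\sigma_r^2$ are invoked.
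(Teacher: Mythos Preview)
Your proposal is correct and follows essentially the same route as the paper's own proof, which is a one-liner: observe that $\widehat Q$ is a continuous function of $(\widehat T_1,\dots,\widehat T_m)$, invoke the joint weak limit from Proposition~\ref{prop:limit_of_J}, and apply the continuous mapping theorem. You supply considerably more detail than the paper does---the explicit identification $\mathbb T_\ell=\tfrac{m}{m-1}M_\ell$, the verification that $\sum_\ell M_\ell>0$ a.s.\ so the ratio map is continuous at the limit, the Slutsky argument discharging the piecewise definition of $\widehat Q$, and the scale-cancellation that yields $\mathbb Q_0$ under $H_0'$---but none of this departs from the paper's strategy; it simply fills in steps the paper leaves implicit.
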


We emphasize that proposition \ref{prop:wHat} is true even for any $m\geq 3$.
Under $H_0'$, $\widehat{Q}\inD\mathbb{Q}_0$, which is a pivotal distribution 
because of self normalization. 
It is desirable because the quantiles of $\mathbb{Q}_0$ 
can be tabulated. 
Some commonly used quantiles are shown in Table \ref{tab:qQ0}.
The hypothesis $H_0'$ is rejected at size $\alpha_0\in(0,1)$ if $\widehat{Q} > q_{1-\alpha_0}$, where 
$q_{1-\alpha_0}$ is the $1-\alpha_0$ quantile of $\mathbb{Q}_0$.
Alternatively, we can compute the $p$-value by simulation
as in Algorithm \ref{algo:SMI_exact}.
The power of the test is investigated in Example \ref{eg:power_of_test_of_EFMI} below.

\begin{table}
\centering
\footnotesize
\begin{tabular}{cccccccccc}
\toprule
$m\backslash k$	 & 2 & 3 & 4 & 5 & 6 & 7 & 8 & 9 & 10  \\
\cmidrule(r){1-10}
	$5$	&	$2.57$	&	$2.75$	&	$2.81$	&	$2.85$	&	$2.88$	&	$2.89$	&	$2.91$	&	$2.92$	&	$2.93$	\\[0.5ex]
	$10$	&	$3.12$	&	$3.05$	&	$3.00$	&	$2.96$	&	$2.94$	&	$2.92$	&	$2.90$	&	$2.89$	&	$2.88$	\\[0.5ex]
	$20$	&	$3.37$	&	$3.18$	&	$3.07$	&	$3.00$	&	$2.95$	&	$2.91$	&	$2.88$	&	$2.85$	&	$2.83$	\\[0.5ex]
	$30$	&	$3.45$	&	$3.22$	&	$3.09$	&	$3.00$	&	$2.94$	&	$2.89$	&	$2.85$	&	$2.83$	&	$2.80$	\\[0.5ex]
\bottomrule
\end{tabular}
\caption{\footnotesize The 95\% quantiles of $\mathbb{Q}_0$ for different $m$ and $k$. The quantiles are estimated by $10^8$ random replications.} 
\label{tab:qQ0}
\end{table}

\begin{example}\label{eg:power_of_test_of_EFMI}
This example investigates the power achieved by the test statistic $\widehat{Q}$ 
at different values of $m$ when $n\rightarrow\infty$. 
Suppose that 
$r_1 = \cdots = r_{k-1} = 1-C$ and $r_k = 1+(k-1)C$,
where $C \in [0,1]$ and $k \in \{5, 10\}$. 
Then, $\mu_r = 1$, $\sigma^2_r = (k-1)C^2$ and $c_r = C\sqrt{k-1}$. 
Note that $H_0'$ is true if and only if $C = 0$. 
We compute the power of the test at 5\% level.
The results are shown in Figure \ref{fig:SMI_C002}. 
Under $H_0'$, the test has an exact size for all $m$. It verifies Proposition \ref{prop:wHat}. 
Although the test is increasingly more powerful when $m$ or $c_r$ increases, 
it does not perform well for small $m$. 
Since the values of $r_1, \ldots, r_k$ are reflected by the variability across $m$ imputed datasets, 
the power achieved by the test statistic $\widehat{Q}$ increases with $m$.
For example, in the case $c_r \approx 1$ and $k=5$, 
a decent power ($>60\%$) can be obtained when $m\geq20$, 
but the power is quite low ($<30\%$) when $m\leq 10$.  

\begin{figure}
\begin{center}
\includegraphics[width=.8\textwidth]{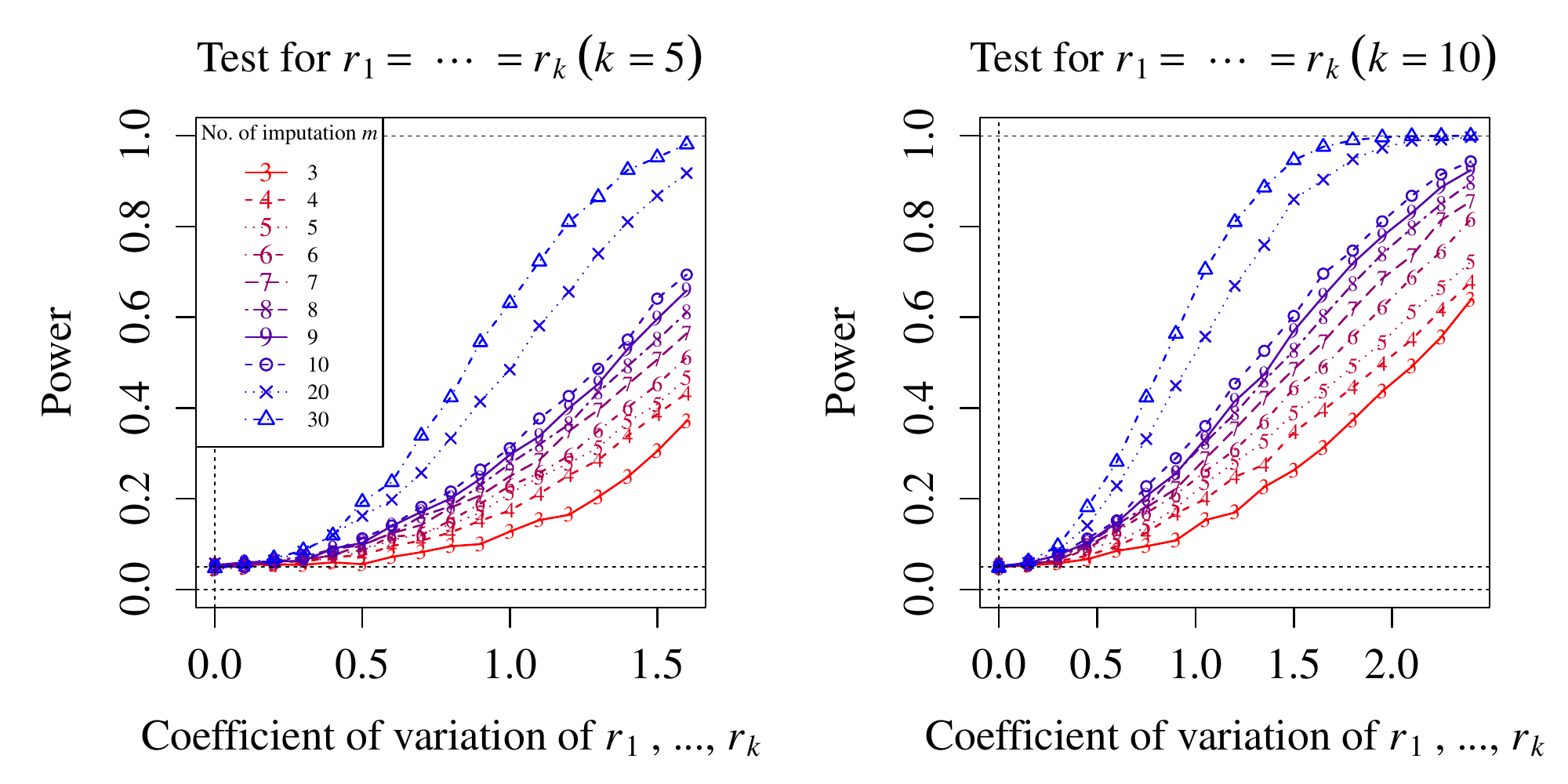}
\end{center}
\vspace{-0.7cm}
\caption{\footnotesize The power curves of the test (\ref{eqt:def_w}) 
against the CV of $r_{1:k}$ when $k\in\{5,10\}$. 
The nominal size is $\alpha_0=5\%$.} 
\label{fig:SMI_C002}
\end{figure}
\end{example}

\begin{remark}\label{rem:EOMItest}
The null $H_0'$, that is, EOMI, is almost nearly violated in practice, 
however, it is hard to be rejected when $m$ is small.
The message is that, statistically, it is not wrong to assume 
EOMI when there are only a small $m$ number of imputed datasets.
The test that assume EOMI still gives reasonable conclusion, but 
the price is that it does not control the type-I error very well as demonstrated in Section 
\ref{sec:theoretical_test_size}.
So, this section serves as a theoretical guard to the existing results
that assume EOMI
despite the fact that they could be improved.  
See Section \ref{sec:eg_contingency_table} for a numerical example. 
\end{remark}

\subsection{Other approximations of the reference null distribution $\mathbb{D}$}\label{sec:approxMethod}
In Section \ref{sec:testing},
an asymptotically exact MI test for  
$H_0:\theta =\theta_0$  
is presented. 
The proposed test statistic $\widehat{D}$ is referred to (\ref{eqt:DmHat_est}), 
which is an asymptotically \emph{exact} approximation of $\mathbb{D}$ .
Alternatively, we may approximate $\mathbb{D}$
by projecting it into a distribution depending only on $\mu_r$ and $\sigma^2_r$.
This approach is simpler, but the resulting approximations may be \emph{inexact} even asymptotically. 
In this section, we discuss two classes of such approximations:
(i) deterministic moments matching, and (ii) stochastic moments matching.  
There are two goals of this discussion. 
The first one is to provide some algorithmically simpler approximations 
to users who just want quick pilot results. 
The second one is to argue that straightforward modifications of existing results 
are not sufficient to produce good results.

In the literature, 
$\mathbb{D}$ is usually approximated by matching moments \emph{deterministically}. 
It means that $\mathbb{D}$ is approximated by a certain traceable 
parametric distribution, say $\mathbb{D}_{\text{proj}}$,
such that the parameters are determined by matching moments of $\mathbb{D}$ and $\mathbb{D}_{\text{proj}}$.
This class of methods include, for example, 
\cite{mengPhDthesis}, \cite{li91JASA} and \cite{chanMeng2017_MILRT}. 
However, 
all existing approximations either require Condition \ref{ass:EOMI} or $m\rightarrow\infty$.
Consequently, they may perform poorly when these assumptions fail.
We provide a trivial modification of the approximations provided in \cite{chanMeng2017_MILRT}
without resorting to Condition \ref{ass:EOMI} or $m\rightarrow\infty$. 
Following \cite{chanMeng2017_MILRT}, 
we approximate the numerator and the denominator of $\mathbb{D}$ in (\ref{eqt:representation_D_limit}) 
by two independent scaled $\chi^2$ random variables, 
$\alpha_1 \chi^2_{\beta_1}/\beta_1$ and $\alpha_2 \chi^2_{\beta_2}/\beta_2$, respectively. 
By matching the first two moments,
we have 
\[
	\alpha_1 = \alpha_2 = 1+\left(1+{1}/{m}\right)\mu_{r}, 
	\qquad
	\beta_1 = \frac{k}{ 1+\zeta_1^2 }, 
	\qquad
	\beta_2 = \frac{k(m-1)}{ \zeta_2^2 + \zeta_1^2} ,
\]
where 
\begin{equation*}\label{eqt:r_v_zeta_f}
	\zeta_1 = \frac{\left(1+\frac{1}{m}\right)\sigma_{r}}{1+\left(1+\frac{1}{m}\right)\mu_{r} } 
	\qquad\text{and}\qquad
	\zeta_2 = \frac{\left(1+\frac{1}{m}\right)\mu_{r}}{1+\left(1+\frac{1}{m}\right)\mu_{r}}.
\end{equation*}
Consequently, 
we obtain $F(\beta_1,\beta_2)$ as an approximation for $\mathbb{D}$. 
Since $\beta_1$ and $\beta_2$ depend on the unknowns $\mu_r$ and $\sigma^2_r$, 
one may replace $\mu_r$ and $\sigma^2_r$ by $\widehat{\mu}_r$ and $\widehat{\sigma}^2_r$ in practice;
see Section \ref{sec:estimator_of_moments_r} for more details.  
Unfortunately, this approximation can be terribly bad because the first degree of freedom ($\beta_1$)
can be severely affected by $\zeta_1$. 
If $\zeta_1$ is not accurately estimated, the resulting $F$-approximation can be seriously off from $\mathbb{D}$. 
Apart from $F$-approximation, there are other parametric distributions for approximating $\mathbb{D}$.
One example is given here. 
From (\ref{eqt:representation_D_limit}), 
we know that $\mathbb{D}$ admits the form $K_1/ (1+K_2)$, 
where $K_1$ and $K_2$ are two weighted sums of $\chi^2$ random variables. 
One may approximate $K_1$ and $K_2$ individually
by some existing methods, for example, \cite{BodenhamAdams2016}. 
However, this ad-hoc method requires two separate calibrations, 
which may result in an unsatisfactory overall approximation of $K_1/(1+K_2)$.   
Indeed, this type of approximations are only slightly better than the $F$-approximation mentioned above.

We propose another approximation that is tailor-made for $\mathbb{D}$.
Note that 
summarizing $r_1, \ldots, r_k$ deterministically through $\mu_r$ and $\sigma^2_r$ 
ignores the fact that the same set of $r_1, \ldots, r_k$
is used in both denominator and numerator of $\mathbb{D}$ in (\ref{eqt:representation_D_limit}). 
It motivates us to generate a set of random variables, 
say $\xi_1, \ldots, \xi_k$, to replace $r_{1}, \ldots, r_{k}$
in both denominator and numerator of $\mathbb{D}$ simultaneously. 
This method fully respects the functional dependence of $r_1, \ldots, r_k$ in $\mathbb{D}$, 
hence, it may lead to a better overall performance.
The random variables $\{\xi_j\}$ are chosen to match the mean and variance of $\{r_{j}\}$. 
So, we call it a \emph{stochastic} moment matching approach. 
In particular, we propose approximating $\mathbb{D}$ by
\begin{eqnarray}\label{eqt:Dcheck}
	\widecheck{\mathbb{D}} = \frac{\frac{1}{k}\sum_{j=1}^k \{1+(1+\frac{1}{m})\xi_{j}\} G_j}{ 1+\frac{1}{k}\sum_{j=1}^k (1+\frac{1}{m})\xi_{j} H_j}
\end{eqnarray}
and
$\xi_1, \ldots \xi_k \sim \text{Gamma}(\alpha, \beta)$, 
$G_1,\ldots, G_k \sim \chi^2_1$ and $H_1, \ldots, H_k \sim \chi^2_{m-1}/(m-1)$ are independent. 
The parameters $\alpha,\beta$ are determined 
such that the mean and variance of $\text{Gamma}(\alpha, \beta)$ are
${\alpha}/{\beta} = \mu_r$ and 
${\alpha}/{\beta^2} = \sigma^2_r$. 
Equivalently, the parameters $\alpha$ and $\beta$ are given by
$\alpha =  {{\mu}_r^2}/{{\sigma}^2_r}$
and
$\beta =  {{\mu}_r}/{{\sigma}^2_r}$,
which can be consistently estimated by 
$\widehat{\alpha} = {\widehat{\mu}_r^2}/{\widehat{\sigma}^2_r}$
and 
$\widehat{\beta} = {\widehat{\mu}_r}/{\widehat{\sigma}^2_r}$,
respectively.
In practice, we need to generate 
$\xi_j^{(\iota)} \sim \text{Gamma}(\widehat{\alpha}, \widehat{\beta})$
independently
for $\iota=1, \ldots, N$ and $j=1, \ldots,k$.
Then, upon conditioning on $\widehat{\mu}_r$ and $\widehat{\sigma}^2_r$, 
we can generate $N$ random replicates of $\widecheck{\mathbb{D}}$ as follows:
\begin{eqnarray}\label{eqt:Dcheck_sim}
	\widecheck{\mathbb{D}}^{(\iota)} = \frac{\frac{1}{k}\sum_{j=1}^k \left\{1+(1+\frac{1}{m})\xi_{j}^{(\iota)}\right\} G_j^{(\iota)}}{ 1+\frac{1}{k}\sum_{j=1}^k (1+\frac{1}{m})\xi_{j}^{(\iota)} H_j^{(\iota)}}, 
	\qquad \iota=1, \ldots, N.
\end{eqnarray}
Similar to Algorithm \ref{algo:SMI_exact}, critical value and $p$-value can be computed accordingly. 
Step-by-step procedure for computing 
$\widehat{p}$ is presented in Algorithm \ref{algo:SMI_approx}.
However, this approximation may not work well when $r_{1:k}$ are not approximately ``distributed'' 
as $\Ga(\alpha, \beta)$. 
Hence, it should be used with caution.

\begin{algorithm}[t]
\caption{Approximately correct MI test for $H_0$}\label{algo:SMI_approx}
\SetAlgoVlined
\DontPrintSemicolon
\SetNlSty{texttt}{[}{]}
\small
\textbf{Input}: {\;
(i) $X\mapsto\mathcal{d}(X)$ -- a function returning the complete-data test statistic; \;
(ii) $X^1, \ldots, X^m$ -- $m$ properly imputed datasets; and\;
(iii) $k$ -- dimension of $\Theta$.}\;
\Begin{
Obtain $\widehat{t}_1, \widehat{t}_2$, $\widehat{d}^{\{1:m\}}$ and $\widehat{d}^{\{1\}}, \ldots, \widehat{d}^{\{m\}}$ computed in Algorithm \ref{algo:SMI_exact}. \;
Compute $\widehat{D}$ according to (\ref{eqt:asy_version_D_proposal}).\;
Compute $\widehat{\mu}_r \gets \widehat{t}_1/k$ and 
	$\widehat{\sigma}_{r}^2 \gets [\left\{k(m-1)+2\right\} \widehat{t}_2
			-(m-1)(k+2)\widehat{t}_1^2]/\{2k^2(m-2)\}$.\;
Compute $\widehat{\alpha}\gets {\widehat{\mu}_r^2}/{\widehat{\sigma}^2_r}$ and 
		$\widehat{\beta}\gets {\widehat{\mu}_r}/{\widehat{\sigma}^2_r}$.\;
Draw $\xi_j^{(\iota)} \sim \text{Gamma}(\widehat{\alpha}, \widehat{\beta})$, 
	$G_j^{(\iota)}\sim \chi^2_1$ and $H_j^{(\iota)}\sim \chi^2_{m-1}/(m-1)$ independently 
	for $\iota=1, \ldots, N$ and $j=1, \ldots, k$. Set $N=10^4$ by default.\;
Compute $\widecheck{\mathbb{D}}^{(\iota)}$, $\iota=1, \ldots, N$,  according to (\ref{eqt:Dcheck_sim}).\;
Compute $\widehat{p} \gets \sum_{\iota=1}^N \mathbb{1}(\widecheck{\mathbb{D}}^{(\iota)} > \widehat{D})/N$.
}	
\textbf{return} $\widehat{p}$ -- the $p$-value for testing $H_0$ against $H_1$.
\end{algorithm}

In a nutshell, the most promising MI test is our major proposal T4. 
It is theoretically grounded, and does not require any strong assumption.

\section{Simulation details and extra examples}\label{sec:sim_details}

\subsection{Follow-up study of Example \ref{eg:rHat}}\label{sec:theoretical_test_size}
This subsection investigates the size accuracy of the MI test statistic $\widehat{D}$
with the reference null distributions stated in Table \ref{tab:approxNullDist}, that is,  
T1 ($\chi^2_k/k$), 
T2 \citep{li91JASA}, 
T3 \citep{mengPhDthesis}, and 
T4 (the proposal in Algorithm \ref{algo:SMI_exact}).
The unknowns $\mu_r$ and $\sigma^2_r$ in T2 and T3 are estimated by 
$\widehat{\mu}_r$ and $\max(\widehat{\sigma}_{r}^2,0)$, respectively; 
see (\ref{eqt:estimator_mur_sigmar}). 
In T4, 
$r_{1:k}$ are estimated by $\widehat{r}_{1:k}$.

Similar to Example \ref{eg:rHat}, 
the experiments are performed when $n\rightarrow \infty$ but $m \in\{10,20\}$ is fixed.
In particular, we set $k=4$ with
$r_1 = r_2 = 0$ and $r_3 = r_4 = r_{\max}$, 
where $r_{\max} \in \{0.3, 0.6, 0.9\}$. 
The MI tests are performed
under $H_0$ (i.e., $\delta_1 = \cdots = \delta_k=0$ in (\ref{eqt:representation_of_d}))
when different 
nominal sizes (i.e., type-I error rates) $\alpha_0\in[1\%, 5\%]$ are used. 
Figure \ref{fig:SMI_K22} shows the results.

Among T1, T2 and T4, 
our proposed T4 has the highest size accuracy in all cases. 
It means that T4 controls the type-I error rate very well at the nominal value $\alpha_0$.
The performance of T3 is trickier. 
When $m=10$, the size of T3 can be accurate in some cases (e.g., $r_{\max}=0.9$), 
but inaccurate in some cases (e.g., $r_{\max}=0.3$). 
When $m$ increases, its size accuracy deteriorates.
This counterintuitive phenomenon is due to its reliance on $m\rightarrow\infty$. 
The performance of T3 is not trustworthy unless $m$ is large enough. 
In a nutshell, 
the proposed MI test T4 outperforms all other MI tests when $m=20$.

\begin{figure}
\begin{center}
\includegraphics[width=\linewidth]{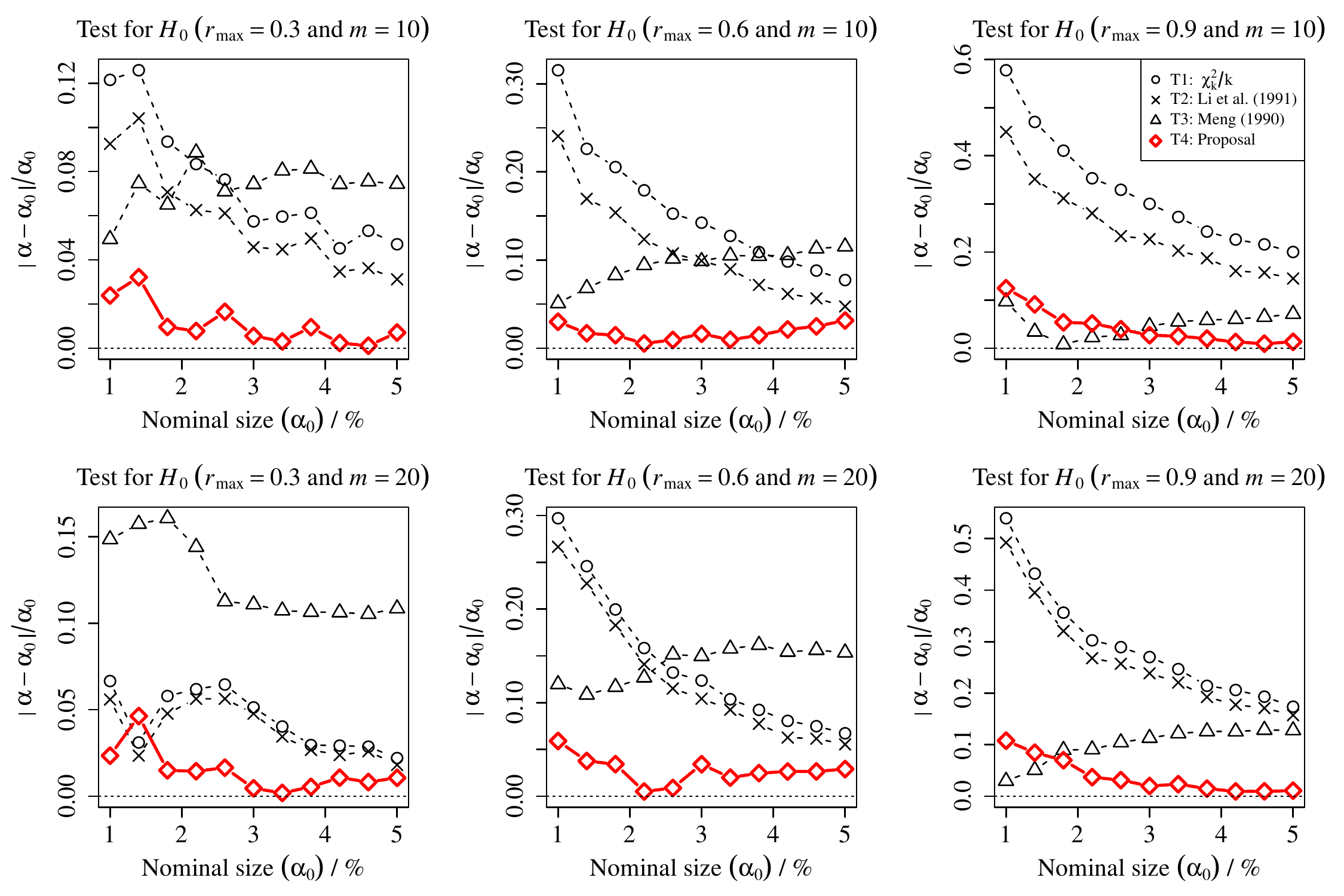}
\end{center}
\vspace{-0.5cm}
\caption{\footnotesize The relative size error $|\alpha-\alpha_0|/\alpha_0$ of MI tests, where $\alpha_0$ is the nominal size 
and $\alpha$ is the actual size. See Section \ref{sec:theoretical_test_size}.} 
\label{fig:SMI_K22}
\end{figure}

\subsection{Imputation model and additional results in Section \ref{sec:eg_reg}}\label{sec:imputationModel_regression}
Since the missing pattern is monotone, 
we can re-arrange the dataset such that $I_{ij}\geq I_{i'j}$ for all $i<i'$ and $j$ 
without loss of generality.
Let 
$\tilde{X}_j = (\tilde{x}_{1j}, \ldots, \tilde{x}_{n_jj})^{\T}$  and $n_j = \sum_{i=1}^n I_{ij}$, 
where
$\tilde{x}_{ij} =(1, x_{i1}, \ldots, x_{i,j-1})^{\T}$ and 
$w_j = (x_{1j}, \ldots, x_{n_jj})^{\T}$.
We assume the following Bayesian imputation model: 
\begin{eqnarray*}
	\left(x_{i1} \mid \phi_1 , v_1\right) &\sim& \Normal(\phi_1, v_1) ,\\
	\left(x_{ij} \mid x_{i1},\ldots, x_{i,j-1}, \phi_j, v_j\right) &\sim& \Normal(\phi_j^{\T} \tilde{x}_{ij}, v_j),
	\qquad j=2,\ldots,p ,\\
	f(\phi_1, \ldots, \phi_p, v_1, \ldots, v_p)
	&\propto& 1/(v_1\cdots v_p).
\end{eqnarray*} 
A Gibbs sampler is used to draw posterior samples from 
$[ v_1, \ldots, v_p, \phi_1, \ldots, \phi_p , X_{\mis}\mid X_{\obs}]$,
where $X_{\obs} = \{ x_{ij} : I_{ij}=1\}$ and $X_{\mis} = \{ x_{ij} : I_{ij}=0\}$.
Discarding the first $N_{\text{burn}}$ of the posterior samples as burn-in, 
we select $m$ subsequent generated samples of $(v_1, \ldots, v_p, \phi_1, \ldots, \phi_p)$
each separated by $N_{\text{thin}}$ iterations. 
Then the imputed missing data are generated upon conditioning on the selected parameter samples.

Let
$\mathbb{Y} := (y_1-\bar{y}, \ldots, y_n-\bar{y})^{\intercal} \in\mathbb{R}^{n\times 1}$, and
$\mathbb{X} := (x_1-\bar{x}, \ldots, x_n-\bar{x})^{\intercal}\in \mathbb{R}^{n\times p}$
be two matrices, 
where 
$\bar{y} := \sum_{i=1}^n y_i/n$ and 
$\bar{x} := \sum_{i=1}^n x_i/n$.
Also denote the (unavailable) complete dataset by $Z_{\com} := \{ y_i, x_i : 1\leq i\leq n\}$. 
The devices for computing the complete-data Wald's, LR, and RS test statistics are 
\begin{gather*}
	\mathcal{d}_{\wt}(Z_{\com}) 
		= \frac{1}{\widehat{\sigma}^2} 
			\mathbb{Y}^{\intercal}\mathbb{X}(\mathbb{X}^{\intercal}\mathbb{X})^{-1}
			\mathbb{X}^{\intercal}\mathbb{Y}, \qquad
	\mathcal{d}_{\lrt}(Z_{\com}) 
		= 2n\log\frac{\widehat{\sigma}_0}{\widehat{\sigma}}, \\
	\mathcal{d}_{\rt}(Z_{\com}) 
		= \frac{1}{\widehat{\sigma}_0^2} 
			\mathbb{Y}^{\intercal}\mathbb{X}(\mathbb{X}^{\intercal}\mathbb{X})^{-1}
			\mathbb{X}^{\intercal}\mathbb{Y},
\end{gather*}
respectively, 
where 
$\widehat{\sigma}^2_0 := \sum_{i=1}^n(y_i - \widehat{\beta}_0)^2/n$, 
$\widehat{\sigma}^2 := \sum_{i=1}^n(y_i - \widehat{\beta}_0- \widehat{\beta}x_i)^2/n$, 
$\widehat{\beta}_0 := \bar{y}$, and 
$\widehat{\beta} := (\mathbb{X}^{\intercal}\mathbb{X})^{-1}\mathbb{X}^{\intercal}\mathbb{Y}$.

When $m=30$, the power performance 
is shown in Figure \ref{fig:SMI_I34_H1_full}.
When $m=10$, the results of the size accuracy and power performance are shown in 
Figure \ref{fig:SMI_I33_H0_full} and \ref{fig:SMI_I33_H1_full}, respectively.

We also investigate the performance of the tests when $m$ gradually increases from $2$ 
to $128$; see Remark \ref{rem:small_m} for a discussion about the role of $m$.
From Figure \ref{fig:SMI_U001} (a), 
the maximum relative size error of T4 approaches zero when $m$ increases.
Indeed, T4 is uniformly better than T1--T3 when $m>4$. 
Its performance stabilizes when $m\geq 16$. 
However, when $m=2$, that is, the minimum value for MI to be well-defined, 
T4 is slightly worse than T2.
Since T4 requires weaker assumptions, 
a larger imputation number ($m$) is needed in order to estimate more a complicated 
null distribution $\mathbb{D}$.
Fortunately, it is very unusual to use $m=2$ in real practice.

\begin{remark}\label{rem:small_m}
The value of $m$ plays a similar role as in the classical MI inference. 
According to Theorem 2 of \citet{wangRobins1998}, 
the MI estimators are in general inefficient unless $m\rightarrow\infty$
because MI is essentially \emph{``a size $m$ Monte Carlo simulation from the posterior''}; 
see Section 1.2 of \citet{XXMeng2017}.
More discussions can be found in \citet{rubin1996multiple}. 
Hence, for all MI tests, including our proposal and the classical Rubin's MI test, 
the power and size-accuracy increases as $m$ increases. 
Moreover, the nuisance parameters concerning the missing mechanism,  
for example, $B$, $r_1, \ldots, r_m$, 
can be more precisely estimated when $m$ increases. 
\end{remark}

\begin{figure}
\centering
\includegraphics[width=\linewidth]{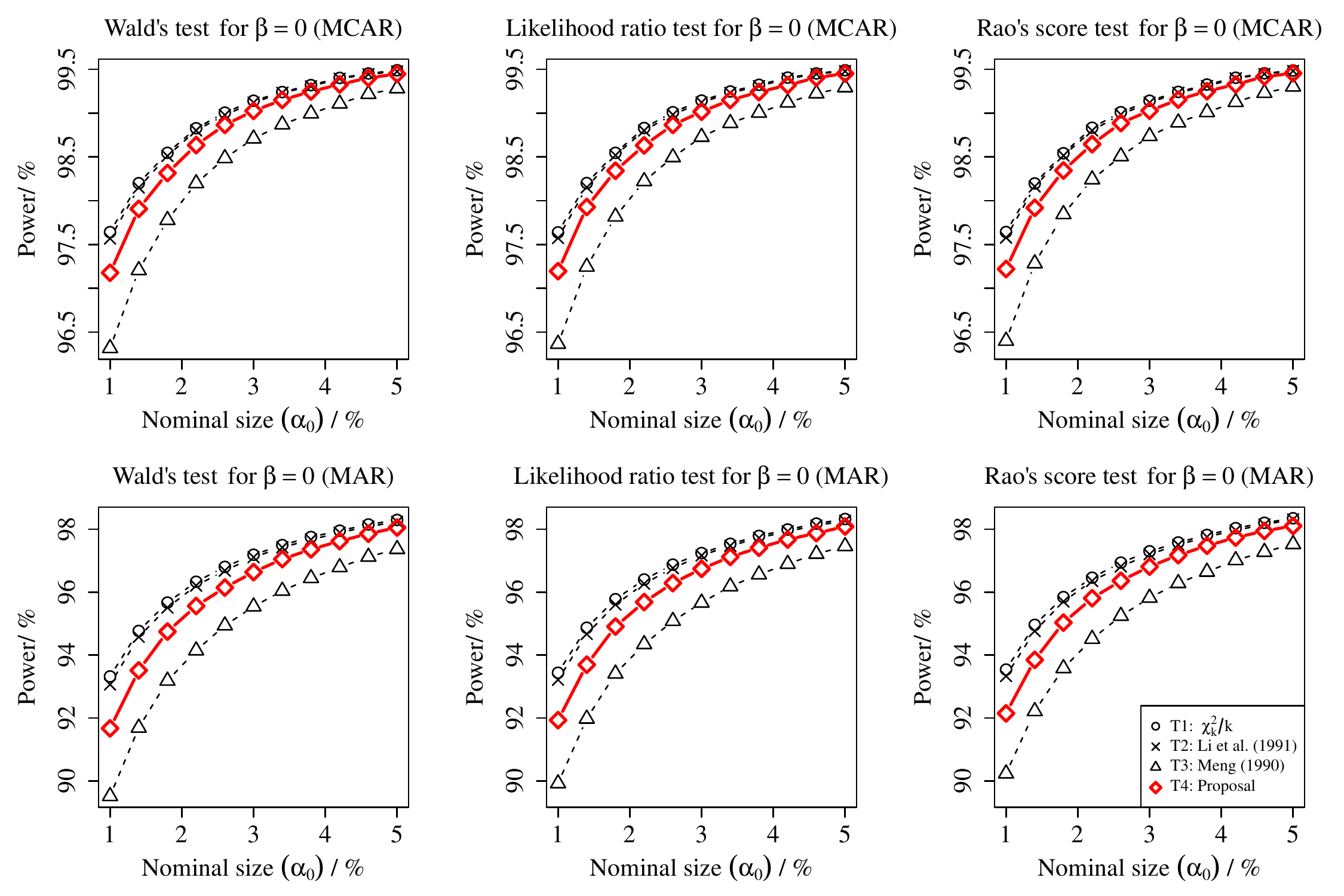}
\vspace{-0.5cm}
\caption{Powers of the tests in Section \ref{sec:eg_reg} when $m=30$.}
\label{fig:SMI_I34_H1_full}
\end{figure}

\begin{figure}
\centering
\includegraphics[width=\linewidth]{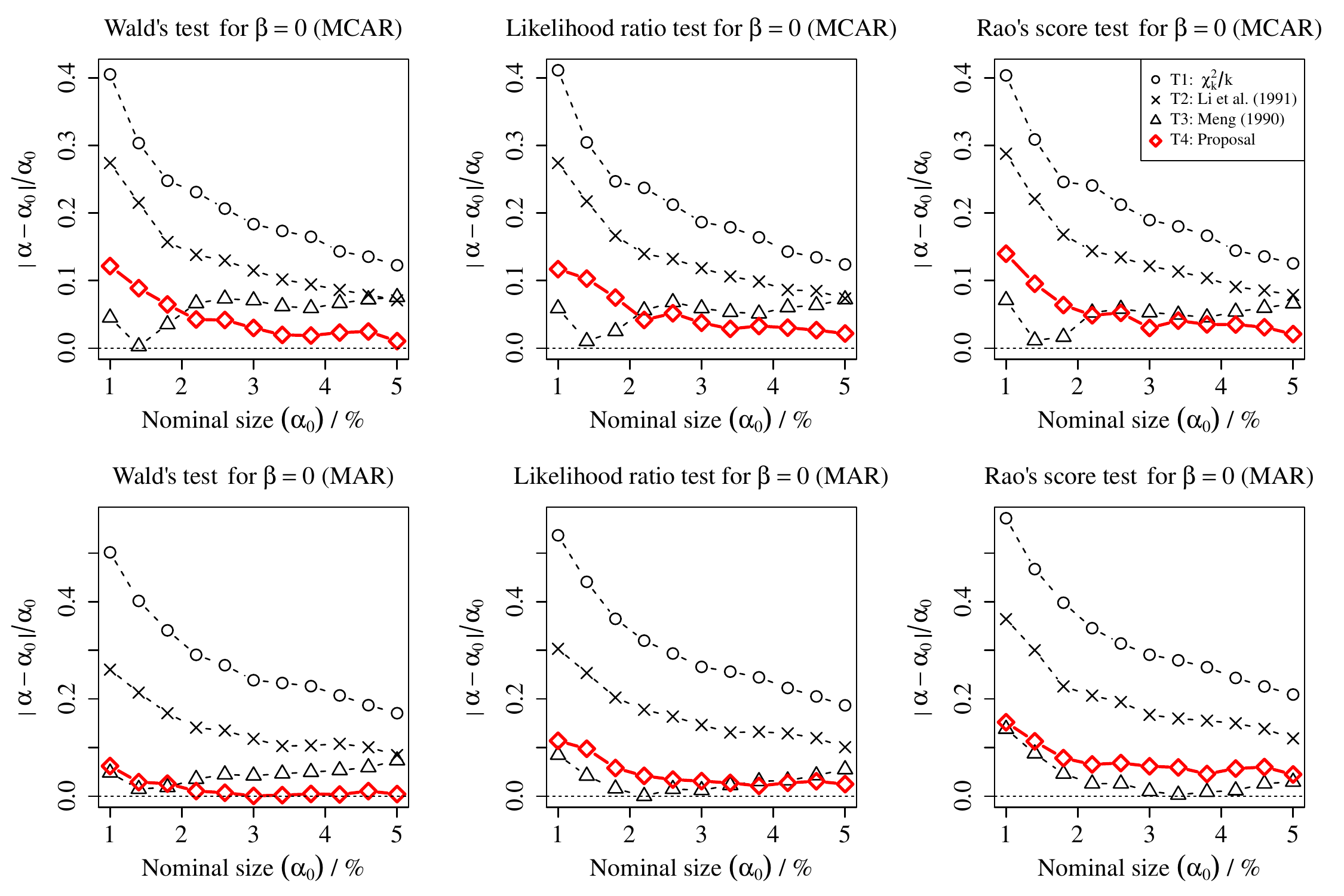}
\vspace{-0.5cm}
\caption{Size accuracy of the tests in Section \ref{sec:eg_reg} when $m=10$.}
\label{fig:SMI_I33_H0_full}
\end{figure}

\begin{figure}
\centering
\includegraphics[width=\linewidth]{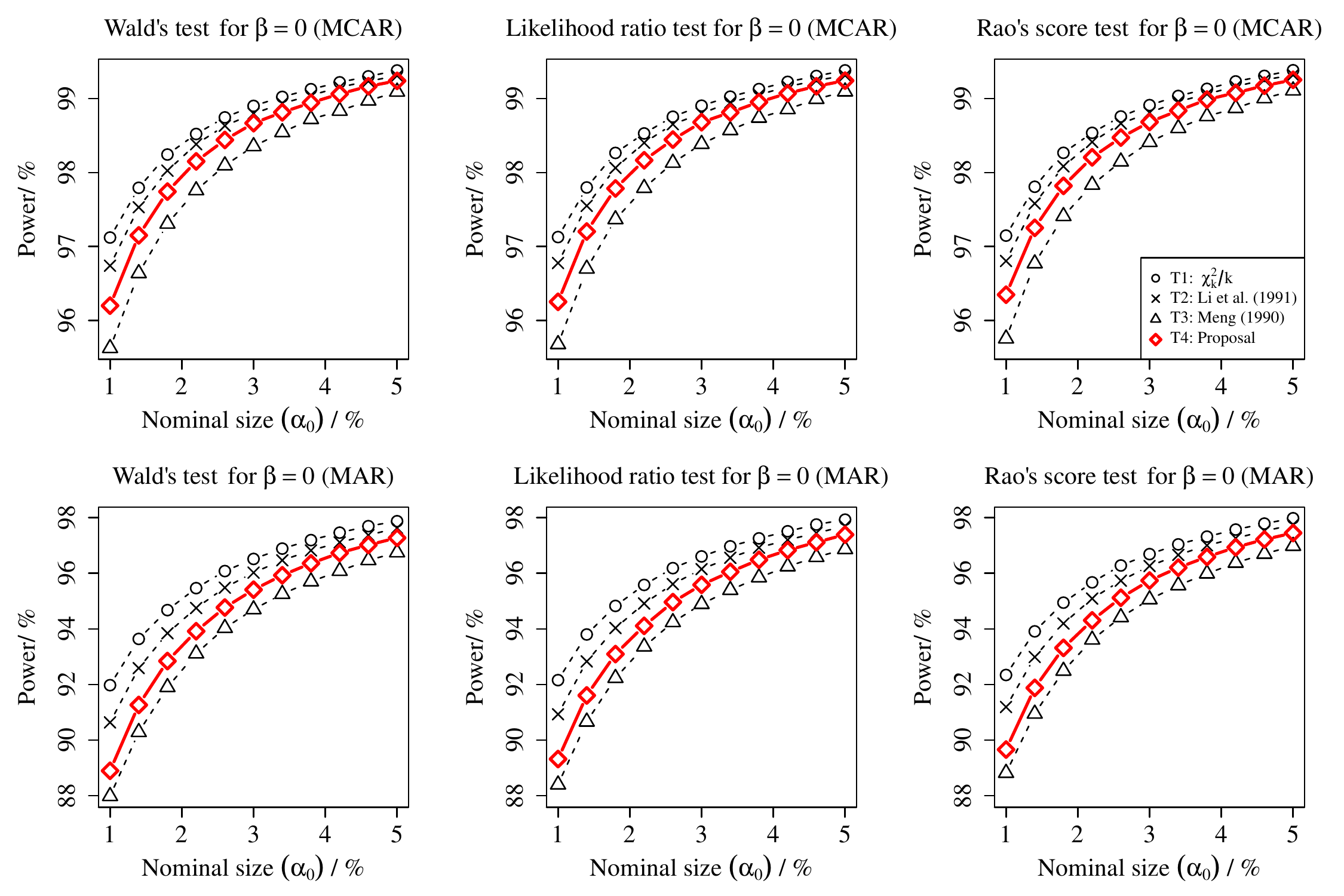}
\vspace{-0.5cm}
\caption{Powers of the tests in Section \ref{sec:eg_reg} when $m=10$.}
\label{fig:SMI_I33_H1_full}
\end{figure}

\begin{figure}
\centering
\includegraphics[width=1\linewidth]{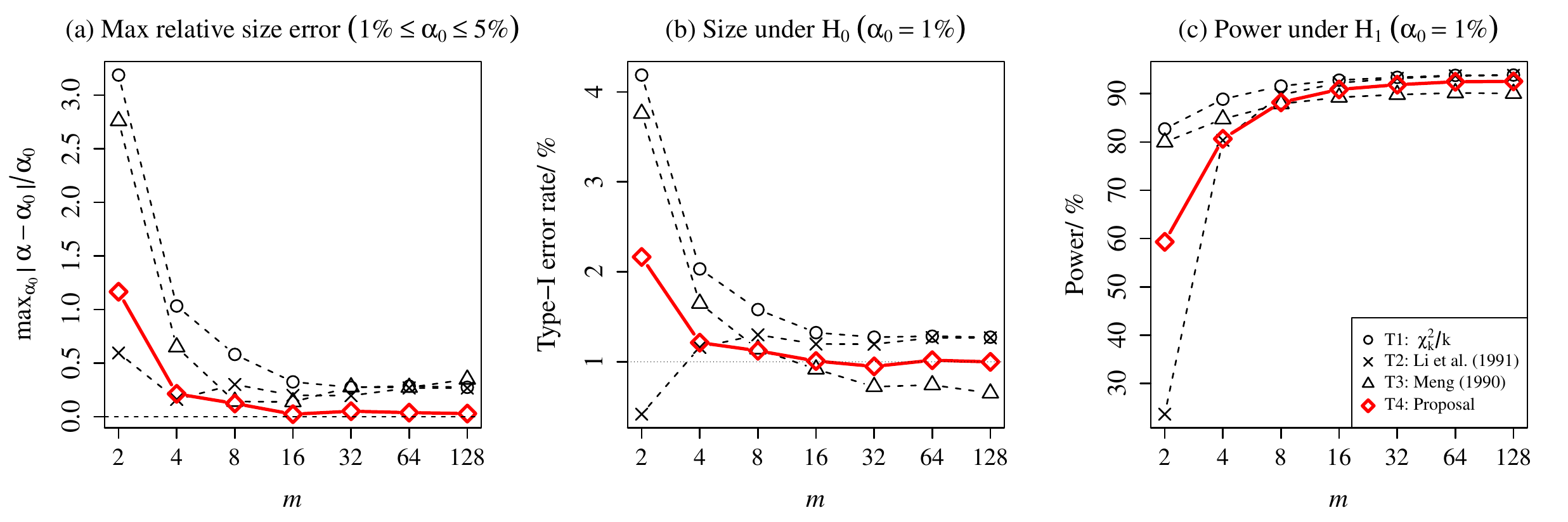}
\vspace{-0.5cm}
\caption{\footnotesize The size and power performance of the MI tests in Section \ref{sec:eg_reg}. 
The following performance measures are plotted against $m$:
(a) the maximum relative size error over $1\%\leq \alpha_0 \leq 5\%$; 
(b) the actual type-I error rate when the nominal size is $\alpha_0=1\%$; and 
(c) the power of the test under $H_1$ when $\alpha_0=1\%$.}
\label{fig:SMI_U001}
\end{figure}

\subsection{Inference of variance-covariance matrix}\label{sec:eg_mvn}
Let $x_1, \ldots, x_n \simIID \Normal_p(\mu, \Sigma)$, 
where $\mu$ and $\Sigma$ are unknown. 
We want to test $H_0:\Sigma = \Sigma_0$ against $H_1:\Sigma \neq \Sigma_0$, 
where $\Sigma_0$ is a fixed variance-covariance matrix. 
Let the complete dataset be $X_{\com}=(X_1^{\T},\ldots, X_{n}^{\T})^{\T}$, 
and $\widehat{\Sigma}:=\sum_{i=1}^n (x_i-\bar{x})(x_i-\bar{x})^{\T}/n$.
Then the device for computing the LR test statistic is 
\[
	\mathcal{d}_{\lrt}(X_{\com}) 
		= n \tr\left( \Sigma_0^{-1} \widehat{\Sigma} \right) - n \log \det\left( \Sigma_0^{-1} \widehat{\Sigma} \right) - np,
\]
where 
$\tr(\cdot)$ and $\det(\cdot)$ denote the trace and determinant, respectively.

There are two purposes of presenting this example. 
First, the parameter of interest is $\theta = \vech(\Sigma)$, 
where $\vech(M)$ denotes the half-vectorization of a matrix $M$.
So, the dimension of $\theta$ is $k = p(p+1)/2$, which increases very quickly when $p$ increases. 
In practice, 
$k$ can easily be larger than the number of imputation $m$ even for a small $p$.
In this section, we investigate the performance of various MI tests when 
$k<m$, $k=m$ and $k>m$.
Second, 
apart from our proposal, 
the MI test T3 \citep{mengPhDthesis} is the only existing procedure without 
employing Condition \ref{ass:EOMI}.
Computing this MI test requires an estimator of 
$\Var\{\vech(\widehat{\Sigma})\}$, that is, $\widehat{V}$ in (\ref{eqt:com_WT}). 
Unfortunately, in this example, a good estimator $\widehat{V}$ is not easy to construct 
because $\widehat{V}$ is a huge matrix, whose dimensions are $k \times k$, 
where $k = p(p+1)/2$. 
Hence, it is desirable to have a MI test that bypasses computing $\widehat{V}$.
Our proposed MI test T4 resolves this problem.

In the experiments, 
we use $n=400$, $m\in\{10,30\}$, $\mu=1_p$, and $\Sigma_0 = I_p$, 
where $I_p$ is a $p\times p$ identity matrix. 
We consider $p\in\{3,4,5\}$ so that $k\in\{6,10,15\}$, respectively.
The data are missing according to the mechanism (\ref{eqt:missingDataMech})
with $\gamma_0 = 1.25$ and $\gamma_1 = -0.5$.
The parameters are chosen such that, on average, 50\% of the 5th components of $x_i$'s are missing. 
The missing data are imputed as in Section \ref{sec:eg_reg}.

We compare the size accuracy of the tests in Figure \ref{fig:SMI_M29-30_H0}.
When $m=10$, the proposed T4 controls the size much better than T1 and T2
no matter $k<m$, $k=m$, or $k>m$. 
Similar to the example in Section \ref{sec:theoretical_test_size}, 
the performance of T3 is hard to predict when $m$ is small. 
It may perform well in some cases (e.g., $k>m$), but perform less satisfactorily 
in some cases (e.g., $k<m$ and $k=m$).
When $m=30$, the advantage of using T4 is obvious. 
Its size accuracy is the highest in all cases. 
For reference, we also present the power performance 
in Figure \ref{fig:SMI_M29-30_H1_full}. 
In a nutshell, the proposed T4 generally performs well.

\begin{figure}
\centering
\includegraphics[width=\linewidth]{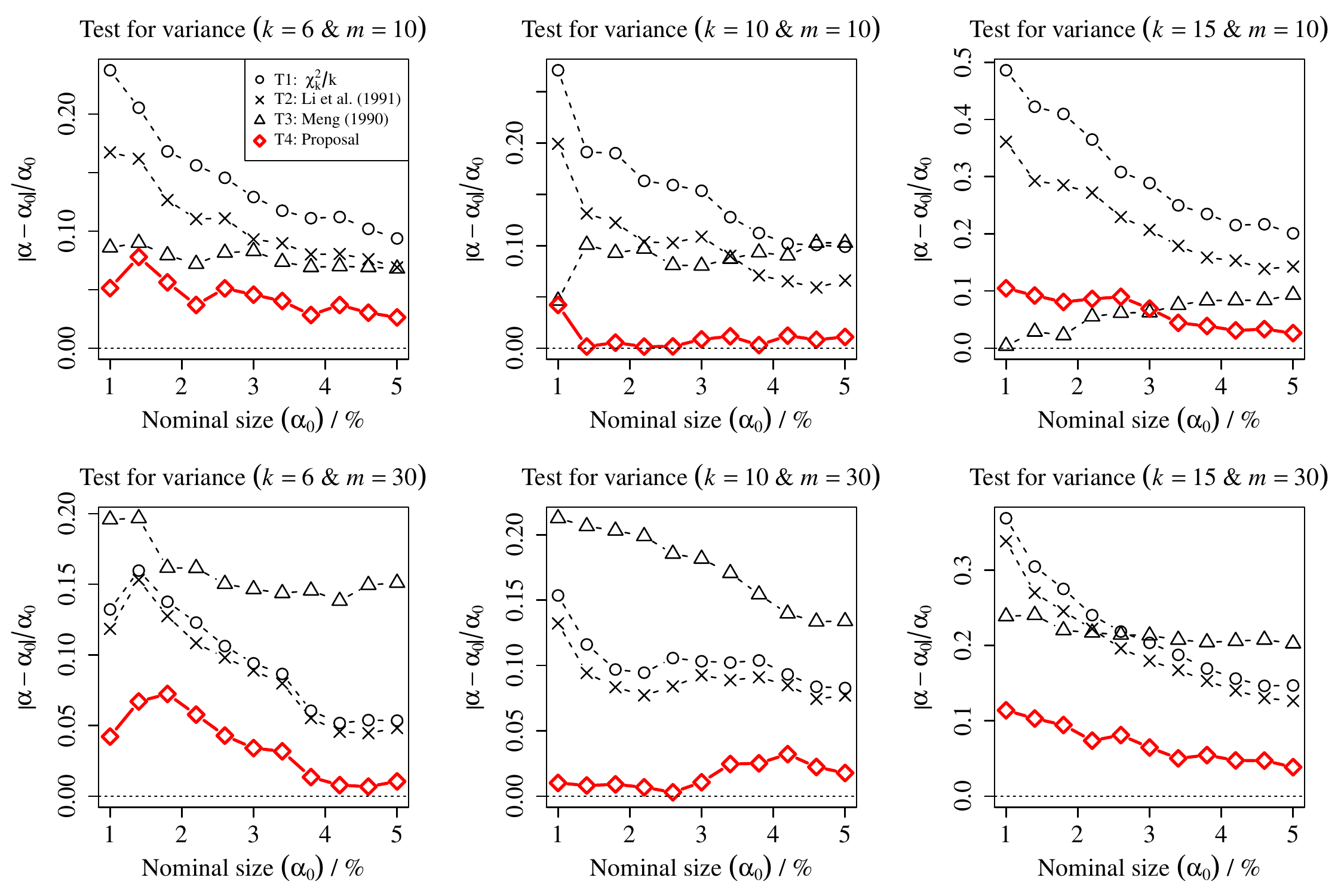}
\vspace{-0.5cm}
\caption{\footnotesize Size accuracy of the MI tests regarding variance-covariance matrix in Section \ref{sec:eg_mvn}.}
\label{fig:SMI_M29-30_H0}
\end{figure}

\begin{figure}
\centering
\includegraphics[width=\linewidth]{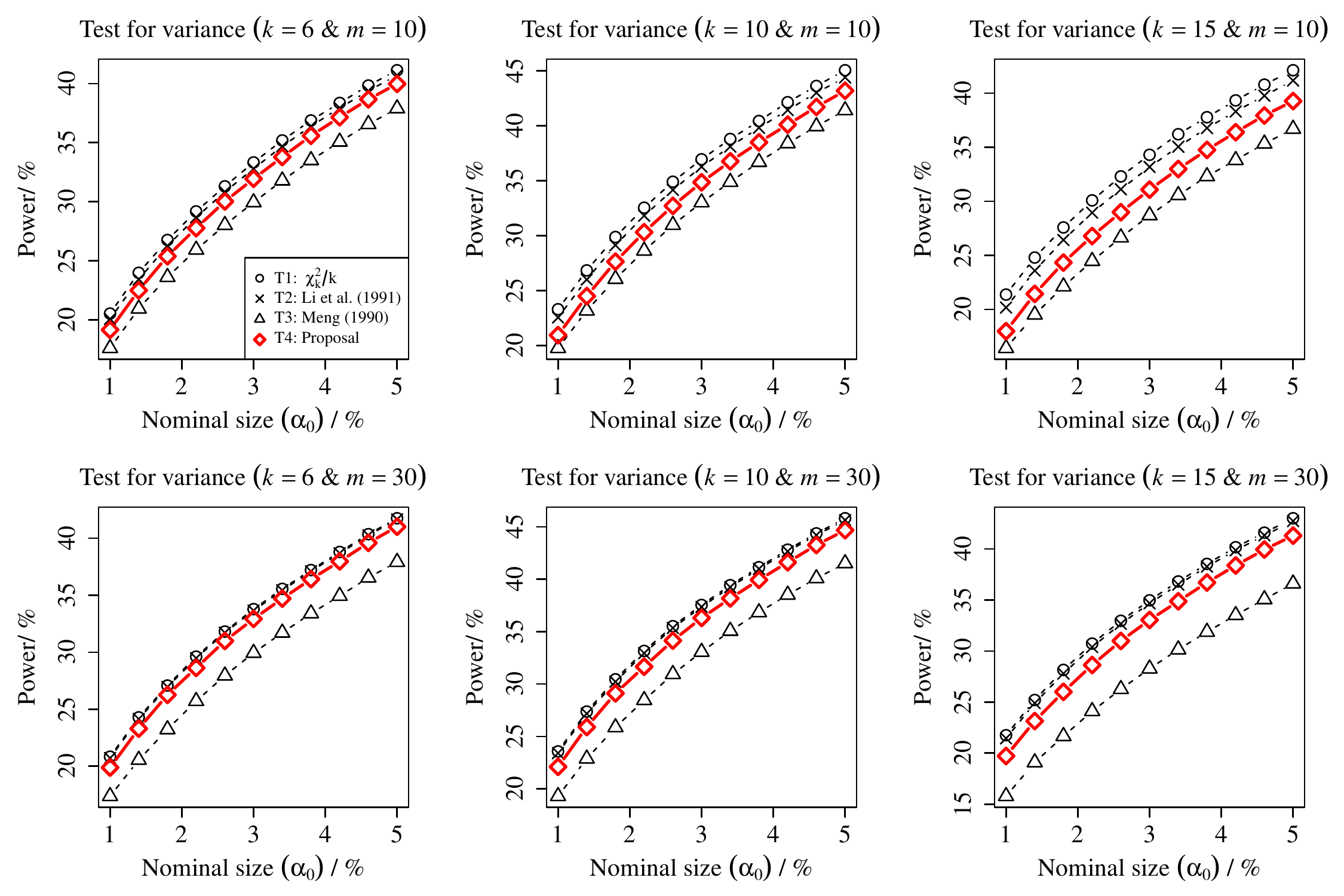}
\vspace{-0.5cm}
\caption{Powers of the tests in Section \ref{sec:eg_mvn}.}
\label{fig:SMI_M29-30_H1_full}
\end{figure}

\subsection{Generalized linear model}\label{variable_selection_glm}
As discussed in Section \ref{sec:applicationsDiscussion}, 
our proposed test can be used not only for merely null hypothesis testing, 
but also for other automatic procedures that use $p$-value as an assessment tool. 
This subsection presents how our proposed method be used in stepwise variable selection in 
a generalized linear model. 

Let $y_i$ be the count response and $x_{i,1}, \ldots, x_{i,14}$ be the covariates for the 
$i$th unit. They are generated as follow:
\begin{align*}
	\left[y_{i}\mid x_{i, 1:14}\right] 
		&\sim \Po( \exp( \beta_0+\beta_1x_{i,1}+\cdots+\beta_{14}x_{i,14}) ), \\
	x_{i1} &\sim \Bern(0.5), \\
	x_{i2} &\sim \Ga(2,1),  \\
	(x_{i3}, \ldots, x_{i,12})^{\intercal} &\sim \Normal_{10}(\mu, \Sigma), \\
	\left[x_{i,13}\mid x_{i, 1:3} \right] &\sim \Bern( \expit( \gamma_0+\gamma_1x_{i,1}+\gamma_{2}x_{i,2}+\gamma_{3}x_{i,3} )), \\
	\left[x_{i,14}\mid x_{i, 1:3} \right] &\sim \Po( \exp( \eta_0+\eta_1x_{i,1}+\eta_{2}x_{i,2}+\eta_{3}x_{i,3} ) ), 
\end{align*}
where 
$\beta_{0:14} = (5,5,10,5,10,10,0,0,0,0,10,5,0,0,0)^{\intercal}/100$, 
$\mu = 0$, $\Sigma_{ab} = 0.5^{|a-b|}$ for $1\leq a,b\leq 10$, and 
$\gamma_j = \eta_j = 1/10$ for $j=0,1,2,3$. 
So, $x_{i,1}$ and $x_{i,13}$ are binary; 
$x_{i,2}$ is positive;
$x_{i,3:12}$ are continuous, 
$x_{i,14}$ is integral.  
Note that the non-zero value of $\beta_{1:14}$ are small,
so selecting the predictive variables is, by definition, very hard. 
Suppose further that 
the following groups of covariates must exist or not exist together as a group: 
$\{x_{i,1}, x_{i,4}, x_{i,5}, x_{i,10}\}$,  
$\{x_{i,2}, x_{i,3}, x_{i,11}\}$,  
$\{x_{i,7}, x_{i,8}, x_{i,9}, x_{i,13}\}$,  
$\{x_{i,6}, x_{i,12}, x_{i,14}\}$.
It mimics the situation that we have some domain knowledge on the underlying mechanism. 
Note that the first two groups of variables should be selected as their corresponding 
regression coefficient $\beta_j$'s are not zero. 
We consider a forward stepwise procedure by using LR test 
with different nominal level of significance $10\%\leq \alpha_0\leq 15\%$. 
All four reference distributions (T1--T4) are compared.

Let $I_{i,j}$ be the non-missing indicator of $x_{i,j}$, that is, $I_{i,j}=1$ if $x_{i,j}$ is observed. 
We assume that $y_i$, $x_{i,1}$, $x_{i,2}$, $x_{i,3}$ are always observed, but the 
$x_{i, 4:14}$ may be missing according to the following mechanisms: 
\begin{align*}
	\pr\left( I_{i,j} = 1 \mid x_{i,j-1},I_{i,j-1} =a\right) 
		&= \expit\left(  1.5 + 0.5 x_{i,j-1} \right) \mathbb{1}(a=1), \; j=4, \ldots, 12, \\
	\pr\left( I_{i,13} = 1 \mid x_{i,1:2} \right) 
		&= \frac{1}{1+0.5(0.7+x_{i,1})x_{i,2}} , \\
	\pr\left( I_{i,14} = 1 \mid x_{i,1:3} \right) 
		&=	\Phi(0.18-0.23x_{i,1}-0.0013x_{i,2}-0.36x_{i,3}), 
\end{align*}
where $\Phi(\cdot)$ is the distribution function of $\Normal(0,1)$. 
Note that the parameters in the missing mechanisms are chosen so that 
$x_{i,12}$, $x_{i,13}$ and $x_{i,14}$ are missing with around 50\% probability. 

In the simulation study, the commonly used \texttt{R} package ``\texttt{mice}'' is employed 
to perform multiple imputation. 
This package uses chained equations to perform MI, thus Condition \ref{cond-properImp} is not satisfied. 
Although it is not a proper imputation model in general, 
it still performs satisfactorily in many problems.  
We investigate whether the correct groups of covariates can be selected. 

We record the sensitivity (SEN), specificity (SPE) and Matthews correlation coefficient (MCC)
for each method; see Figure \ref{fig:SMI_S4_606}.  
There are several important observations. 
First, there is a tradeoff between SEN and SPE.
For T1, its SEN is slightly better than all other methods, 
however its SPE is obviously lower.
Similarly, for T3, its SPE is slightly better, but its SPE is low. 
Consequently, their overall performance measured by MCC is less satisfactory compared with T2 and T4. 
Second, our proposed T4 has the best overall MCC performance no matter what cutoff $\alpha_0$ is used. 
Hence, in practice, T4 is more recommended.
Last but not least, this experiment demonstrates that our proposed T4 continues to perform well 
even Condition \ref{cond-properImp} is (slightly) violated. 
This robustness property give confidence for practitioners  
who may use (slightly) improper imputation methods, for example, the ``\texttt{mice}'' package.

\begin{figure}
\begin{center}
\includegraphics[width=\textwidth]{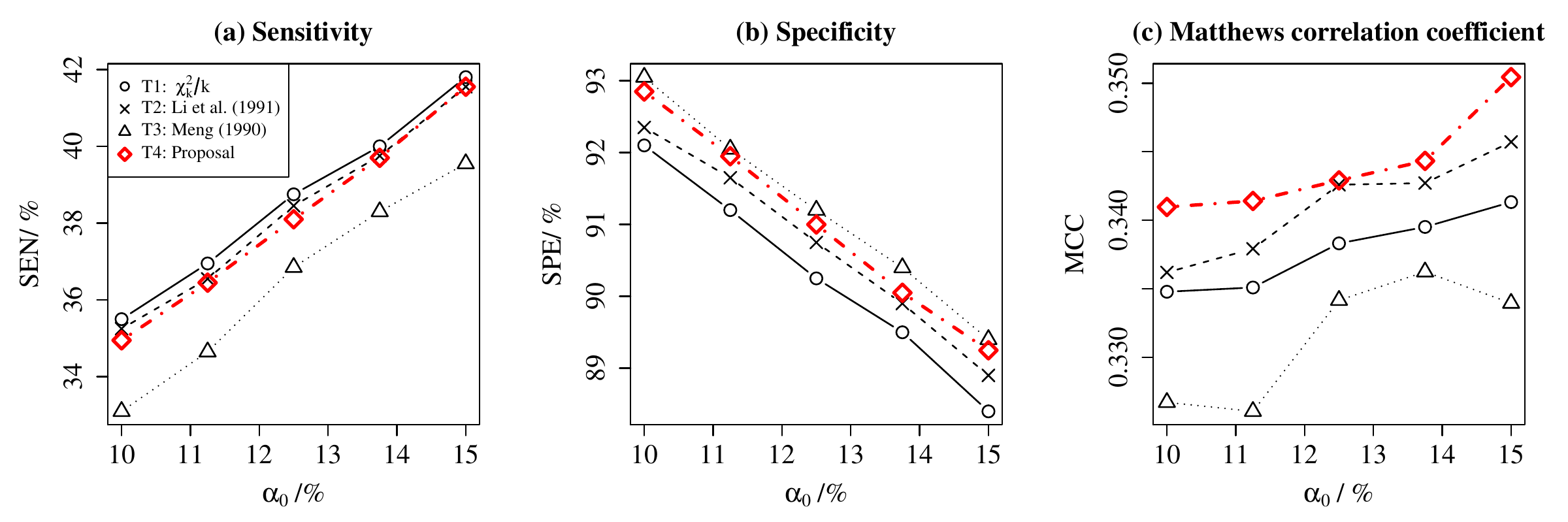}
\end{center}
\vspace{-0.5cm}
\caption{\footnotesize The variable selection performance 
of different methods in Section \ref{variable_selection_glm}.} 
\label{fig:SMI_S4_606}
\end{figure}

\subsection{Contingency table}\label{sec:eg_contingency_table}
\cite{mengRubin92} applied their MI LR test to a partially identified table. 
The dataset is given in Table \ref{table:realData_dataset}.
The dataset is arranged in a $2\times 2\times 2$ contingency table indexed by $i,j,k\in\{0,1\}$, 
where $i$, $j$ and $k$ index, respectively, clinic (A or B), amount of parental care (more or less) and survival status (died or survived). 
The clinic label $k$ is missing for some of the observations. 
The missing-data mechanism was assumed to be ignorable.   
They tested the null hypothesis $H_0$ that the clinic and parental care are conditionally independent given the survival status.
In this testing problem, $k=2$.
We estimate $r_1$ and $r_2$ by our proposed estimators $\widehat{r}_1$ and $\widehat{r}_2$, respectively, 
by using  $m\in\{5,10,20, \ldots, 2560\}$. 
Since MI is a random procedure,  
the realized values of $\widehat{r}_1$ and $\widehat{r}_2$ vary 
in different replications. 
We repeat the MI procedures $2^{10}$ times. 
The results are shown in Figure \ref{fig:SMI_N1}.

\begin{table}
\centering
\footnotesize
\begin{tabular}{cccc}
\toprule
&&\multicolumn{2}{c}{{Survival Status} ($j$)} \\
\cmidrule(r){3-4}
    {Clinic} ($k$) & {Parental care} ($i$) & Died & Survived  \\ 
   \cmidrule(r){1-4}
    A & Less & 3 & 176  \\ 
     & More & 4 & 293 \\ 
\cmidrule(r){1-4}
    B & Less & 17 & 197  \\ 
     & More & 2 & 23  \\ 
\cmidrule(r){1-4}
    ? & Less & 10 & 150  \\ 
     & More & 5 & 90  \\ 
\bottomrule
\end{tabular}
\caption{\footnotesize Data from \cite{mengRubin92}. The notation ``?'' indicates missing label.} 
\label{table:realData_dataset}
\end{table}

\begin{figure}
\begin{center}
\includegraphics[width=.9\textwidth]{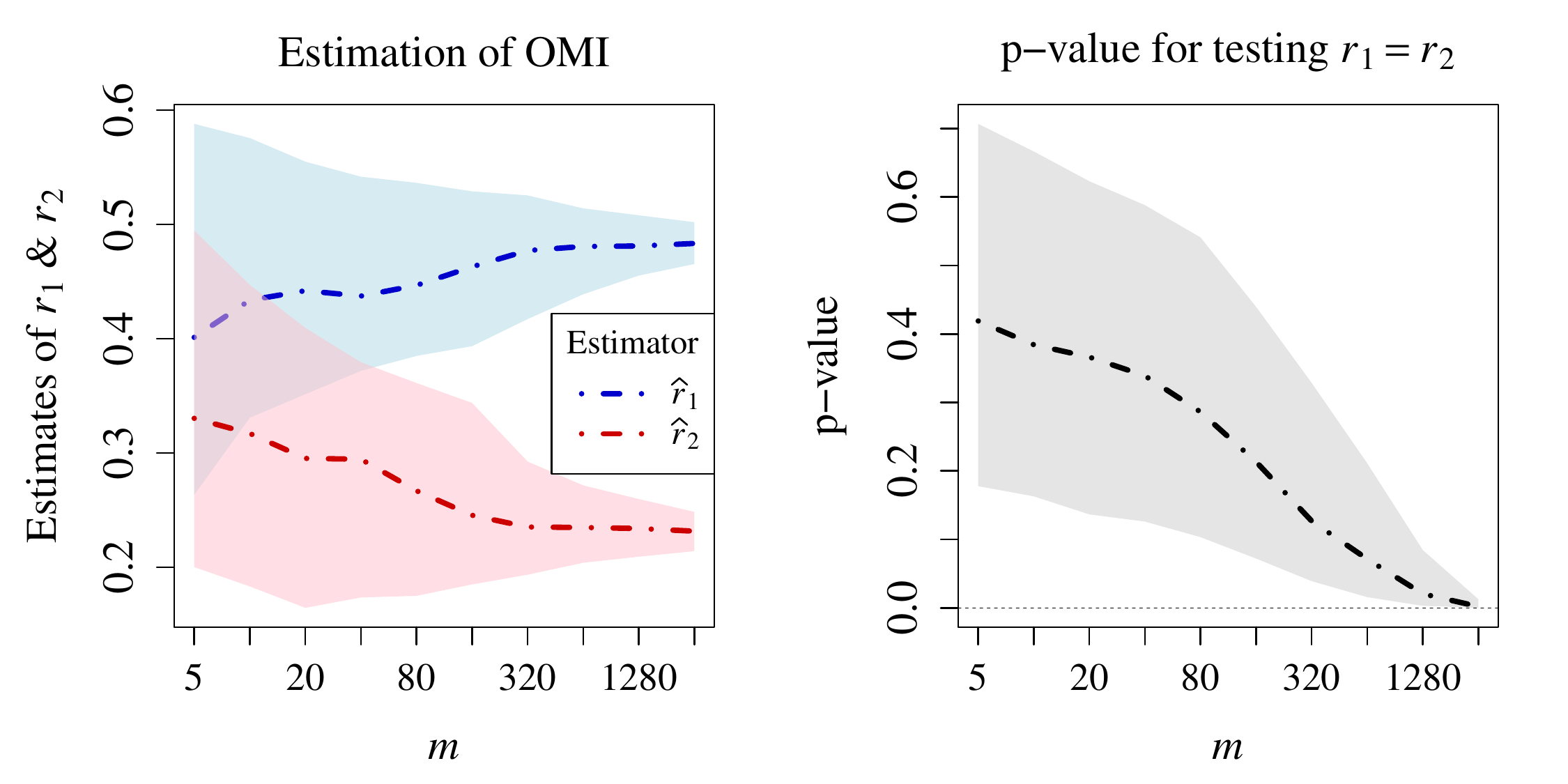}
\end{center}
\vspace{-0.5cm}
\caption{\footnotesize Left: The estimates of $r_1$ and $r_2$. 
Right: The $p$-values for testing $H_0':r_1=r_2$.  
The dotted lines denote the median of respective estimates over $2^{10}$ MI trials. 
the shaded regions designate the respective 25\% and 75\% quantiles.} 
\label{fig:SMI_N1}
\end{figure}

First, when $m$ is large enough, 
$r_1$ and $r_2$ are estimated to be 48\% and 23\%, respectively. 
When $m$ is as small as $10$, the estimates are 43\% and 32\%,
which are still reasonably close to the limiting values.
Besides, the estimates stabilize as $m$ increases. 
It confirms Corollary \ref{coro:rHat}.
Second, the $p$-value for testing  $H_0':r_1=r_2$  
is monotonically decreasing when $m$ increases.
So, the violation of $H_0'$ becomes increasingly more significant as $m$ increases. 
In practice, as we mentioned in Remark \ref{rem:EOMItest}, $H_0'$ is rarely true, 
so we can always reject $H_0'$ as long as $m$ is large enough. 
However, 
$H_0'$ cannot be rejected at 10\% level even if $m=640$ is used. 
Hence, for a reasonably small $m$ and a commonly used nominal size $\alpha_0$, 
we cannot find any significant statistical evidence to reject $H_0'$.
Although it provides a theoretical guard for assuming EOMI (i.e., Condition \ref{ass:EOMI}) in \cite{mengRubin92}, 
their MI test does not control the type-I error rate as accurately as our proposed MI test T4.
Hence, T4 is recommenced in practice for better insurance.

\subsection{Logistic regression}\label{sec:logisticEG}
In the 1994 Census income dataset \citep{Kohavi1996},  
the response $y\in\{0,1\}$ indicates whether the salary is larger than \$50,000. 
Six covariates are investigated: 
age ($x_1$), sampling weight ($x_2$), 
education level in a numeric scale ($x_3$), 
capital loss ($x_4$), number of working hour per week ($x_5$), and 
occupation class ($x_6$). 
The variables $x_1,\ldots,x_5$ are ordinal; 
whereas $x_6\in\{1,\ldots, 7\}$ is an unordered categorical variable of 7 levels. 
After removing 10 respondents who never work, 
the dataset contains $n=48832$ individuals.
There is no missing entry in $y,x_1,\ldots, x_5$, 
however, $x_6$ is missing in about 6\% of the individuals.

The missing data are imputed $m=10$ times
by a multinomial logit model:
\[
	\pr(x_6 = g) = \frac{\exp(\gamma_{0g} + \gamma_{1g}x_1 + \cdots + \gamma_{5g}x_5)}{ 1+ \sum_{g'=2}^7 \exp(\gamma_{0g'} + \gamma_{1g'}x_1 + \cdots + \gamma_{5g'}x_5)}	
\]
for $g = 2, \ldots, 7$, and  
$\pr(x_6 = 1) = 1-\sum_{g=2}^7 \pr(x_6 = g)$.
A flat prior on the $\gamma$'s is used.  
We study the relationship between $y$ and the covariates. 
The following logistic regression model is considered:
\begin{equation}\label{eqt:logisticModel}
	\pr(y = 1) 
		= \expit\left\{ \beta_0  
			+ \left(\sum_{j=1}^5 \beta_j x_j \right)  
			+ \left(\sum_{g=2}^7 \phi_g z_g \right)  
		\right\},
\end{equation}
where $z_g = \mathbb{1}\{x_{6} = g\}$ for $g=1,\ldots, 7$. 
We are interested in testing whether the occupation class label $x_6$ 
is predictive in the presence of effects of all other covariates, that is,  
to test $H_0: \phi_2=\cdots=\phi_7=0$.  
Note that the full model (\ref{eqt:logisticModel}) is also considered by \cite{WangZhuMa2018}.

Although one may perform either Wald's test, LR test or RS test,
performing RS test is substantially easier than the other two in this example. 
It is because the unrestricted MLEs of the regression coefficients 
do not have closed-form expressions, 
iterative root-finding algorithms are needed. 
On the other hand, the null-restricted MLEs are trivial as they are simply the null parameter values.
Hence, it is more appealing (in terms of computational cost and programming simplicity) 
to employ RS test.
In particular, Wald's test takes 59\% more computational time than RS test, 
whereas LR test requires twice as much computational time as RS test. 
We found that $H_0$ is rejected at any reasonable level of significance 
by the proposed MI test R4, no matter Wald's test, LR test or RS test is used.
In other words, the occupation class label $x_6$ is statistically significant despite its 6\% missingness.

\section{Proofs}\label{sec:proof}
\subsection{Proof of Theorem \ref{thm:asy_Distof_bar_d_S}}\label{sec:proof_thm:asy_Distof_bar_d_S}
Let $T_1 := n I^{-1}_{\obs}$, $V_1 := nI^{-1}_{\com}$ and $B_1 := T_1 - V_1$,
where $I_{\com}$ and $I_{\com}$ are 
the complete-data and missing-data expected Fisher's information of $\theta$
defined in (\ref{eqt:FI}).
Note that the subscript ``$1$'' in $T_1$, $V_1$, and $B_1$
emphasizes that they have been standardized by $n$ to represent the contribution of one observation.
By definition, $r_1,\ldots,r_k$ are the eigenvalues of $I_{\mis} I^{-1}_{\obs} = B_1 V_1^{-1}$, 
so they are also the eigenvalues of $V_1^{-1/2}B_1 V_1^{-1/2}$, 
where $V_1^{1/2}$ is the symmetric square root of $V_1$.
Let $R := \diag(r_1, \ldots, r_k)$ be a diagonal matrix with diagonal entries $r_1, \ldots, r_k$.

By spectral decomposition, 
we have $V_1^{-1/2}B_1 V_1^{-1/2} = QRQ^{\T}$
for some orthogonal matrix $Q$.  
Then, 
$B_1 = V_1^{1/2} QRQ^{\T} V_1^{1/2}$, 
$T_1 = V_1^{1/2} Q(R+I_k)Q^{\T} V_1^{1/2}$ and 
$V_1 = V_1^{1/2} Q^{\T} Q V_1^{1/2}$, 
where $I_k$ is an $k\times k$ identity matrix. 
So, we can define matrix square roots of $B_1$, $T_1$ and $V_1$ by 
\begin{equation}\label{eqt:sqrt_B_T}
	\tilde{B}_{1}^{1/2} := V_{1}^{1/2} QR^{1/2} , \quad
	\tilde{T}_{1}^{1/2} := V_1^{1/2} Q(R+I_k)^{1/2}, \quad 
	\tilde{V}_{1}^{1/2} := V_1^{1/2} Q
\end{equation}
so that $\tilde{B}_1^{1/2}(\tilde{B}_1^{1/2})^{\T} = B_1$,
$\tilde{T}_1^{1/2}(\tilde{T}_1^{1/2})^{\T} = T_1$ and 
$\tilde{V}_1^{1/2}(\tilde{V}_1^{1/2})^{\T} = V_1$.
It is remarked that 
$\tilde{B}_1^{1/2}$, $\tilde{T}_1^{1/2}$ and $\tilde{V}_1^{1/2}$ may not be symmetric matrices. 

Let $\widehat{\theta}^S := \widehat{\theta}(X^S)$ 
and $\widehat{V}^S := \widehat{V}(X^S)$ 
be the MLEs of $\theta$ and its covariance matrix estimator 
given the stacked dataset $X^S$. 
By Lemma 1 of \citet{wangRobins1998}, 
the estimator $\widehat{\theta}^S$ is asymptotically equivalent to $\bar{\theta}^S := \sum_{\ell\in S} \widehat{\theta}^{\ell}/|S|$ in the sense that $\sqrt{n}(\widehat{\theta}^S - \bar{\theta}^S) \inP 0_k$.
Under Conditions \ref{cond-Normality}--\ref{cond-properImp} and 
Definition \ref{cond-localH1} with $A = \tilde{V}_1^{-1/2}$,
we have the joint asymptotic ($n\rightarrow\infty$) representations: 
\begin{eqnarray}
	\sqrt{n} (\theta^{\star} - \theta_0) &\rightarrow& \tilde{V}_1^{1/2} \delta , \nonumber\\
	\sqrt{n}\left( \widehat{\theta}_{\obs} - \theta^{\star} \right) &\inD& \tilde{T}_1^{1/2} W, \nonumber\\
	\left\{ \sqrt{n}\left( \widehat{\theta}^{\ell} - \widehat{\theta}_{\obs}  \right) \mid X_{\obs} \right\} &\inD& \tilde{B}_1^{1/2} Z_{\ell}, \qquad \ell=1,\ldots, m, \label{eqt:representationthetaHatell}
\end{eqnarray}
where $W, Z_1, \ldots, Z_m\sim \Normal_k(0,I_k)$ independently.
Let $\bar{Z}_{(S)} := \sum_{\ell\in S}Z_{\ell}/|S|$ for some non-empty multiset 
$S$ with elements from $\{1,\ldots,m\}$.
Summing the above three representations and taking average over $\ell\in S$, 
we have, unconditionally, that  
\begin{eqnarray}
	\sqrt{n}\left( \widehat{\theta}^{S} - \theta_0 \right) 
		&\inD& \tilde{V}_1^{1/2} \delta + \tilde{T}_1^{1/2} W +  \tilde{B}_1^{1/2} \bar{Z}_{(S)}  \nonumber\\
		&=& V_1^{1/2}Q \left\{ \delta + (R+I_k)^{1/2} W +  R^{1/2} \bar{Z}_{(S)} \right\}, \label{eqt:asy_rep_wt_oneSide}
\end{eqnarray}
where (\ref{eqt:asy_rep_wt_oneSide}) is followed from (\ref{eqt:sqrt_B_T}). 
See Remark \ref{rem:alternativeProofRep} for an alternative proof of 
(\ref{eqt:asy_rep_wt_oneSide}).
Using the fact that $V = nV_1$ and Condition \ref{cond-properImp},  
we also know that
\begin{eqnarray}
	n|S| \widehat{V}^S \inP V_1.
	\label{eqt:VSconv}
\end{eqnarray} 
Consequently, applying the Wald's test device (\ref{eqt:com_WT}) on $X^S$, 
we have
\begin{align*}
	\mathcal{d}_{\wt}(X^S)
		&= \left( \widehat{\theta}^{S} - \theta_0 \right)^{\T}
			\left( \widehat{V}^S \right)^{-1} 
			\left( \widehat{\theta}^{S} - \theta_0 \right) \nonumber \\
		&= |S| \sqrt{n}\left( \widehat{\theta}^{S} - \theta_0 \right)^{\T}
			\left( n|S| \widehat{V}^S \right)^{-1} 
			\sqrt{n}\left( \widehat{\theta}^{S} - \theta_0 \right) .
\end{align*}
By (\ref{eqt:VSconv}), Slutsky's lemma and orthogonality of $Q$, we have  
\begin{align}
	\mathcal{d}_{\wt}(X^S)	
		&\inD |S| \left\{ \delta +  (R+I_k)^{1/2} W +  R^{1/2} \bar{Z}_{(S)} \right\}^{\T} \nonumber \\
			&\qquad\qquad \left\{ \delta + (R+I_k)^{1/2} W +  R^{1/2} \bar{Z}_{(S)} \right\} .\label{eqt:proof_rep_of_d}
\end{align}
Since $\mathcal{d}_{\wt}(X^S)$, $\mathcal{d}_{\lrt}(X^S)$ 
and $\mathcal{d}_{\rt}(X^S)$ are asymptotically equivalent under local alternative hypothesis, 
we obtain (\ref{eqt:representation_of_d}) after 
expanding (\ref{eqt:proof_rep_of_d}).
Finally, note that the asymptotic representation (\ref{eqt:representationthetaHatell})
is true jointly for all $\ell=1, \ldots,m$. 
Hence, (\ref{eqt:proof_rep_of_d}) is also true for all multiset $S$.

\begin{remark}\label{rem:alternativeProofRep}
As pointed out by an anonymous referee, 
the proof of Theorem \ref{thm:asy_Distof_bar_d_S} is clearer if the estimator $\widehat{\theta}$
is assumed to be asymptotically linear as in \cite{wangRobins1998}. 
For reference, we outline the proof below.

We need the following notations.
Denote  
the observed-data score,
the complete-data score, and the missing-data score 
contributed by the $i$th subject in the $\ell$th imputed dataset by 
$U_{i}^{\obs}$, $U_{i\ell}^{\com}$ and $U_{i\ell}^{\mis}=U_{i\ell}^{\com}-U_{i}^{\obs}$, respectively. 
Also let 
\[
	\mathcal{i}_{\com} = I_{\com}/n, \qquad
	\mathcal{i}_{\obs} = I_{\obs}/n, \qquad
	\mathcal{i}_{\mis} = I_{\mis}/n
\]
be the Fisher's information contributed by one observation only.

Assume, in addition, that 
all regularity conditions stated in Theorem 2 of \cite{wangRobins1998}
are satisfied. 
The equation (A3)
in their paper states an asymptotic representation of 
the classical MI estimator $\widehat{\theta}^{\{1:m\}}$
by using influence functions. 
With some trivial modifications, we can extend it to $\widehat{\theta}^{S}$ for any multiset $S$: 
\begin{align*}
	\sqrt{n}\left( \widehat{\theta}^S - \theta^{\star} \right)
		&= \Psi_0
				+ \frac{1}{|S|}\sum_{\ell\in S} \Psi_{\ell} +o_p(1),\label{eqt:representation_thetaHatS_linear}
\end{align*}
where 
\begin{eqnarray*}
	\Psi_0 = \frac{1}{\sqrt{n}}\sum_{i=1}^n \mathcal{i}_{\obs}^{-1}U_i^{\obs}
	\quad \text{and}\quad
	\Psi_{\ell} = \frac{1}{\sqrt{n}} \sum_{i=1}^n \mathcal{i}_{\com}^{-1}U_{i\ell}^{\mis} 
				+ \sqrt{n} \mathcal{i}_{\obs}^{-1} \mathcal{i}_{\mis}
					\left( \widehat{\theta}^{\ell} - \widehat{\theta}_{\obs} \right).
\end{eqnarray*}
As argued by \cite{wangRobins1998},  
we know that, as $n\rightarrow\infty$,  
$\Psi_0 \inD\Normal_k(0, \mathcal{i}_{\obs}^{-1})$
and, for each $\ell$,  	
\begin{align*}
	\left[ \Psi_{\ell} \mid X_{\obs} \right] 
		&\inD \Normal_k\left(0, 
					\mathcal{i}_{\com}^{-1}\mathcal{i}_{\mis}\mathcal{i}_{\com}^{-1}
					+ (\mathcal{i}_{\mis}\mathcal{i}_{\com}^{-1})^{\intercal} \mathcal{i}_{\obs}^{-1} (\mathcal{i}_{\mis}\mathcal{i}_{\com}^{-1}) \right) \nonumber \\
		&=\Normal_k\left(0, \mathcal{i}_{\obs}^{-1} \mathcal{i}_{\mis} \mathcal{i}_{\com}^{-1}\right) ,
\end{align*}
where the last line is 
followed from the identity $I_k + \mathcal{i}_{\mis}\mathcal{i}_{\obs}^{-1} = \mathcal{i}_{\com}\mathcal{i}_{\obs}^{-1}$.
Recall that $T_1 = \mathcal{i}^{-1}_{\obs}$, $V_1 = \mathcal{i}^{-1}_{\com}$ and $B_1 = T_1 - V_1$.
So, 
\[
	\Psi_0 \inD\Normal_k(0, T_1)\qquad\text{and}\qquad
	\left[ \Psi_{\ell} \mid X_{\obs} \right] \inD \Normal_k\left(0, B_1 \right)	.
\]
Since $\Psi_{1}, \ldots, \Psi_{m}$ are conditionally independent given $X_{\obs}$, 
we have the following asymptotic representation (jointly over all $S$):
\begin{eqnarray*}
	\sqrt{n}\left( \widehat{\theta}^{S} - \theta^{\star} \right) 
		\inD \tilde{T}_1^{1/2} W +  \tilde{B}_1^{1/2} \bar{Z}_{(S)} , 
\end{eqnarray*}
where the matrix square roots $\tilde{T}_1^{1/2}$ 
and $\tilde{B}_1^{1/2}$ are defined in (\ref{eqt:sqrt_B_T}).
Using the local alternative representation in Definition \ref{cond-localH1} with $A = \tilde{V}_1^{-1/2}$, 
we obtain (\ref{eqt:asy_rep_wt_oneSide}).
Following the remaining part of the proof above, we obtain the desired result. 
\end{remark}

\subsection{Proof of Proposition \ref{prop:limit_of_J}}\label{sec:proof_prop:limit_of_J}
For part 1, it is a direct application of Theorem \ref{thm:asy_Distof_bar_d_S}.
For part 2, note that
\begin{eqnarray}\label{eqt:diffZbar}
	\bar{Z}_{j(S_1)} - \bar{Z}_{j(S_2)} 
		\sim \Normal\left(0, \frac{|S_1|+|S_2|-2|S_1\cap S_2|}{|S_1|\times|S_2|} \right) 
\end{eqnarray}
independently over $j =1,\ldots,k$. Then summing the square of (\ref{eqt:diffZbar}) over $j=1, \ldots, k$, 
we obtain the desired result. 
For part 3, consider $\mathbb{T} = \sum_{j=1}^k r_j U_j$, 
where $U_1,\ldots, U_k\sim \chi^2_1$ independently.
Recall that the $\tau$th order cumulant of $U \sim \chi^2_1$ is
$\kappa_{\tau}(U) = 2^{\tau-1}(\tau-1)!$, 
where $\tau = 1, \ldots, k$. 
By additivity and homogeneity of cumulants, 
the $\tau$th cumulant of $\mathbb{T}$ is given by  
\[
	\kappa_{\tau}(\mathbb{T}) = \sum_{j=1}^k 2^{\tau-1}(\tau-1)! r_j^{\tau}.
\]
Using the recursion formula between moments and cumulants, 
we can derive the expression of $\E(\mathbb{T}^{\tau}) = t_{\tau}$ for each $\tau = 1, \ldots, k$.

\subsection{Proof of Theorem \ref{thm:limit_EJ}}\label{sec:proof_thm:limit_EJ}
For part (1), 
it follows from part 3 of Proposition \ref{prop:limit_of_J} and linearity of expectation. 
For part (2), 
By straightforward calculations, we know that
\[
	\Corr\left\{ \bar{Z}_{j(S_1)}-\bar{Z}_{j(S_2)}, \bar{Z}_{j'(S_3)}-\bar{Z}_{j'(S_4)} \right\}
		= \left\{ 
		\begin{array}{ll}
			\rho(S_1,S_2,S_3,S_4) & \text{if $j=j'$}; \\
			0 & \text{if $j\neq j'$}.
		\end{array}
		\right.
\] 
For simplicity, write $\rho = \rho(S_1,S_2,S_3,S_4)$.
We emphasize that $\rho$ depends on $S_1,S_2,S_3,S_4$.
From part 2 of Proposition \ref{prop:limit_of_J}, 
we can represent $\mathbb{T} = \sum_{j=1}^k r_i G_j^2$, 
where $G_1, \ldots, G_k\sim N(0,1)$ independently. 
Together with part 1 of Proposition \ref{prop:limit_of_J}, we have 
$\widehat{T}_{S_1, S_2}\inD \mathbb{T} = \sum_{j=1}^k r_i G_j^2$ marginally for each $(S_1,S_2)\in\Lambda$. 
By Condition \ref{cond-UI} and Cholesky decomposition, we have 
\begin{align}\label{eqt:proof_var_barJtau}
	\Var\left\{ \widehat{t}_{\tau}(\Lambda) \right\}
		&\rightarrow  \frac{1}{|\Lambda|^2} \mathop{\sum\sum}_{(S_1,S_2),(S_2,S_4)\in\Lambda} \nonumber \\
		&\qquad
			\Cov\left( \left( \sum_{j=1}^k r_j G_j^2 \right)^{\tau} , 
				\left[ \sum_{j=1}^k r_j \left\{\rho G_i + (1-\rho^2)^{1/2} W_i \right\} ^2 \right]^{\tau}
			\right) , 
\end{align}
where $W_1,G_1, \ldots, W_k,G_k\sim N(0,1)$ are all independent.
Denote the covariance term in (\ref{eqt:proof_var_barJtau}) by $\Xi=\Xi(S_1,S_2,S_3,S_4)$.
Define $Q_1 := \sum_{j=1}^k r_j G_j^2$, $Q_2 := \sum_{j=1}^k r_j W_j^2$
and $Q_{12} := \sum_{j=1}^k r_j G_j W_j$. 
By trinomial expansion,  
\begin{align}
	\Xi &= \Cov\left[ Q_1^{\tau}, \left\{ \rho^2Q_1 + (1-\rho^2) Q_2 + 2\rho(1-\rho^2)^{1/2}Q_{12} \right\}^{\tau}\right]  \nonumber \\
		&= \sum_{a=0}^{\tau} \sum_{b=0}^a {{\tau}\choose{a}}{{a}\choose{b}} 2^{a-b} \rho^{a+b} (1-\rho^2)^{\tau - (a+b)/2} 
			\Cov\left(Q_1^{\tau}, Q_1^b Q_2^{\tau-a} Q_{12}^{a-b}\right)  \nonumber\\
		&= \rho^2 C_{\tau}(S_1,S_2,S_3,S_4) , \label{eqt:Xi_final}
\end{align}
where 
\begin{align*}
	&C_{\tau}(S_1,S_2,S_3,S_4)\\
		&\qquad:= \sum_{a=1}^{\tau} \sum_{b=1}^a {{\tau}\choose{a}}{{a}\choose{b}} 2^{a-b} \rho^{a+b-2} (1-\rho^2)^{\tau - (a+b)/2} 
			\Cov\left(Q_1^{\tau}, Q_1^b Q_2^{\tau-a} Q_{12}^{a-b}\right) . 
\end{align*}
Note that (\ref{eqt:Xi_final}) is obtained by using 
the fact that $\Cov\left(Q_1^{\tau}, Q_1^b Q_2^{\tau-a} Q_{12}^{a-b}\right) = 0$ 
for $(a,b)=(0,0),(0,1),(1,0)$.
Putting (\ref{eqt:Xi_final}) into (\ref{eqt:proof_var_barJtau}), 
we obtain (\ref{eqt:limVarJbar}).
Clearly, if $\tau=1$, then 
\begin{eqnarray*}
	C_{\tau}(S_1,S_2,S_3,S_4)
	= \Var(Q_1)
	= 2 \sum_{j=1}^k r_j^2.
\end{eqnarray*}

\subsection{Proof of Corollary \ref{coro:example_var}}\label{sec:proof_coro:example_var}
We start with $\Lambda_{\Jack}$.
Let $S_1 =\{\ell\}$, 
$S_2 = \{1,\ldots,m\} \setminus \{\ell\}$
$S_3 = \{\ell'\}$, 
$S_4 = \{1,\ldots,m\} \setminus \{\ell'\}$,
where $\ell, \ell'\in\{1,\ldots,m\}$. 
Clearly,  $(S_1,S_2), (S_3,S_4)\in\Lambda_{\Jack}$.
Note that $|\Lambda_{\Jack}|=m$ and 
\begin{eqnarray}\label{eqt:rho_Lambda1}
	\rho(S_1,S_2,S_3,S_4)
	= \left\{
	\begin{array}{ll}
	1	& \text{if $\ell=\ell'$};\\
	-1/(m-1) & \text{if $\ell\neq\ell'$}.
	\end{array}
	\right.
\end{eqnarray}
By part 1 of Theorem \ref{thm:limit_EJ}, 
$\E\{\widehat{T}_{1}(\Lambda_{\Jack})\} \rightarrow t_1$.
By part 2 of Theorem \ref{thm:limit_EJ}, we have 
\begin{eqnarray*}
	\Var\{\widehat{T}_{1}(\Lambda_{\Jack})\} 
		&\rightarrow& \frac{1}{m^2} \left( 2R_2\right) \left\{ m(1)^2 + (m^2-m)\left(\frac{-1}{m-1}\right)^2\right\} \\
		&=& 2R_2/(m-1), 
\end{eqnarray*}
and 
\begin{eqnarray*}
	\Var\{\widehat{T}_{\tau}(\Lambda_{\Jack})\}
	&\rightarrow& \frac{1}{m^2} \left\{ m(1)^2 C_1  - (m^2-m)\left(\frac{-1}{m-1}\right)^2 C_2\right\} \\
	&=& O(1/m),
\end{eqnarray*}
where $C_1,C_2$ are some finite constants. 

Next we consider $\Lambda_{\Full}$.
Let $S_1 =\{\ell\}$, 
$S_2 = \{1,\ldots,m\}$
$S_3 = \{\ell'\}$, 
$S_4 = \{1,\ldots,m\}$,
where $\ell, \ell'\in\{1,\ldots,m\}$. 
Clearly,  
$(S_1,S_2), (S_3,S_4)\in\Lambda_{\Full}$.
Note that $|\Lambda_{\Full}|=m$, and 
$\rho(S_1,S_2,S_3,S_4)$ admits the same form as (\ref{eqt:rho_Lambda1}).
Thus, the result follows. 

Finally, we consider $\Lambda_{\Pair}$.
Let $S_i =\{\ell_i\}$ for $i=1,2,3\Lambda_{\Pair},4$, 
where 
$\ell_1,\ell_2,\ell_3,\ell_4\in\{1,\ldots,m\}$
such that $\ell_1\neq \ell_2$ and 
$\ell_3\neq \ell_4$.
So, $(S_1,S_2), (S_3,S_4)\in\Lambda_{\Full}$.
We have
$|\Lambda_{\Pair}|={m\choose 2}$ and 
\begin{eqnarray*}
	\rho(S_1,S_2,S_3,S_4)
	= \left\{
	\begin{array}{ll}
	1	& \text{if $|\{\ell_1,\ell_2,\ell_3,\ell_4\}|=2$};\\
	1/2 & \text{if $|\{\ell_1,\ell_2,\ell_3,\ell_4\}|=3$};\\
	0 & \text{if $|\{\ell_1,\ell_2,\ell_3,\ell_4\}|=4$}.
	\end{array}
	\right.
\end{eqnarray*}
Again, by part 1 of Theorem \ref{thm:limit_EJ}, 
$\E\{\widehat{T}_{1}(\Lambda_{\Pair})\} \rightarrow t_1$.
Among all ${m\choose 2}^2$ combinations of 
$(S_1,S_2), (S_3,S_4)\in\Lambda_{\Full}$, 
there are ${m\choose 2}$ of them satisfying $|\{\ell_1,\ell_2,\ell_3,\ell_4\}|=2$, and  
${m\choose 2}{m-2\choose 2}$ of them satisfying 
$|\{\ell_1,\ell_2,\ell_3,\ell_4\}|=4$.
Hence, 
there are ${m\choose 2}^2-{m\choose 2}-{m\choose 2}{{m-2}\choose 2}$ of them satisfying $|\{\ell_1,\ell_2,\ell_3,\ell_4\}|=3$.
By part 2 of Theorem \ref{thm:limit_EJ}, 
\begin{eqnarray*}
	\Var\{\widehat{T}_{1}(\Lambda_{\Pair})\} 
	&\rightarrow& {m\choose 2}^{-2} \left( 2R_2\right) \left[ {m\choose 2} + \frac{1}{2}\left\{{m\choose 2}^2-{m\choose 2}-{m\choose 2}{{m-2}\choose 2} \right\} \right]\\
	&=& 2R_2/(m-1), 
\end{eqnarray*}
and 
\begin{eqnarray*}
	\Var\{\widehat{T}_{\tau}(\Lambda_{\Pair})\} 
	&\rightarrow& {m\choose 2}^{-2} 
		\left[ {m\choose 2} C_3 + \left\{{m\choose 2}^2-{m\choose 2}-{m\choose 2}{{m-2}\choose 2} \right\}C_4 \right]\\
	&=& O(1/m),
\end{eqnarray*}
where $C_3,C_4$ are some finite constants.

\subsection{Proof of Proposition \ref{prop:inverse}}\label{sec:proof_prop:inverse}
For part 1, we can solve for 
		$r_1,\ldots, r_k$ by solving the Newton's identities for the power sums $R_{\tau} = \sum_{j=1}^k r_j^{\tau}$, $\tau=1,\ldots,k$; 
		see, for example, \citet{Kalman2000}. 
		
For part 2, fix any $\tau \in\{ 1, \ldots, k\}$. By part 3 of Proposition \ref{prop:limit_of_J},
		we know that $R_1, \ldots, R_{\tau}$ are a linear transformation of $t_1, \ldots, t_{\tau}$.
		By recursively solving a system of linear equations, 
		we can express $R_{\tau}$ 
		in terms of $t_1,\ldots, t_{\tau}$ for each $\tau = 1, \ldots, k$. 
		Therefore, the desired result (\ref{eqt:estimator_Rstar}) can be derived.

\subsection{Proof of Corollary \ref{coro:rHat}}\label{sec:proof_coro:rHat}
Let $\| \xi \| := \sqrt{\E (\xi^{\T}\xi)}$ for any random vector $\xi$. 
By Corollary \ref{coro:example_var}, we know, for each $j=1, \ldots,k$, that
$\| \widehat{t}_j - t_j \| \rightarrow \sqrt{V_j(m)}$ as $n\rightarrow\infty$, 
where $V_j(m)$ is some function satisfying $V_j(m) = O(1/m)$.
Let $M(\cdot):=M_1^{-1}(M_2^{-1}(\cdot))$, which is differentiable at $t_{1:k}$. 
Denote its grandaunt function by $\dot{M}(\cdot)$. 
Applying Taylor's expansion on $M(\cdot)$, we have 
\begin{eqnarray}\label{eqt:MMhat}
	M(\widehat{t}_{1:k}) = M(t_{1:k}) + \dot{M}(t_{1:k})  (\widehat{t}_{1:k}-t_{1:k}) + \Delta,
\end{eqnarray}
where 
$M(\widehat{t}_{1:k}) = \widehat{r}_{1:k}$ by (\ref{eqt:initialEst_rR}),  
$M(t_{1:k}) = r_{1:k}$ by Proposition \ref{prop:inverse}, and 
$\|\Delta\| = o( \|\widehat{t}_{1:k}-t_{1:k}\| )$ by Condition \ref{cond-UI}.
Fix any $j\in\{1, \ldots, k\}$, and 
consider the $j$th component of (\ref{eqt:MMhat}). 
By Minkowski's inequality, we have, as $n\rightarrow\infty$, that 
\begin{eqnarray*}
	\sqrt{\MSE(\widehat{r}_j)} 
		\;=\; \| \widehat{r}_j - r_j \|
		&\leq& \left\| \sum_{j'=1}^k \dot{M}_{jj'}(t_{1:k}) (\widehat{t}_{j'}-t_{j'})  \right\|
				+ \| \Delta_j \| \nonumber \\
		&\leq& \sum_{j'=1}^k \dot{M}_{jj'}(t_{1:k}) \left\| \widehat{t}_{j'}-t_{j'} \right\|
				+ \| \Delta \| \nonumber \\
		&\rightarrow&  \sum_{j'=1}^k \dot{M}_{jj'}(t_{1:k}) \sqrt{V_{j'}(m)} + o(1/\sqrt{m}) \nonumber \\
		&=:& \sqrt{V(m)} , \label{eqt:normrHat}
\end{eqnarray*}
where $\dot{M}_{jj'}(t_{1:k})$ is the $(j,j')$th entry of $\dot{M}(t_{1:k})$, and 
$\Delta_j$ is the $j$th entry of $\Delta$. 
Since $\sqrt{V_{j'}(m)}=O(1/\sqrt{m})$ for each $j'$, we know that $\sqrt{V(m)} = O(1/\sqrt{m}) = o(1)$.
Hence, we obtain $\MSE(\widehat{r}_j) \rightarrow V(m)$ as $n\rightarrow\infty$, where 
$V(m)$ satisfies $V(m)=o(1)$ as $m\rightarrow\infty$.

\subsection{Proof of Proposition \ref{prop:limit_of_Dhat}}\label{ref:proof_prop:limit_of_Dhat}
To prove part 1, we first recall the following facts.
\begin{itemize}
	\item The testing devices $\mathcal{d}_{\wt}(\cdot), \mathcal{d}_{\lrt}(\cdot), \mathcal{d}_{\rt}(\cdot)$
			in (\ref{eqt:com_WT})--(\ref{eqt:com_RT})
			are asymptotically equivalent under $H_0$. 
			Hence, they can be asymptotically represented as $\mathcal{d}(\cdot)$. 
	\item According to Section \ref{sec:motivation},
			we may asymptotically represent $\widetilde{d}'_{\wt}$ and $\widetilde{d}'_{\lrt}$ 
			as $\widetilde{d}'$, 
			and represent $\widetilde{d}''_{\wt}$ and $\widetilde{d}''_{\lrt}$ as $\widetilde{d}''$, that is, 
			$\widetilde{d}' \sim \widetilde{d}'_{\wt} \sim \widetilde{d}'_{\lrt}$
			and $\widetilde{d}'' \sim \widetilde{d}''_{\wt} \sim \widetilde{d}''_{\lrt}$.
			By the same token, we can represent $\widetilde{D} \sim \widetilde{D}_{\wt} \sim \widetilde{D}_{\lrt}$; 
			see (\ref{eqt:asy_version_D}).
	\item From (\ref{eqt:dp_dpp}) and (\ref{eqt:barmurDef}), we have  
		$\widetilde{d}' \sim \widehat{d}' =  \sum_{\ell=1}^m \widehat{d}^{\{\ell\}}/m$ 
			and 
			$\widetilde{d}'' \sim \widehat{d}'' = \widehat{d}^{\{1:m\}}$.
			It implies that $\overline{\mu}_r - \widetilde{\mu}_r \inP 0$, 
			where $\overline{\mu}_r$ in defined in (\ref{eqt:barmurDef}); 
			and $\widetilde{\mu}_r$ can be either $\widetilde{\mu}_{r,\wt}$ or $\widetilde{\mu}_{r,\lrt}$.
	\item From Remark \ref{rem:asy_equiv_of_murHat},
			we have $\overline{\mu}_r - \widehat{\mu}_r \inP 0$. 
\end{itemize}
Hence, 
\begin{eqnarray*}
	\widetilde{D}
	= \frac{\widetilde{d}''}{k\left\{ 1 + (1+\frac{1}{m})\widetilde{\mu}_r \right\} }
	\sim \frac{\widehat{d}^{\{1:m\}}}{k\left\{ 1 + (1+\frac{1}{m})\overline{\mu}_r \right\} }
	\sim \frac{\widehat{d}^{\{1:m\}}}{k\left\{ 1 + (1+\frac{1}{m})\widehat{\mu}_r \right\} } 
	= \widehat{D},
\end{eqnarray*}
where $\widehat{D}$ can be either $\widehat{D}_{\wt}$, $\widehat{D}_{\lrt}$ or $\widehat{D}_{\rt}$; and 
$\widetilde{D}$ can be either $\widetilde{D}_{\wt}$ or $\widetilde{D}_{\lrt}$.
Hence,  $\widehat{D}- \widetilde{D} \inP 0$. 

For part 2, note that $\widehat{D}$ is a function of $\widehat{d}^{\{1\}},\ldots, \widehat{d}^{\{m\}},\widehat{d}^{\{-1\}}, \ldots, \widehat{d}^{\{-m\}}, \widehat{d}^{\{1:m\}}$.
Applying Theorem \ref{thm:asy_Distof_bar_d_S} 
we know that $\widehat{D} \inD \mathbb{D}$
such that the weak limit $\mathbb{D}$ admits the following asymptotic ($n\rightarrow\infty$) representation: 
\begin{eqnarray}\label{eqt:mathbbD_ver1}
	\mathbb{D}
		= \frac{m\sum_{j=1}^k \left\{ (1+r_j)^{1/2}W_j + r_j^{1/2}\overline{Z}_{j\bullet} \right\}^2}{mk + (m+1)\sum_{j=1}^k r_j s_{Z_j}^2 },
\end{eqnarray}
where $s_{Z_j}^2 := (m-1)^{-1} \sum_{\ell=1}^m (Z_{j\ell} - \overline{Z}_{j\bullet})^2$ is the 
sample variance of $\{Z_{j\ell}\}_{\ell=1}^m$.
Note that 
\begin{itemize}
	\item the set of random variables $\{ W_j, \overline{Z}_{j\bullet}, s_{Z_j}^2\}_{j=1}^k$ 
			are mutually independent; and 
	\item $(1+r_j)^{1/2}W_j + r_j^{1/2}\overline{Z}_{j\bullet} \sim \Normal(0, 1+(1+\frac{1}{m})r_j)$
			and 
			$s_{Z_j}^2 \sim \chi^2_{m-1}/(m-1)$.
\end{itemize}
Using these two facts, 
we can simplify (\ref{eqt:mathbbD_ver1}) to 
\begin{eqnarray*}\label{chp2-eqt:simpleRepDm}
	\mathbb{D}
		\;=\; \frac{m\sum_{j=1}^k \left\{ 1+(1+\frac{1}{m})r_j \right\} G_j^2}{mk + (m+1)\sum_{j=1}^k r_j H_j^2 } 
		\;=\; \frac{\frac{1}{k}\sum_{j=1}^k \left\{ 1+ (1+\frac{1}{m})r_{j} \right\} G_j}{ 1+\frac{1}{k}\sum_{j=1}^k (1+\frac{1}{m})r_{j} H_j}. 
\end{eqnarray*}
where $G^2_{1}, \ldots, G^2_k \simIID \chi^2_1$ and $H^2_1, \ldots, H_k^2 \simIID \chi^2_{m-1}/(m-1)$ are mutually independent.

\subsection{Proof of Corollary \ref{coro:unbiasedEstofVarlambda}}\label{sec:proof_unbiasedEstofVarlambda}
For part 1,  
recall that $\sigma^2_{\lambda} = R_2/k - (R_1/k)^2$.
In order to derive an asymptotically unbiased estimators for $\sigma^2_r$, 
it suffices to find asymptotically unbiased estimators for $R_2$ and $R_1^2$.
By Corollary \ref{coro:example_var}, 
we have, as $n\rightarrow\infty$, that 
\begin{eqnarray*}
	\begin{bmatrix}
		\E ( \widehat{t}_1^2  )\\
		\E (  \widehat{t}_2   )
	\end{bmatrix}
	\rightarrow
	\left[ 
	\begin{array}{cc}
		2/(m-1) 	\quad	& 1 \\
		2 \quad & 1
	\end{array}
	\right]
	\begin{bmatrix} 
		R_2\\
		R_1^2 
	\end{bmatrix}.
\end{eqnarray*}
Then solving for $R_1^2$ and $R_2$, we have 
\begin{eqnarray}\label{eqt:unbiased_comp}
	\frac{1}{2(m-2)}
	\begin{bmatrix}
		(m-1)\E( \widehat{t}_2 - \widehat{t}_1^2 ) \\
		2 \E\{ (m-1)\widehat{t}_1^2  - \widehat{t}_2 \} 
	\end{bmatrix}
	\rightarrow
	\left[ 
	\begin{array}{c}
		R_2\\
		R_1^2 
	\end{array}
	\right].
\end{eqnarray}
Since $\widehat{\sigma}^2_r$ is a linear combination of the left hand side of (\ref{eqt:unbiased_comp}), 
we can easily show that $\E(\widehat{\sigma}^2_r) \rightarrow R_2/k - (R_1/k)^2 = \sigma^2_{r}$
as $n\rightarrow\infty$. 

For part 2, note that $\widehat{\sigma}^2_r = C_1\widehat{t}_2 + C_2\widehat{t}_1^2$, 
		where $C_1 = \{k(m-1)+2\}/\{2k^2(m-2)\}$ and $C_2 = -(m-1)(k+2)/\{2k^2(m-2)\}$.
		Since $\Var(\widehat{\sigma}^2_r) \rightarrow C_1 \Var(\widehat{t}_2) + C_2 \Var(\widehat{t}_1^2) + 2C_1 C_2 \Cov(\widehat{t}_2,\widehat{t}_1^2)$, 
		it suffices to find the two variances and the covariance. 
		We find them one by one below. 
		Let $m_0 = m-1$ for simplicity.
		\begin{itemize}
			\item We start with deriving the limit of 
				$\Var(\widehat{t}_2) = \E(\widehat{t}_2^2) - \E^2(\widehat{t}_2)$.
				By Corollary \ref{coro:example_var}, $\E(\widehat{t}_2) \rightarrow 2R_2+R_1^2$. 
				By Proposition \ref{prop:limit_of_J}, 
				$\mathbb{T}_{\ell} := \mathbb{T}_{\{\ell\},\{-\ell\}}$, $\ell=1, \ldots, m$, 
				are identically distributed, 
				and $(T_{1}, \ldots, T_m) \inD (\mathbb{T}_1, \ldots, \mathbb{T}_{m})$. 
				Moreover, the values of $\E(T_{\ell}T_{\ell'})$'s are identical whenever $\ell\neq \ell'$. 
				Hence, we have 
				\begin{eqnarray}
					\E(\widehat{t}_2^2)
						&=& \left\{\frac{1}{m^2}\sum_{\ell=1}^m \E\left( T_{\ell}^4 \right)\right\}
							+ \left\{\frac{1}{m^2} \mathop{\sum\sum}_{\substack{\ell,\ell'\in\{1,\ldots,m\}\\\ell\neq \ell'}} \E\left( T_{\ell}^2 T_{\ell'}^2\right)\right\}\nonumber \\
						&=& \frac{1}{m} \E\left( T_1^4 \right) + \frac{m-1}{m} \E\left( T_{1}^2 T_{2}^2\right) . \label{chp2-eqt:mathcalA12} 
				\end{eqnarray}
				The limiting values of the expectations in (\ref{chp2-eqt:mathcalA12})
				can be found by Lemma \ref{chp2-lemma:E_combination_of_L}, 
				which will be presented in Section \ref{chp2-sec:techLemma}.
				Applying Lemma \ref{chp2-lemma:E_combination_of_L}
				and performing some simple algebras,  
				we have,  
				as $n\rightarrow\infty$, that 
				\begin{eqnarray*}\label{chp2-eqt:fianl_EsqL2}
					\E(\widehat{t}_2^2)
						&\rightarrow& \frac{1}{m} \E\left( \mathbb{T}_1^4 \right) + \frac{m-1}{m} \E\left( \mathbb{T}_{1}^2 \mathbb{T}_{2}^2\right)\nonumber \\
						&=& \frac{1}{m^3m_0}
							\bigg\{ \left( 48m_0^3+32m_0^2 + 16      \right) R_4 
									+ \left( 32m_0^3+32m_0^2           \right) R_3 R_1 \nonumber \\
						&&\qquad + \left(  8m_0^3 + 8      \right) R_2^2 
									+ \left(  8m_0^4 + 8m_0^2 \right) R_2R_1^2 
									\bigg\} .
				\end{eqnarray*}
			\item The derivation of $\Var(\widehat{t}_1^2) = \E(\widehat{t}_1^4) - \E^2(\widehat{t}_1^2)$ 
					is similar.	By doing the combinatoric counting and using Lemma \ref{chp2-lemma:E_combination_of_L}, we have, as $n\rightarrow\infty$, that  
					\begin{eqnarray*}
						\E(\widehat{t}_1^4)
							&=& \frac{1}{m^4} 
								\sum_{\ell_1=1}^m\sum_{\ell_2=1}^m\sum_{\ell_3=1}^m\sum_{\ell_4=1}^m
								\E\left( T_{\ell_1} T_{\ell_2} T_{\ell_3} T_{\ell_4}\right) \nonumber\\
							&\rightarrow& \frac{1}{m^4}\bigg\{ 
									m\E(\mathbb{T}_1^4) + 4(m^2-m)\E(\mathbb{T}_1^3\mathbb{T}_2) + 3(m^2-m) \E(\mathbb{T}_1^2 \mathbb{T}_2^2) \nonumber\\
							&&\qquad			+ 6(m^3-3m^2+2m) \E(\mathbb{T}_1^2 \mathbb{T}_2 \mathbb{T}_3) \nonumber \\
							&&\qquad			+ (m^4-6m^3+11m^2-6m) \E(\mathbb{T}_1\mathbb{T}_2\mathbb{T}_3\mathbb{T}_4)
								\bigg\} \nonumber\\
						&=& \frac{1}{m m_0^7}
							\bigg\{ 24\left( 2m_0^5+2m_0^4 + 3m_0-6      \right) \Lambda_4 
									+ 32m_0^5\left( m_0+1   \right) R_3 R_1 \nonumber \\
						&&\qquad + 12m_0^5\left( m_0+1   \right) R_2^2 
									+ 12m_0^6\left( m_0+1   \right) R_2R_1^2 
									\nonumber \\
						&& \qquad+ m_0^7 (m_0+1) R_1^4
							\bigg\} . \label{chp2-eqt:4ndMoment_L1Bar}
					\end{eqnarray*}
			\item The derivation of $\Cov(\widehat{t}_2,\widehat{t}_1^2) 
					= \E(\widehat{t}_2\widehat{t}_1^2) -\E(\widehat{t}_2)\E(\widehat{t}_1^2)$ 
					is also similar. We have, as $n\rightarrow\infty$, that   								
					\begin{eqnarray*}
						\E(\widehat{t}_2\widehat{t}_1^2)
							&\rightarrow& \frac{1}{m^3}\bigg\{ 
									m\E(\mathbb{T}_1^4) + (m^2-m)\E(\mathbb{T}_1^2\mathbb{T}_2^2) + 2(m^2-m) \E(\mathbb{T}_1^3 \mathbb{T}_2) \\
							&&\qquad	+ (m^3-3m^2+2m) \E(\mathbb{T}_1^2\mathbb{T}_2\mathbb{T}_3)
								\bigg\} \nonumber \\
							&=& \frac{1}{mm_0^3}\bigg\{ 
										  48m_0^2 R_4 
										+ 16m_0^2\left( m_0+1   \right) R_3 R_1 \nonumber \\
							&&\qquad + 4m_0^2\left( m_0+2   \right) R_2^2 
										+ 2m_0^3\left( m_0+5   \right) R_2R_1^2 
										+ m_0^4 R_1^4
								\bigg\} .\label{chp2-eqt:E_L1BarSq_L2Bar}
					\end{eqnarray*}
		\end{itemize}
		Finally, the limit of $\Var(\widehat{\sigma}^2_r)$ can be found by combining the above results.

\subsection{Proof of Proposition \ref{prop:wHat}}\label{sec:proof_prop:wHat}
Note that $\widehat{Q}$ is a function of $\widehat{T}_{1}, \ldots, \widehat{T}_k$. 
By Proposition \ref{prop:limit_of_J}, we have $\widehat{Q}\inD \mathbb{Q}$
such that the weak limit $\mathbb{Q}$ admits the representation defined in (\ref{eqt:conv_of_u}).

\section{Technical Lemmas}\label{chp2-sec:techLemma}
\begin{lemma}\label{chp2-lem:splitFourFoldedSum}
Let $k\in\mathbb{N}$ and $r_1,\ldots,r_k\in\mathbb{R}$.
Define 
\begin{eqnarray*}
	S_{\phi} \equiv \sum_{(i,i',j,j')\in \Xi_{\phi}} r_ir_{i'}r_{j}r_{j'}, 
	\qquad \phi\in\{4,31,22,211,111\},
\end{eqnarray*}
where
{\small
\begin{align}
	\Xi_{4}    &= \left\{ (i,i',j,j')\in\{1,\ldots,k\}^4 : i=i'=j=j' \right\},
	\label{chp2-eqt:partition_of_fourFoldSuq_1}\\
	\Xi_{31}   &= \left\{ (i,i',j,j')\in\{1,\ldots,k\}^4 : i=i'=j\neq j' \right\},
	\label{chp2-eqt:partition_of_fourFoldSuq_2}\\  
	\Xi_{22}   &= \left\{ (i,i',j,j')\in\{1,\ldots,k\}^4 : i=i'\neq j=j' \right\},
	\label{chp2-eqt:partition_of_fourFoldSum_3}\\  
	\Xi_{211}  &= \left\{ (i,i',j,j')\in\{1,\ldots,k\}^4 :  \text{$i=i'$ but none of $i,j,j'$ are equal}  \right\},
	\label{chp2-eqt:partition_of_fourFoldSum_4}\\  
	\Xi_{1111} &= \left\{ (i,i',j,j')\in\{1,\ldots,k\}^4 : \text{None of $i,i',j,j'$ are equal} \right\}.	\label{chp2-eqt:partition_of_fourFoldSum_5}
\end{align}
}
Then the following identity holds:
\begin{align}\label{chp2-eqt:lem_split_fourFlodSum_toSumOfS}
	\sum_{i=1}^k\sum_{i'=1}^k\sum_{j=1}^k\sum_{j'=1}^k r_ir_{i'}r_{j}r_{j'}
		\equiv S_{4} + 4S_{31} + 3S_{22} + 6S_{211} + S_{1111},
\end{align}
where 
\begin{align}
	S_4     &= R_4, 
	\label{chp2-eqt:def_S4}\\
	S_{31}  &= R_3R_1 - R_4,
	\label{chp2-eqt:def_S31}\\
	S_{22}  &= R_2^2 - R_4  ,
	\label{chp2-eqt:def_S22}\\
	S_{211} &=  R_2R_1^2 - 2R_3R_1 - R_2^2 + 2 R_4,
	\label{chp2-eqt:def_S211}\\
	S_{1111} &= R_1^4 - 6 R_2R_1^2 + 8 R_3R_1 + 3R_2^2 - 6R_4.
	\label{chp2-eqt:def_S1111}
\end{align}
\end{lemma}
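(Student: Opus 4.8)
The plan is to prove the combinatorial identity \eqref{chp2-eqt:lem_split_fourFlodSum_toSumOfS} by partitioning the index set $\{1,\ldots,k\}^4$ according to the equality pattern of the quadruple $(i,i',j,j')$, and then to evaluate each of the five building-block sums $S_\phi$ in terms of the power sums $R_1,\ldots,R_4$ by inclusion--exclusion. First I would observe that the set partitions of a $4$-element multiset into equality classes have the shapes $4$, $3{+}1$, $2{+}2$, $2{+}1{+}1$, and $1{+}1{+}1{+}1$, and that the unordered index set $\{1,\ldots,k\}^4$ splits into these five disjoint pieces. The slightly subtle point is that the definitions \eqref{chp2-eqt:partition_of_fourFoldSuq_2}--\eqref{chp2-eqt:partition_of_fourFoldSum_4} pin down a \emph{specific} ordered representative of each shape (e.g. $\Xi_{31}$ forces $i=i'=j$), so to recover the full fourfold sum one must count how many of the $4!/(\text{automorphisms})$ ordered arrangements of a given shape land in the distinguished $\Xi_\phi$. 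For shape $3{+}1$ there are $4$ ways to choose which of the four slots is the singleton, giving the coefficient $4$; for shape $2{+}2$ there are $\binom{4}{2}/... = 3$ ways to split four slots into two labelled-by-value pairs, giving $3$; for shape $2{+}1{+}1$ there are $\binom{4}{2}=6$ ways to choose the doubled pair, giving $6$; shapes $4$ and $1{+}1{+}1{+}1$ are symmetric and contribute coefficient $1$. This yields \eqref{chp2-eqt:lem_split_fourFlodSum_toSumOfS}.

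Next I would compute each $S_\phi$. The term $S_4 = \sum_i r_i^4 = R_4$ is immediate, giving \eqref{chp2-eqt:def_S4}. For $S_{31}$, the constraint $i=i'=j\neq j'$ means summing $r_i^3 r_{j'}$ over all $i\neq j'$, which equals $\big(\sum_i r_i^3\big)\big(\sum_{j'} r_{j'}\big) - \sum_i r_i^4 = R_3 R_1 - R_4$, giving \eqref{chp2-eqt:def_S31}. Similarly $S_{22}$ sums $r_i^2 r_j^2$ over $i\neq j$, equal to $R_2^2 - R_4$, giving \eqref{chp2-eqt:def_S22}. For $S_{211}$ and $S_{1111}$ the cleanest route is systematic inclusion--exclusion on the "all distinct" constraint: I would express the sum over \emph{all} ordered triples (for $S_{211}$, the free indices are $i,j,j'$ with $i=i'$) or quadruples (for $S_{1111}$) and subtract the over-counted coincidences. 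Concretely, for $S_{211}$ one has $\sum_{i,j,j'} r_i^2 r_j r_{j'} = R_2 R_1^2$, from which one removes the cases $j=j'$, $i=j$, $i=j'$ with the usual Bonferroni corrections, landing at $R_2 R_1^2 - 2R_3 R_1 - R_2^2 + 2R_4$, i.e. \eqref{chp2-eqt:def_S211}. For $S_{1111}$ the analogous expansion starts from $R_1^4$ and the inclusion--exclusion over the $\binom{4}{2}=6$ possible pairwise equalities produces $R_1^4 - 6R_2 R_1^2 + 8R_3 R_1 + 3R_2^2 - 6R_4$, i.e. \eqref{chp2-eqt:def_S1111}; this matches the standard monomial-symmetric-to-power-sum conversion for partitions of $4$.

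As a consistency check I would verify that the five displayed formulas satisfy the master identity, namely that $S_4 + 4S_{31} + 3S_{22} + 6S_{211} + S_{1111}$ collapses to $R_1^4$ (which it must, since the left side of \eqref{chp2-eqt:lem_split_fourFlodSum_toSumOfS} is literally $\big(\sum_i r_i\big)^4 = R_1^4$): adding the $R_4$ coefficients gives $1 - 4 - 3 + 12 - 6 = 0$, the $R_3 R_1$ coefficients give $4 - 12 + 8 = 0$, the $R_2^2$ coefficients give $3 - 6 + 3 = 0$, the $R_2 R_1^2$ coefficients give $6 - 6 = 0$, and the lone $R_1^4$ survives with coefficient $1$. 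This both confirms the arithmetic and shows the lemma is self-certifying.

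The main obstacle here is purely bookkeeping: getting the multiplicities $4, 3, 6$ right in \eqref{chp2-eqt:lem_split_fourFlodSum_toSumOfS} and correctly organizing the inclusion--exclusion for $S_{211}$ and $S_{1111}$ without sign or coefficient errors. There is no analytic difficulty, no limiting argument, and no dependence on any distributional assumption; it is a finite identity in the formal variables $r_1,\ldots,r_k$. I would therefore present the shape-partition count first, then dispatch $S_4, S_{31}, S_{22}$ by direct complementary counting, handle $S_{211}$ and $S_{1111}$ by inclusion--exclusion, and close with the coefficient-cancellation check above as a safeguard against transcription mistakes. (The reason this lemma is needed downstream is that the proof of Corollary \ref{coro:unbiasedEstofVarlambda} requires fourth-moment expansions of quadratic forms in Gaussians, which unavoidably generate fourfold sums $\sum r_i r_{i'} r_j r_{j'}$ that must be re-expressed through $R_1,\ldots,R_4$.)
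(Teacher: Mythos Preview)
Your proposal is correct and follows essentially the same approach as the paper: partition the fourfold index set by equality-pattern shapes to obtain the multiplicities $1,4,3,6,1$, then evaluate $S_4,S_{31},S_{22}$ by direct complementary counting and $S_{211},S_{1111}$ by inclusion--exclusion. The coefficient-cancellation check you add at the end (verifying that everything collapses to $R_1^4$) is a nice safeguard the paper does not include, but otherwise the two arguments are the same.
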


\begin{proof}[Proof of Lemma~\ref{chp2-lem:splitFourFoldedSum}]
For any $h\in\mathbb{N}$, and $a,b\in\{1,\ldots,h\}$, let $\Omega^h = \{1,\ldots,k\}^h$, and $\Omega^h_{ab}\subset \Omega^h$
such that every $x\in \Omega_{ab}^h$ is a $h$-dimensional vector in $\Omega^h$
such that $x$ has the same $a$th and $b$th components, e.g.,
$\Omega_{12}^3 = \left\{ (i,j,j')\in \Omega^3 :  i=j \right\}.$
Note that 
\begin{align*}
	&\sum_{i=1}^k\sum_{i'=1}^k\sum_{j=1}^k\sum_{j'=1}^k r_ir_{i'}r_{j}r_{j'} \\
	&\qquad= 	\left( \sum_{i=1}^k r_i^4 \right)
		+\left\{ {4\choose 1}\mathop{\sum\sum}_{i\neq j} r_i^3 r_j  \right\}
		+ \left\{ \frac{1}{2}{4\choose 2} \mathop{\sum\sum}_{i\neq j} r_i^2 r_j^2 \right\} \\
	&\qquad\qquad	+ \left\{ {4\choose 2} \mathop{\sum\sum\sum}_{(i,j,j')\in\left(\Omega_{12}^3\right)^{\C}\cap\left(\Omega_{13}^3\right)^{\C}\cap\left(\Omega_{23}^3\right)^{\C}} r_i^2r_jr_{j'} \right\} 
	+\left\{ \mathop{\sum\sum\sum}_{i<i'<j<j'} r_ir_{i'}r_jr_{j'}  \right\} . 
\end{align*}		
Hence, it remains to verify (\ref{chp2-eqt:def_S4})--(\ref{chp2-eqt:def_S1111}).
First, (\ref{chp2-eqt:def_S4}) is trivial. 
For (\ref{chp2-eqt:def_S31}) and (\ref{chp2-eqt:def_S22}), we know 
\begin{eqnarray*}
	S_{31} 
		&=& \mathop{\sum\sum}_{i\neq j} r_i^3 r_j 
		= \left(\sum_{i=1}^m \sum_{j=1}^m r_i^3 r_j \right)
			-\left( \sum_{i=1}^m r_i^4 \right) 	
		=R_3R_1 - R_4 ,\\
	S_{22} 
		&=& \mathop{\sum\sum}_{i\neq j} r_i^2 r_j^2
		= \left(\sum_{i=1}^m \sum_{j=1}^m r_i^2 r_j^2 \right)
			-\left( \sum_{i=1}^m r_i^4 \right) 	
		=R_2R_2 - R_4 .
\end{eqnarray*}
For (\ref{chp2-eqt:def_S211}) and (\ref{chp2-eqt:def_S1111}),
they follow form the inclusion-exclusion formula. 
Define $\oplus$, $\ominus$ and $\otimes$ be 
the addition, subtraction and multiplication in the sense of multisets, 
for example, $\{1,2\}\oplus\{1\}=\{1,1,2\}$, $\{1,1,2\}\ominus\{1,2\}=\{1\}$
and $2\otimes \{1,2\}=\{1,2\}\oplus\{1,2\}$.
Consider (\ref{chp2-eqt:def_S211}) first.
We have
\begin{align}
	&\left(\Omega_{12}^3\right)^{\C}\cap\left(\Omega_{13}^3\right)^{\C}\cap\left(\Omega_{23}^3\right)^{\C}\nonumber\\
		&\qquad= \Omega^3 \ominus \left( \Omega_{12}^3 \cap \Omega_{13}^3 \cap \Omega_{23}^3\right) \nonumber\\
		&\qquad= \Omega^3 \ominus \Omega_{12}^3 \ominus \Omega_{13}^3 \ominus \Omega_{23}^3
				\oplus \left(\Omega_{12}^3 \cap \Omega_{13}^3 \right) \nonumber \\
		& \qquad\qquad		\oplus \left(\Omega_{12}^3 \cap \Omega_{23}^3 \right)
				\oplus \left(\Omega_{13}^3 \cap \Omega_{23}^3 \right)
				\ominus \left(\Omega_{12}^3 \cap \Omega_{13}^3 \cap \Omega_{23}^3 \right) \nonumber\\
		&\qquad= \Omega^3 \ominus \Omega_{12}^3 \ominus \Omega_{13}^3 \ominus \Omega_{23}^3
				\oplus 2 \otimes \left(\Omega_{12}^3 \cap \Omega_{13}^3 \right), \label{chp2-eqt:ExInFormula3}
\end{align}
where $(\cdot)^{\C}$ denotes set complement in the space $\Omega^3$.
Using (\ref{chp2-eqt:ExInFormula3}), we can simplify 
\begin{eqnarray*}
	S_{211} 
		&=& \mathop{\sum\sum\sum}_{(i,j,j')\in\left(\Omega_{12}^3\right)^{\C}\cap\left(\Omega_{13}^3\right)^{\C}\cap\left(\Omega_{23}^3\right)^{\C}} r_i^2r_jr_{j'} \\
		&=& \left(\sum_{(i,j,j')\in \Omega^3} - \sum_{(i,j,j')\in \Omega^3_{12}} - \sum_{(i,j,j')\in \Omega_{13}^3} - \sum_{(i,j,j')\in \Omega_{23}^3}
				+ 2\sum_{(i,j,j')\in\left(\Omega_{12}^3 \cap \Omega_{13}^3 \right)}\right)r_i^2r_jr_{j'} \\
		&=&  R_2R_1^2 - 2R_3R_1 - R_2^2 + 2 R_4.
\end{eqnarray*}
Similarly, for (\ref{chp2-eqt:def_S1111}), 
We have
\begin{align*}
	&		 \left(\Omega_{12}^4\right)^{\C}
		\cap\left(\Omega_{13}^4\right)^{\C}
		\cap\left(\Omega_{14}^4\right)^{\C}
		\cap\left(\Omega_{23}^4\right)^{\C}
		\cap\left(\Omega_{24}^4\right)^{\C}
		\cap\left(\Omega_{34}^4\right)^{\C}
	\nonumber\\
		&\qquad= \Omega^4 
			- \left(\sumPrime_{a_1b_1} \Omega^4_{a_1b_1}\right)
			+ \left(\sumPrime_{a_1b_1,a_2b_2} \Omega^4_{a_1b_1}\cap \Omega^4_{a_2b_2}\right)\nonumber \\
	& \qquad\qquad
			- \cdots+ \left(\sumPrime_{a_1b_1,\ldots,a_6b_6} \Omega^4_{a_1b_1}\cap \cdots \cap \Omega^4_{a_6b_6}\right),
\end{align*}
where 
$(\cdot)^{\C}$ denotes set complement in the space $\Omega^4$; and
the summation 
$
	\sumPrime_{a_1b_1,\ldots,a_jb_j}
$
is summing over all possible combinations of $a_1b_1,\ldots,a_jb_j\in\{12,13,14,23,24,34\}$.
Then, we have 
\begin{eqnarray*}
	S_{1111} 
		&=&  R_1^4 - 6 R_2R_1^2 + 8 R_3 R_1
				+ 3 R_2^2 - 6 R_4.
\end{eqnarray*}
Thus, all results follow. 
\end{proof}

\begin{lemma}\label{chp2-lemma:E_combination_of_L}
Suppose that Conditions \ref{cond-Normality}--\ref{cond-UI} hold. 
Let $m_0=m-1\geq 2$.
Under $H_0$, we have, as $n\rightarrow\infty$, that 
\begin{eqnarray*}
	\E\left( T_1^4 \right) 
		&\rightarrow& 48R_4 + 32R_3R_1 + 12 R_2^2 + 12 R_1^2 R_2 + R_1^4 , \\
	\E\left( T_1^2T_2T_3 \right)
		&\rightarrow& \frac{16(1-2m_0)}{m_0^4}R_4 
						+ \frac{16(m_0-1)}{m_0^3}R_3R_1 
						+ \frac{4(m_0^2+2)}{m_0^4}R_2^2 \\
						&&\qquad+ \frac{2(m_0^2+5)}{m_0^2}R_1^2R_2 
						+ R_1^4, \\
	\E\left( T_1^3T_2 \right) 
		&\rightarrow& \frac{48}{m_0^2}R_4 
						+ \frac{8(m_0^2+3)}{m_0^2}R_3R_1
						+ \frac{12}{m_0^2}R_2^2
						+ \frac{6(m_0^2+1)}{m_0^2}R_1^2R_2
						+ R_1^4, \\
	\E\left( T_1^2T_2^2 \right) 
		&\rightarrow& \frac{16(m_0^2+1)}{m_0^4}R_4 
						+ \frac{32}{m_0^2}R_3R_1
						+ \frac{4(m_0^4+2)}{m_0^4}R_2^2\\
						&&\qquad + \frac{4(m_0^2+2)}{m_0^2}R_1^2 R_2
						+ R_1^4, \\ 
	\E\left( T_1T_2T_3T_4 \right) 
		&\rightarrow& \frac{48m_0^5-48m_0^4+72m_0^2+144m_0+72}{m_0^8(m_0-1)}R_4 
						+ \frac{32(1-m_0)}{m_0^3(m_0-1)}R_3 R_1\\
		&&\qquad				+ \frac{12(m_0-1)}{m_0^4(m_0-1)}R_2^2 
						+ \frac{12(m_0-1)}{m_0^2(m_0-1)}R_1^2 R_2
						+ R_1^4.
\end{eqnarray*}
\end{lemma}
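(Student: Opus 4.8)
The plan is to pass to the weak limits and then carry out a Gaussian moment computation. By part~(1) of Proposition~\ref{prop:limit_of_J} (and the joint convergence in Theorem~\ref{thm:asy_Distof_bar_d_S}), the vector $(\widehat{T}_{1},\ldots,\widehat{T}_{m})$ with $\widehat{T}_{\ell}=\widehat{T}_{\{\ell\},\{-\ell\}}$ converges weakly, as $n\to\infty$, to $(\mathbb{T}_{1},\ldots,\mathbb{T}_{m})$ with $\mathbb{T}_{\ell}=\mathbb{T}_{\{\ell\},\{-\ell\}}$. Specialising $S_1=\{\ell\}$, $S_2=\{-\ell\}$ in (\ref{eqt:limit_of_J}) and using the elementary identity $Z_{j\ell}-\tfrac{1}{m-1}\sum_{\ell'\neq\ell}Z_{j\ell'}=\tfrac{m}{m-1}(Z_{j\ell}-\bar{Z}_{j\bullet})$ gives the convenient form $\mathbb{T}_{\ell}=\tfrac{m}{m-1}\sum_{j=1}^{k}r_{j}(Z_{j\ell}-\bar{Z}_{j\bullet})^{2}$, where the $Z_{j\ell}$ are independent $\Normal(0,1)$, $\bar{Z}_{j\bullet}=m^{-1}\sum_{\ell}Z_{j\ell}$, and the vectors $(Z_{j\bullet})$ are independent across $j$. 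For any indices $\ell_1,\ldots,\ell_4$ the arithmetic--geometric mean inequality gives $|\widehat{T}_{\ell_1}\widehat{T}_{\ell_2}\widehat{T}_{\ell_3}\widehat{T}_{\ell_4}|\le\tfrac14\sum_{a=1}^{4}\widehat{T}_{\ell_a}^{4}$, and the right-hand side is uniformly integrable by Condition~\ref{cond-UI} (with $2\tau=4$); combined with the continuous-mapping theorem this yields $\E(\widehat{T}_{\ell_1}\widehat{T}_{\ell_2}\widehat{T}_{\ell_3}\widehat{T}_{\ell_4})\to\E(\mathbb{T}_{\ell_1}\mathbb{T}_{\ell_2}\mathbb{T}_{\ell_3}\mathbb{T}_{\ell_4})$. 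Hence it suffices to evaluate the five limiting joint moments $\E(\mathbb{T}_1^4)$, $\E(\mathbb{T}_1^3\mathbb{T}_2)$, $\E(\mathbb{T}_1^2\mathbb{T}_2^2)$, $\E(\mathbb{T}_1^2\mathbb{T}_2\mathbb{T}_3)$ and $\E(\mathbb{T}_1\mathbb{T}_2\mathbb{T}_3\mathbb{T}_4)$.

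The first of these needs no work beyond Proposition~\ref{prop:limit_of_J}: since $Z_{j1}-\bar{Z}_{j\bullet}\sim\Normal(0,(m-1)/m)$, we have $\tfrac{m}{m-1}(Z_{j1}-\bar{Z}_{j\bullet})^{2}\sim\chi^{2}_{1}$ independently over $j$, so $\mathbb{T}_1$ has the distribution of $\mathbb{T}=\sum_{j}r_{j}U_{j}$ appearing in part~(2) of that proposition. Consequently $\E(\mathbb{T}_1^4)=t_4$, and iterating (\ref{eqt:def_q}) to $\tau=4$ gives $t_4=R_1^4+12R_1^2R_2+32R_1R_3+12R_2^2+48R_4$, exactly the stated value.

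For the four mixed moments I would expand $\prod_{a=1}^{4}\mathbb{T}_{\ell_a}=\big(\tfrac{m}{m-1}\big)^{4}\sum_{j_1,\ldots,j_4=1}^{k}r_{j_1}\cdots r_{j_4}\prod_{a=1}^{4}(Z_{j_a\ell_a}-\bar{Z}_{j_a\bullet})^{2}$ and take expectations. Since the vectors $(Z_{j\bullet})$ are independent across $j$, for a fixed tuple $(j_1,\ldots,j_4)$ the expectation of $\prod_{a}(Z_{j_a\ell_a}-\bar{Z}_{j_a\bullet})^{2}$ factorises over the blocks of the set partition $\pi$ of $\{1,2,3,4\}$ determined by the equalities among $j_1,\ldots,j_4$, each block $B$ contributing $\E[\prod_{a\in B}\tilde{Z}_{\ell_a}^{2}]$, where $(\tilde{Z}_1,\ldots,\tilde{Z}_m)$ is the centred Gaussian with covariance matrix $I_m-m^{-1}\mathbf{1}\mathbf{1}^{\T}$ (variance $\sigma^2=(m-1)/m$, off-diagonal $-1/m$). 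Summing over $(j_1,\ldots,j_4)$ is therefore the same as summing, over the $15$ partitions $\pi$, the quantity $\big(\sum r_{j_1}\cdots r_{j_4}\big)_{\mathrm{shape}(\pi)}\prod_{B\in\pi}\E[\prod_{a\in B}\tilde{Z}_{\ell_a}^{2}]$. The inner $r$-sum restricted to a partition shape is independent of which positions are grouped and is precisely $S_{4},S_{31},S_{22},S_{211},S_{1111}$ of Lemma~\ref{chp2-lem:splitFourFoldedSum}, hence a fixed combination of $R_4,R_3R_1,R_2^2,R_1^2R_2,R_1^4$; the Gaussian factors, however, depend on how the labels $(\ell_1,\ldots,\ell_4)$ distribute among the blocks, so the multiplicities must be recounted separately for each of the four $\ell$-patterns $(1,1,1,2)$, $(1,1,2,2)$, $(1,1,2,3)$ and $(1,2,3,4)$.

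The Gaussian inputs are obtained from Isserlis' theorem applied to $I_m-m^{-1}\mathbf{1}\mathbf{1}^{\T}$: for instance $\E[\tilde{Z}_1^2]=\sigma^2$, $\E[\tilde{Z}_1^4]=3\sigma^4$, $\E[\tilde{Z}_1^2\tilde{Z}_2^2]=\sigma^4+2/m^2$, together with the sixth- and eighth-order quantities $\E[\tilde{Z}_1^6]$, $\E[\tilde{Z}_1^4\tilde{Z}_2^2]$, $\E[\tilde{Z}_1^2\tilde{Z}_2^2\tilde{Z}_3^2]$, $\E[\tilde{Z}_1^8]$, $\E[\tilde{Z}_1^6\tilde{Z}_2^2]$, $\E[\tilde{Z}_1^4\tilde{Z}_2^4]$, $\E[\tilde{Z}_1^4\tilde{Z}_2^2\tilde{Z}_3^2]$ and $\E[\tilde{Z}_1^2\tilde{Z}_2^2\tilde{Z}_3^2\tilde{Z}_4^2]$, each a polynomial in $\sigma^2=(m-1)/m$ and $-1/m$. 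Substituting these, multiplying by $(m/(m-1))^4$ --- which cancels the $\sigma$-powers from the ``diagonal'' pairings but leaves genuine $m_0=m-1$ dependence coming from the cross terms and the partition multiplicities --- and collecting the coefficients of $R_4$, $R_3R_1$, $R_2^2$, $R_1^2R_2$, $R_1^4$ produces the five displayed limits. The main obstacle is precisely this last bookkeeping step: for each mixed moment one sums over the $15$ partitions with partition- and $\ell$-pattern-dependent Gaussian weights and must verify that everything collapses to exactly the stated rational functions of $m_0$. The computation is long but entirely mechanical, so I would guard against arithmetic slips by cross-checking the final formulas on small $k$ (where the $R_\tau$ become explicit) and small $m$, and against the already-verified value of $\E(\mathbb{T}_1^4)$.
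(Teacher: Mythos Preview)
Your proposal is correct and follows essentially the same route as the paper: pass to the weak limit $(\mathbb{T}_1,\ldots,\mathbb{T}_m)$ via uniform integrability, expand $\prod_a \mathbb{T}_{\ell_a}$ as a four-fold sum over $j_1,\ldots,j_4$, group terms by the equality pattern of $(j_1,\ldots,j_4)$, compute the resulting centred-Gaussian moments, and then invoke Lemma~\ref{chp2-lem:splitFourFoldedSum} to rewrite everything in the $R_\tau$ basis. The only procedural difference is that the paper computes the needed moments $\E\big[\prod_{a\in B}(Z_{1\ell_a}-\bar{Z}_{1\bullet})^2\big]$ by writing down an explicit Cholesky factorisation of $I_m-m^{-1}\mathbf{1}\mathbf{1}^{\T}$ (valid for $m\geq5$, with the small cases checked separately), whereas you propose to use Isserlis' theorem directly on that covariance matrix; both are standard and equivalent devices for the same Gaussian moment calculation.
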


\begin{proof}[Proof of Lemma~\ref{chp2-lemma:E_combination_of_L}]
Recall that, from Proposition \ref{prop:limit_of_J}, we have 
\begin{eqnarray}\label{eqt:weakLimitTell}
	T_{\ell} \inD
	\mathbb{T}_{\ell}
		&=&\frac{m-1}{m}
			\sum_{j=1}^k r_j \left\{ \bar{Z}_{j(\{\ell\})} - \bar{Z}_{j(\{-\ell\})} \right\}^2  \nonumber \\
		&=& \sum_{j=1}^k r_j \left\{ Z_{j\ell} - \bar{Z}_{j\bullet} \right\}^2. 
\end{eqnarray}
For $i,i',j,j'\in\{1,\ldots,k\}$ and $\ell,\ell',h,h'\in\{1,\ldots,m\}$,
define
\begin{eqnarray*}
	\mathcal{E}_{ii'jj'}^{\ell\ell'hh'} := \E\left\{  
					\left( Z_{i\ell}-\overline{Z}_{i\bullet}\right)^2
					\left( Z_{i'\ell'}-\overline{Z}_{i'\bullet}\right)^2
					\left( Z_{jh}-\overline{Z}_{j\bullet}\right)^2
					\left( Z_{j'h'}-\overline{Z}_{j'\bullet}\right)^2
				\right\}.
\end{eqnarray*}
Within this proof, let $G,G_1,G_2,\ldots\simIID\Normal(0,1)$.
Note that 
\begin{itemize}
	\item when $i\neq j$, 
			$Z_{i\ell}-\overline{Z}_{i\bullet}$ and $Z_{jh}-\overline{Z}_{j\bullet}$
			are independent $\Normal(0,m_0/m)$ 
			for any $\ell,h$; and 
	\item when $i=j$, $\Cov\left(Z_{i\ell}-\overline{Z}_{i\bullet},Z_{jh}-\overline{Z}_{j\bullet}\right) = -1/m$ 
			for any $\ell\neq h$.
\end{itemize}
Using these two facts, we can apply Cholesky decomposition to jointly represent 
\begin{align}\label{chp2-eqt:generalChol}
	\left[\begin{array}{c}
		Z_{11}-\overline{Z}_{1\bullet}\\
		Z_{12}-\overline{Z}_{1\bullet}\\
		Z_{13}-\overline{Z}_{1\bullet}\\
		Z_{14}-\overline{Z}_{1\bullet}
	\end{array}
	\right]
	= 
	\left(\frac{m_0}{m}\right)^{1/2}
	\left[\begin{array}{cccc}
		C_{11} & 0 & 0 & 0 \\
		C_{21} & C_{22} & 0 & 0 \\
		C_{31} & C_{32} & C_{33} & 0 \\
		C_{41} & C_{42} & C_{43} & C_{44} 
	\end{array}
	\right]
	\left[\begin{array}{c}
		G_1\\
		G_2\\
		G_3\\
		G_4\\
	\end{array}
	\right],
\end{align}
where $C_{11}=1$, 
\begin{align*}
	C_{22} &= \left\{ \frac{(m_0+1)(m_0-1)}{m_0^2} \right\}^{1/2},&
	C_{33} &= \left\{ \frac{(m_0+1)(m_0-2)}{m_0(m_0-1)} \right\}^{1/2},\\
	C_{44} &= \left\{ \frac{(m_0+1)(m_0-3)}{m_0(m_0-2)} \right\}^{1/2},&
	C_{21} &= C_{31} = C_{41} = - \frac{1}{m_0} ,\\
	C_{32} &= C_{42} = -\frac{m_0+1}{m_0^2(m_0-1)}, &
	C_{43} &= - \frac{m_0+1}{m_0(m_0-1)(m_0-2)}.
\end{align*}
We prove the lemma for $m\geq 5$ first
because the decomposition (\ref{chp2-eqt:generalChol}) is valid when $m\geq 5$.

\underline{\bf(Step 1)} We begin with $\E ( T_1^4  )$. 
Using (\ref{eqt:weakLimitTell}) and Condition \ref{cond-UI}, 
we have 
\begin{eqnarray*}
	 \E\left( T_1^4 \right) 
		\rightarrow \left( \frac{m}{m_0} \right)^4\sum_{i=1}^k\sum_{i'=1}^k\sum_{j=1}^k\sum_{j'=1}^k
			r_i r_{i'} r_j r_{j'}
				\mathcal{E}_{ii'jj'}^{1111}.
\end{eqnarray*}
To find $\mathcal{E}_{ii'jj'}^{1111}$, 
we consider five different cases of values of $(i,i',j,j')$ 
according to the partition rules 
(\ref{chp2-eqt:partition_of_fourFoldSuq_1})--(\ref{chp2-eqt:partition_of_fourFoldSum_5}).
We have 
\[
	\mathcal{E}_{ii'jj'}^{1111}
	= \left\{ 
	\begin{array}{ll}
	\E( G^8) = 105 & \text{if $(i,i',j,j')\in\Xi_4$}; \\
	\E( G^6) \E( G^2) = 15 & \text{if $(i,i',j,j')\in\Xi_{31}$}; \\
	\{ \E( G^4) \}^2= 9& \text{if $(i,i',j,j')\in\Xi_{22}$}; \\
	\E( G^4) \{ \E( G^2)\}^2= 3& \text{if $(i,i',j,j')\in\Xi_{211}$}; \\
	\{ \E( G^2) \}^4= 1& \text{if $(i,i',j,j')\in\Xi_{1111}$}; \\
	\end{array}
	\right.
\]
Applying Lemma \ref{chp2-lem:splitFourFoldedSum}, we have 
\begin{eqnarray*}\label{chp2-eqt:fianl_mathcalA1}
	\E\left( T_1^4 \right)
		&\rightarrow& 
			 (1)(105)S_4 + (4)(15)S_{31} + (3)(9)S_{22} + (6)(3)S_{211} + (1)(1)S_{1111} \nonumber\\
		&=& 48 R_4 + 32 R_3 R_1 + 12 R_2^2 + 12R_2R_1^2 + R_1^4,
\end{eqnarray*}
where $S_4,S_{31},S_{22},S_{211},S_{1111}$ are defined in 
(\ref{chp2-eqt:def_S4})--(\ref{chp2-eqt:def_S1111}), respectively.

\underline{\bf(Step 2)}
Similarly, we can find the limit of $\E ( T_{1}^2 T_{2}^2 )$.
	We begin with deriving $\mathcal{E}_{ii'jj'}^{1122}$. 
	Consider the following cases 
	according to different values of $(i,i',j,j')$.
\begin{itemize}
	\item Case 1: $i=i'=j=j'$.
By 
the  decomposition (\ref{chp2-eqt:generalChol}), we have
\begin{eqnarray*}
	\mathcal{E}_{ii'jj'}^{1122} 
		= \E\left\{ G_1^4 \left( C_{21}G_1 + C_{22}G_2 \right)^4 \right\} 
		=  \frac{3\left( 3m_0^4+24m_0^2+8\right)}{m_0^4}.
\end{eqnarray*}
\item Case 2: $i=i'=j\neq j'$. 
By independence of $Z_{i1} - \overline{Z}_{i\bullet}$ and $Z_{j'2} - \overline{Z}_{j'\bullet}$, 
we have 
\begin{eqnarray*}
	\mathcal{E}_{ii'jj'}^{1122} 
		= \E\left\{ G_1^4 \left( C_{21}G_1 + C_{22}G_2 \right)^2 \right\} \E\left( G_1^2 \right) 
		= \frac{3\left( m_0^2+4 \right)}{m_0^2}.
\end{eqnarray*}
\item Case 3.1: $i=i'\neq j= j'$.
Then $\mathcal{E}_{ii'jj'}^{1122} 
		=  \E^2 ( G_1^4  ) 
		= 9$.

\item Case 3.2: $i=j\neq i'= j'$.
Then 
\[
	\mathcal{E}_{ii'jj'}^{1122} 
		= \E^2\left\{ G_1^2 \left( C_{21}G_1 + C_{22}G_2 \right)^2 \right\} 
		= (m_0^4+4m_0^2+4 )/ m_0^4.
	\]

\item Case 4.1: $i=i'$ but none of $i,j,j'$ are equal.
Then \[\mathcal{E}_{ii'jj'}^{1122} 
		= \E ( G^4  ) \E^2 ( G^2 )
		= 3.\]
\item Case 4.2: $i=j$ but none of $i,i',j'$ are equal.
Then 
\[\mathcal{E}_{ii'jj'}^{1122} 
		= \E\left\{ G_1^2 \left( C_{21}G_1 + C_{22}G_2 \right)^2 \right\} 
			\E^2(  G^2)
		= (m_0^2+2)/m_0^2.\]
\item Case 5: none of $i,i',j,j'$ are equal.
Then $\mathcal{E}_{ii'jj'}^{1122} 
		= \E^4(  G^2 )
		=1$.
\end{itemize}
Applying Lemma \ref{chp2-lem:splitFourFoldedSum} again, we have 
\begin{eqnarray}
	\E\left( T_{1}^2 T_{2}^2\right) 
		&\rightarrow& \sum_{i=1}^k\sum_{i'=1}^k\sum_{j=1}^k\sum_{j'=1}^k
			r_i r_{i'} r_j r_{j'}
				\mathcal{E}_{ii'jj'}^{1122} \nonumber\\
		&&\quad= \frac{1}{m_0^4}
			\bigg[ (9m_0^4+72m_0^2+24)S_4 
					+ 4(3m_0^4+12m_0^2)S_{31}  \nonumber\\
		&&\qquad\qquad			
					+ \left\{ (9m_0^4) + 2(m_0^4+4m_0^2+4)\right\} S_{22} \nonumber\\
		&&\qquad\qquad	+ \left\{ 2(3m_0^4) + 4(m_0^4+2m_0^2)\right\} S_{211} 
					+ (m_0^4)S_{1111}\bigg] \nonumber\\
		&&\quad= \frac{1}{m_0^4}
			\bigg\{ \left( 32m_0^2 + 16      \right) R_4 
					+        32m_0^2           R_3 R_1 
					+ \left(  4m_0^4 + 8      \right) R_2^2 \nonumber \\
		&& \qquad\qquad
					+ \left(  4m_0^4 + 8m_0^2 \right) R_2R_1^2 
					+          m_0^4           R_1^4\bigg\} . 
\end{eqnarray}

\underline{\bf(Step 3)}
Next, we find the limit of $\E\left(T_1^3T_2 \right)$. 
We begin with deriving $\mathcal{E}_{ii'jj'}^{1112}$. 
Consider 
\begin{itemize}
\item Case 1: $i=i'=j=j'$. Then
$$
	\mathcal{E}_{ii'jj'}^{1112} 
		 =\E\left\{ G_1^6 \left( C_{21}G_1 + C_{22}G_2 \right)^2 \right\} 
		= (90+15m_0^2)/m_0^2.
$$
\item
Case 2.1: $i=i'=j\neq j'$.  Then
$
	\mathcal{E}_{ii'jj'}^{1112} 
		= \E(G^6) \E(G^2) 
		= 15.
$
\item
Case 2.2: $i=i'=j'\neq j$.  Then
$$
	\mathcal{E}_{ii'jj'}^{1112} 
		= \E\left\{ G_1^4 \left( C_{21}G_1 + C_{22}G_2 \right)^2 \right\}
		= (3m_0^2+12)/m_0^2.
$$
\item
Case 3: $i=i'\neq j= j'$. Then
$$
	\mathcal{E}_{ii'jj'}^{1112} 
		= \E(G^4)
			\E\left\{ G_1^2 \left( C_{21}G_1 + C_{22}G_2 \right)^2 \right\} 
		= (3m_0^2+6)/m_0^2.	
$$
\item
Case 4.1: $i=i'$ but none of $i,j,j'$ are equal. Then
$$
	\mathcal{E}_{ii'jj'}^{1112} 
		=\E\left( G^4 \right)  \E^2( G^2)
		= 3.
$$
\item
Case 4.2: $i=j'$ but none of $i,i',j$ are equal. Then
$$
	\mathcal{E}_{ii'jj'}^{1112} 
		= \E\left\{ G_1^2 \left( C_{21}G_1 + C_{22}G_2 \right)^2 \right\}\E^2(  G^2) 
		= (m_0^2+2 )/m_0^2.
$$
\item
Case 5: none of $i,i',j,j'$ are equal. Then
$
	\mathcal{E}_{ii'jj'}^{1112} 
		=  \E^4(  G^2)
		= 1.
$
\end{itemize}

Applying Lemma \ref{chp2-lem:splitFourFoldedSum}, we have 
\begin{eqnarray*}
	\E\left( T_{1}^3 T_{2}\right)
		&\rightarrow& \sum_{i=1}^k\sum_{i'=1}^k\sum_{j=1}^k\sum_{j'=1}^k
			r_i r_{i'} r_j r_{j'}
				\mathcal{E}_{ii'jj'}^{1112} \\
		&=&\frac{1}{m_0^2}
			\bigg\{ (90+15m_0^2)S_4 
					+ \left\{15m_0^2+3(12+3m_0^2)\right\} S_{31} 
					+ 3(6+3m_0^2)S_{22} \nonumber\\
		&&\qquad	+ \left\{ 3(3m_0^4) + 3(2+m_0^2)\right\} S_{211} 
					+ (m_0^2)S_{1111}\bigg\} \nonumber\\
		&=&\frac{1}{m_0^2}
			\left\{ 48 R_4 
					+ \left( 8m_0^2+24\right) R_3 R_1 
					+ 12 R_2^2 
					+ \left(  6m_0^2 + 6 \right) R_2R_1^2 
					+          m_0^2           R_1^4\right\} .
\end{eqnarray*}

\underline{\bf(Step 4)}
To find $\E\left( T_1^2T_2T_3 \right)$, we consider 
\begin{itemize}
\item 
Case 1: $i=i'=j=j'$.
Then 
\begin{eqnarray*}
	\mathcal{E}_{ii'jj'}^{1233} 
		&=&\E\left\{ G_1^4 \left( C_{21}G_1 + C_{22}G_2 \right)^2
			\left( C_{31}G_1 + C_{32}G_2 + C_{33}G_3 \right)^2 \right\} \\
		&=& \frac{3(m_0^4 + 10 m_0^2 - 16 m_0 + 8)}{m_0^4}.
\end{eqnarray*}
\item 
Case 2.1: $i=i'=j\neq j'$.  Then
\begin{eqnarray*}
	\mathcal{E}_{ii'jj'}^{1233} 
		&=&\E(G^2)
			\E\left\{ G_1^2 \left( C_{21}G_1 + C_{22}G_2 \right)^2 \left( C_{31}G_1 + C_{32}G_2 + C_{33}G_3 \right)^2 \right\} \\
		&=&\frac{m_0^3 + 6 m_0 - 8}{m_0^3}.
\end{eqnarray*}
\item 
Case 2.2: $i'=j=j'\neq i$.  Then
$$
	\mathcal{E}_{ii'jj'}^{1233} 
		= \E\left( G^2\right)
			\E\left\{ G_1^4 \left( C_{21}G_1 + C_{22}G_2 \right)^2 \right\}
		= (3m_0^2+12)/m_0^2.
$$
\item 
Case 3.1: $i=i'\neq j= j'$. Then
$$
	\mathcal{E}_{ii'jj'}^{1233} 
		= \E(G^4)
			\E\left( G_1^2 \left( C_{21}G_1 + C_{22}G_2 \right)^2 \right) 
		= (3m_0^2+6)/m_0^2.	
$$
\item 
Case 3.2: $i=j\neq i'= j'$. Then
$$
	\mathcal{E}_{ii'jj'}^{1233} 
		= \E^2\left\{ G_1^2 \left( C_{21}G_1 + C_{22}G_2 \right)^2 \right\} 
		= \left(m_0^2+2\right)^2/m_0^4.	
$$
\item 
Case 4.1: $i=i'$ but none of $i,j,j'$ are equal. Then
$$
	\mathcal{E}_{ii'jj'}^{1233} 
		= \E^2(G^2)
				\E\left\{ G_1^2 \left( C_{21}G_1 + C_{22}G_2 \right)^2 \right\}
		= (m_0^2+2)/m_0^2.
$$
\item 
Case 4.2: $j=j'$ but none of $i,i',j$ are equal.  Then
$$
	\mathcal{E}_{ii'jj'}^{1112} 
		= \E(G^4)\E^2(  G^2)
		= 3.
$$
\item 
Case 5: none of $i,i',j,j'$ are equal. Then
$
	\mathcal{E}_{ii'jj'}^{1112} 
		= \E^4(  G^2 ) 
		= 1.
$
\end{itemize}
Applying Lemma \ref{chp2-lem:splitFourFoldedSum}, we have 
\begin{eqnarray*}
	\E\left( T_{1}^2 T_{2} T_3\right)
		&\rightarrow& \sum_{i=1}^k\sum_{i'=1}^k\sum_{j=1}^k\sum_{j'=1}^k
			r_i r_{i'} r_j r_{j'}
				\mathcal{E}_{ii'jj'}^{1233}\\
		&=&\frac{1}{m_0^4}
			\bigg\{ 3(m_0^4+10m_0^2-16m_0+8)S_4 
					+ 2(4m_0^4+18m_0^2-8m_0) S_{31} \nonumber\\
		&&\qquad	+ (5m_0^4+14m_0^2+8) S_{22} 
					+ (8m_0^4 + 10 m_0^2) S_{211} 
					+ m_0^4S_{1111}\bigg\} \nonumber\\
		&=& \frac{1}{m_0^4}
			\bigg\{ (-32m_0+16) R_4 
					+ (16m_0^2-16m_0) R_3 R_1 
					+ (4m_0^2+8) R_2^2 	\nonumber \\
		&&\qquad + (2m_0^4+10m_0^2) R_2R_1^2 
					+   m_0^4    R_1^4\bigg\} .
\end{eqnarray*}

\underline{\bf(Step 5)}
To find $\E\left( T_1T_2T_3T_4 \right)$, we consider
\begin{itemize}
\item 
Case 1: $i=i'=j=j'$. Then
\begin{eqnarray*}
	\mathcal{E}_{ii'jj'}^{1234} 
		&=& \E\left\{ 
			G_1^2 
			\left( \sum_{\ell=1}^2 C_{2\ell}G_{\ell} \right)^2
			\left( \sum_{\ell=1}^3 C_{3\ell}G_{\ell} \right)^2
			\left( \sum_{\ell=1}^4 C_{4\ell}G_{\ell} \right)^2
			 \right\} \\
		&=& \frac{m_0^9 - m_0^8 + 12m_0^7 - 44m_0^6 + 92 m_0^5 - 60 m_0^4 + 72 m_0^2 + 144 m_0 + 72}{m_0^8(m_0-1)}.
\end{eqnarray*}
\item 
Case 2: $i=i'=j\neq j'$.  Then
\begin{eqnarray*}
	\mathcal{E}_{ii'jj'}^{1234} 
		= \E(G^2)
			\E\left\{ G_1^2 
			\left( \sum_{\ell=1}^2 C_{2\ell}G_{\ell} \right)^2
			\left( \sum_{\ell=1}^3 C_{3\ell}G_{\ell} \right)^2 \right\} 
		= \frac{m_0^3+ 6m_0-8}{m_0^3}.
\end{eqnarray*}
\item 
Case 3: $i=i'\neq j= j'$. Then
$$
	\mathcal{E}_{ii'jj'}^{1234} 
		= \E^2 \left\{ G_1^2 \left( C_{21}G_1 + C_{22}G_2 \right)^2 \right\} 
		=  {\left(m_0^2+2\right)^2}/{m_0^4}.
$$
\item 
Case 4: $i=i'$ but none of $i,j,j'$ are equal. Then
$$
	\mathcal{E}_{ii'jj'}^{1233} 
		= \E^2( G^2 )
				\E\left\{ G_1^2 \left( C_{21}G_1 + C_{22}G_2 \right)^2 \right\}
		= (m_0^2+2)/{m_0^2}.
$$
\item 
Case 5: none of $i,i',j,j'$ are equal. Then
$
	\mathcal{E}_{ii'jj'}^{1112} 
		=  \E^4(  G^2)
		= 1.
$
\end{itemize}
Applying Lemma \ref{chp2-lem:splitFourFoldedSum}, we have 
\begin{align*}
	&\E\left( T_{1} T_{2} T_3 T_4\right)\\\
		&\quad\rightarrow \sum_{i=1}^k\sum_{i'=1}^k\sum_{j=1}^k\sum_{j'=1}^k
			r_i r_{i'} r_j r_{j'}
				\mathcal{E}_{ii'jj'}^{1234} \\
		&\quad= \frac{1}{m_0^8(m_0-1)}
			\bigg\{  4(m_0^9-m_0^8+6m_0^7-14m_0^6+8m_0^5) S_{31} \nonumber\\
		&\qquad\qquad+ 3(m_0^9-m_0^8+4m_0^7-4m_0^6+4m_0^5-4m_0^4) S_{22} \nonumber \\
		&\qquad\qquad	+ 6(m_0^9 - m_0^8 + 2m_0^7 - 2m_0^6) S_{211} \nonumber \\
		&\qquad\qquad	+ (m_0^9-m_0^8)S_{1111} \nonumber \\
		&\qquad\qquad	+ (m_0^9 - m_0^8 + 12m_0^7 - 44m_0^6 + 92 m_0^5 - 60 m_0^4 + 72 m_0^2 + 144 m_0 + 72)S_4 		
					\bigg\} \nonumber\\
		&\quad= \frac{1}{m_0^8(m_0-1)}
			\bigg\{ (48m_0^5-48m_0^4+72m_0^2+144m_0+72) R_4 \nonumber \\
		&\qquad\qquad + (-32m_0^6+32m_0^5) R_3 R_1 \nonumber \\
		&\qquad\qquad + (12m_0^5-12m_0^4) R_2^2 
					+ (12m_0^7-12m_0^6) R_2R_1^2 
					+ (m_0^9-m_0^8)    R_1^4\bigg\} .
\end{align*}

Finally, we check that the above results are also valid for $3\leq m\leq 4$. Thus, we complete the proof of the lemma. 
\end{proof}

\end{document}